\definecolor{penndarkestblue}{cmyk}{1,0.74,0,0.77}
\definecolor{penndarkerblue}{cmyk}{1,0.74,0,0.70}
\definecolor{pennblue}{cmyk}{0.99,0.66,0,0.57} 
\definecolor{pennlighterblue}{cmyk}{0.98,0.44,0,0.35}
\definecolor{pennlightestblue}{cmyk}{0.38,0.17,0,0.17} 
\definecolor{penndarkestred}{cmyk}{0,1,0.89,0.66}
\definecolor{penndarkerred}{cmyk}{0,1,0.88,0.55}
\definecolor{pennred}{cmyk}{0,1,0.83,0.42} 
\definecolor{pennlighterred}{cmyk}{0,1,0.6,0.24}
\definecolor{pennlightestred}{cmyk}{0,0.43,0.26,0.12} 
\definecolor{penndarkestgreen}{cmyk}{1,0,1,0.68}
\definecolor{penndarkergreen}{cmyk}{1,0,1,0.57}
\definecolor{penngreen}{cmyk}{1,0,1,0.44} 
\definecolor{pennlightergreen}{cmyk}{1,0,1,0.25}
\definecolor{pennlightestgreen}{cmyk}{0.43,0,0.43,0.13}
\definecolor{penndarkestorange}{cmyk}{0,0.65,1,0.49}
\definecolor{penndarkerorange}{cmyk}{0,0.65,1,0.33}
\definecolor{pennorange}{cmyk}{0,0.54,1,0.24} 
\definecolor{pennlighterorange}{cmyk}{0,0.32,1,0.13}
\definecolor{pennlightestorange}{cmyk}{0,0.15,0.46,0.06}
\definecolor{penndarkestpurple}{cmyk}{0,1,0.11,0.86}
\definecolor{penndarkerpurple}{cmyk}{0,1,0.13,0.82}
\definecolor{pennpurple}{cmyk}{0,1,0.11,0.71} 
\definecolor{pennlighterpurple}{cmyk}{0,1,0.05,0.46}
\definecolor{pennlightestpurple}{cmyk}{0,0.35,0.02,0.23}
\definecolor{pennyellow}{cmyk}{0,0.20,1,0.05} 
\definecolor{pennlightgray1}{cmyk}{0,0,0,0.05}
\definecolor{pennlightgray3}{cmyk}{0.01,0.01,0,0.18}
\definecolor{pennmediumgray1}{cmyk}{0.04,0.03,0,0.31}
\definecolor{pennmediumgray4}{cmyk}{0.08,0.06,0,0.54}
\definecolor{penndarkgray2}{cmyk}{0.09,0.07,0,0.71}
\definecolor{penndarkgray4}{cmyk}{0.1,0.1,0,0.92}
\def\Tr{\mathsf{T}}
\def\Hr{\mathsf{H}}
\renewcommand{\blue}{\color{black}}
\renewcommand{\red}{\color{black}}
\newcommand{\vertiii}[1]{{\left\vert\kern-0.25ex\left\vert\kern-0.25ex\left\vert #1 \right\vert\kern-0.25ex\right\vert\kern-0.25ex\right\vert}}
\newtheorem{lemma}{\hspace{0pt}\bf Lemma}
\newtheorem{proposition}{\hspace{0pt}\bf Proposition}
\newtheorem{theorem}{\hspace{0pt}\bf Theorem}
\newtheorem{corollary}{\hspace{0pt}\bf Corollary}
\newtheorem{remark}{\hspace{0pt}\bf Remark}
\newtheorem{definition}{\hspace{0pt}\bf Definition}
\providecommand{\customgenericname}{}
\newcommand{\newcustomtheorem}[2]{%
  \newenvironment{#1}[1]
  {%
   \renewcommand\customgenericname{#2}%
   \renewcommand\theinnercustomgeneric{##1}%
   \innercustomgeneric
  }
  {\endinnercustomgeneric}
}
\begin{document}

\title{Graphon Signal Processing}

\author{Luana~Ruiz, Luiz~F.~O.~Chamon~
        and~Alejandro~Ribeiro
\thanks{This work in this paper was supported by NSF CCF 1717120, ARO W911NF1710438, ARL DCIST CRA W911NF-17-2-0181, ISTC-WAS and Intel DevCloud. Preliminary results have been accepted for publication at the ICASSP20 conference \cite{ruiz2019graphon}. L. Ruiz, L. F. O. Chamon and A. Ribeiro are with the Dept. of Electrical and Systems Eng., Univ. of Pennsylvania.  
}
}

\markboth{IEEE TRANSACTIONS ON SIGNAL PROCESSING (ACCEPTED)}%
{Graphon Filters}

\maketitle

\begin{abstract}

Graphons are infinite-dimensional objects that represent the limit of convergent sequences of graphs as their number of nodes goes to infinity. This paper derives a theory of graphon signal processing centered on the notions of \textit{graphon Fourier transform} and linear shift invariant \textit{graphon filters}, the graphon counterparts of the graph Fourier transform and graph filters. It is shown that for convergent sequences of graphs and associated graph signals: (i) the graph Fourier transform converges to the graphon Fourier transform when the graphon signal is bandlimited; (ii) the spectral and vertex responses of graph filters converge to the spectral and vertex responses of graphon filters with the same coefficients. These theorems imply that for graphs that belong to certain families, i.e., that are part of sequences that converge to a certain graphon, graph Fourier analysis and graph filter design have well defined limits. In turn, these facts extend applicability of graph signal processing to graphs with large number of nodes --- since signal processing pipelines designed for limit graphons can be applied to finite graphs --- and to dynamic graphs --- since we can relate the result of SP pipelines designed for different graphs from the same convergent graph sequence.
\end{abstract}

\begin{IEEEkeywords}
graphons, convergent graph sequences, graph filters, graph Fourier transform, graph signal processing 
\end{IEEEkeywords}

\IEEEpeerreviewmaketitle


\section{Introduction} \label{sec:intro}



Graph signal processing (GSP) provides an array of tools to process signals supported on graphs \cite{shuman13-mag, sandryhaila14-freq, ortega2018graph} but suffers from limitations in the case of graphs with {\it large} number of nodes or {\it dynamic} topologies. In these cases, just the acquisition of the graph may be challenging, which hinders the use of GSP tools such as filtering \cite{segarra17-linear, sandryhaila2013discrete} and graph neural network design \cite{gama18-gnnarchit} because they take the graph structure as a given. Other GSP tools like sampling \cite{rui2016dimensionality, chen2015discrete, marques2015sampling} deal precisely with acquiring compact representations of graph signals. However, the {\it design} of sampling sets \cite{chamon2017greedy} requires not only access to the graph but the computation of an eigendecomposition that can be very costly for large matrices \cite{morgan1991computing}, \blue{\cite[Chapter 1.1]{paige1971computation}}. Challenges are most acute when the graph is both large {\it and} dynamic. In such cases, costly numerical computations must, in principle, be repeated as the graph changes, because the effect of graph perturbations is understood only in the case of relabelings \cite{ruiz19-inv} or small perturbations that induce small changes on the original eigenspace \cite{gama2019stability}.

Yet, large graphs can often be identified as being similar to each other in the sense that they share structural properties. For instance, Figs. \ref{fig:experiments}\subref{small_graph1}-\subref{small_graph2} show two instances of a random graph with $20$ nodes, and Fig. \ref{large_graph} a random graph with $50$ nodes. These graphs look similar and one can therefore foresee that analyzing signals supported on either of them should yield similar results. If this were the case, it would mitigate the challenge of dynamic variation since we could then design a filter for the graph in Fig. \ref{small_graph1} and use it in the graph in Fig. \ref{small_graph2}. Similarly, it would mitigate the challenge of large size because we could design a filter for the graph in Fig. \ref{small_graph1} and use it to process signals supported on the graph in Fig. \ref{large_graph}. 
This paper formalizes this intuition ans shows that this graph interchangeability is possible when the graphs belong to the same ``family'',
where each family is identified by a different \textit{graphon}; see Fig. \ref{er_graphon}.

Graphons can be thought of as the infinite-dimensional counterparts of graphs, i.e., as graphs with an uncountable number of nodes. Appearing in many disciplines, they have been used to estimate random graph models in mathematics and statistics\cite{wolfe2013nonparametric, airoldi2013stochastic, de2017adaptive, xu2017rates, gao2015rate, xu2014edge}; stabilize large-scale networks of linear systems in controls \cite{gao2018graphon}; and perform graph partitioning \cite{rohe2011spectral, diao2016model}, node centrality \cite{avella2018centrality} and network game equilibria computations \cite{parise2019graphon} in very large networks. 
Graphons have two theoretical interpretations. They can be seen as generative models for families of graphs with weighted or stochastic edges \cite[Chapter 10]{lovasz2012large},
and as the limit objects of convergent sequences of graphs \cite[Chapter 7]{lovasz2012large},\cite{lovasz2006limits}. In practice, these two interpretations suggest that graphons identify families of networks that are similar in the sense that the density of certain ``motifs'' is preserved. This motivates the study of signal processing on graphons as a way to enable the analysis of signals supported on large and/or dynamic graphs.

In this work, we thus introduce graphon signal processing (WSP), a framework to synthesize, analyze and process signals on graphons. More specifically, we put forward three novel technical contributions:
(i) we define graphon signals and their graphon Fourier transforms (Def. \ref{defn:wft}), which can be seen as the continuous counterparts of graph signals and of their graph Fourier transforms; (ii) we show, by building upon the results of \cite{morency2017signal}, that
the graph Fourier transform converges to the graphon Fourier transform (Thm. \ref{thm:wft}) when the graphon signal is bandlimited (Def. \ref{defn:bandlimited}); (iii) we define linear-shift-invariant (LSI) graphon filters (Def. \ref{defn:linear}), and prove that LSI graph filters converge to LSI graphon filters in both the spectral (Thm. \ref{thm:transfer_fcn}) and vertex domains (Thms. \ref{thm:non_disc}--\ref{thm:non_disc2}). 

%
\begin{figure*}[t]
	\centering
	\begin{subfigure}{0.23\textwidth}
		\centering
		\includegraphics[height=2.8cm]{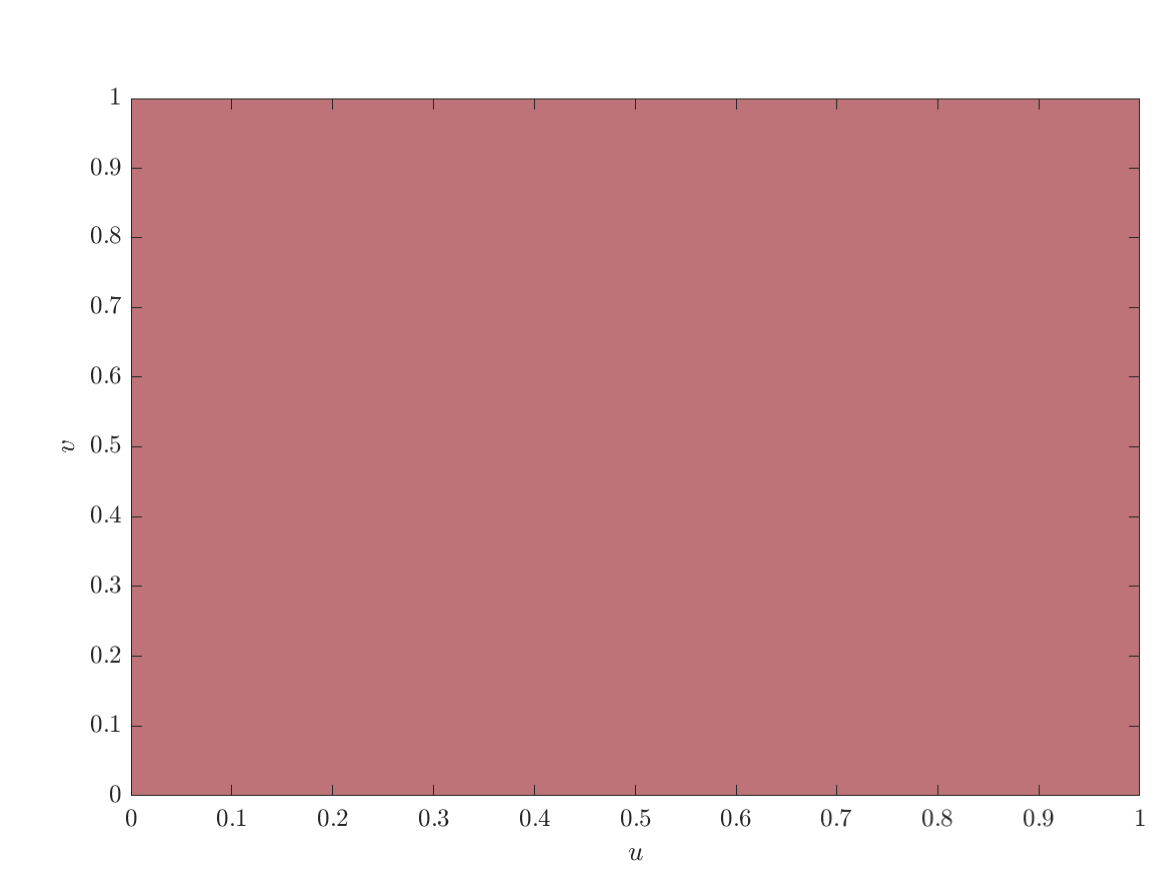}
		\caption{Graphon} 
		\label{er_graphon}
	\end{subfigure}
	\hspace{0.2cm}
	\begin{subfigure}{0.23\textwidth}
		\centering
		\includegraphics[height=2.8cm]{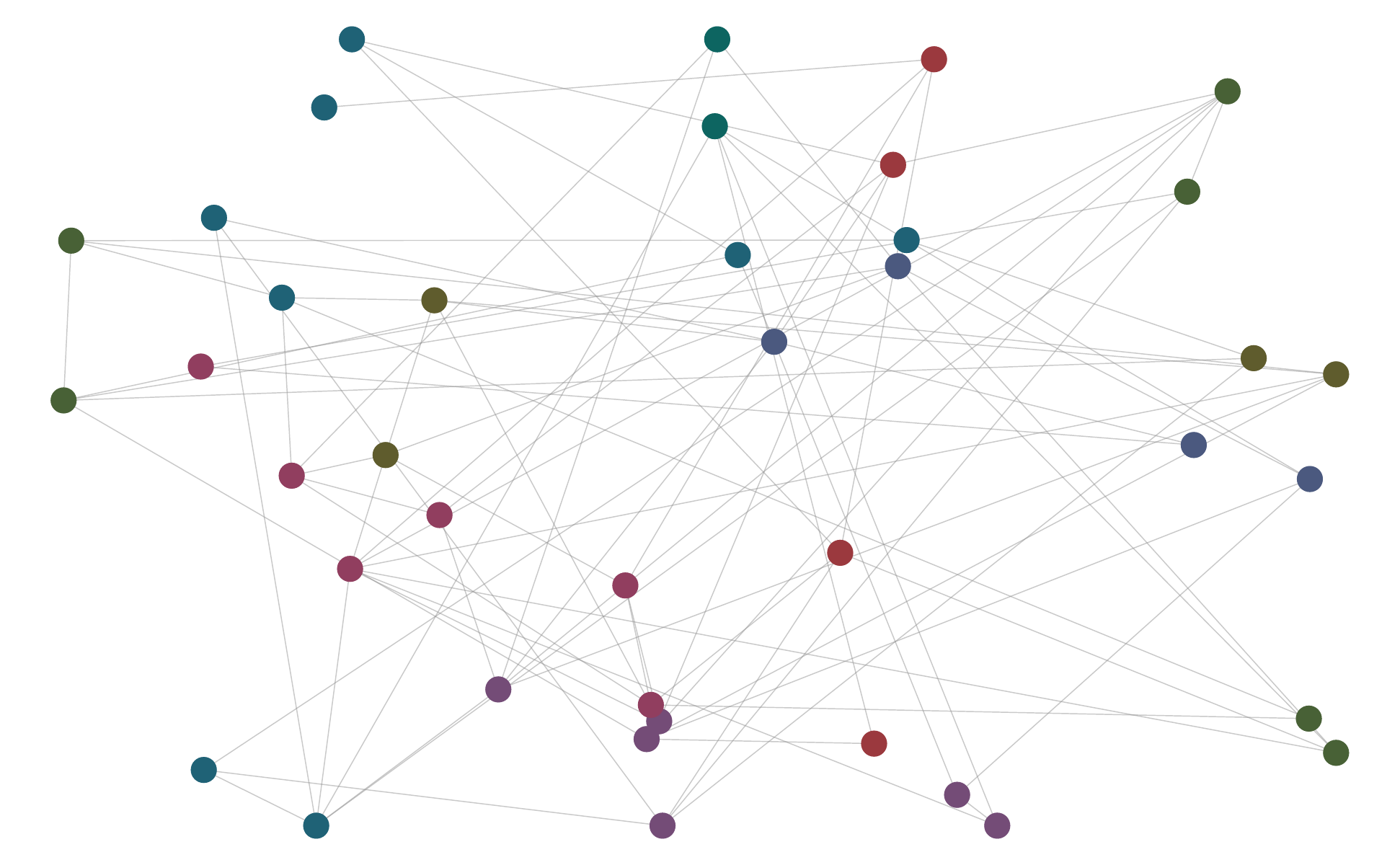}
		\caption{$n=20$} 
		\label{small_graph1}
	\end{subfigure}
	\hspace{0.2cm}
		\begin{subfigure}{0.23\textwidth}
		\centering
		\includegraphics[height=2.8cm]{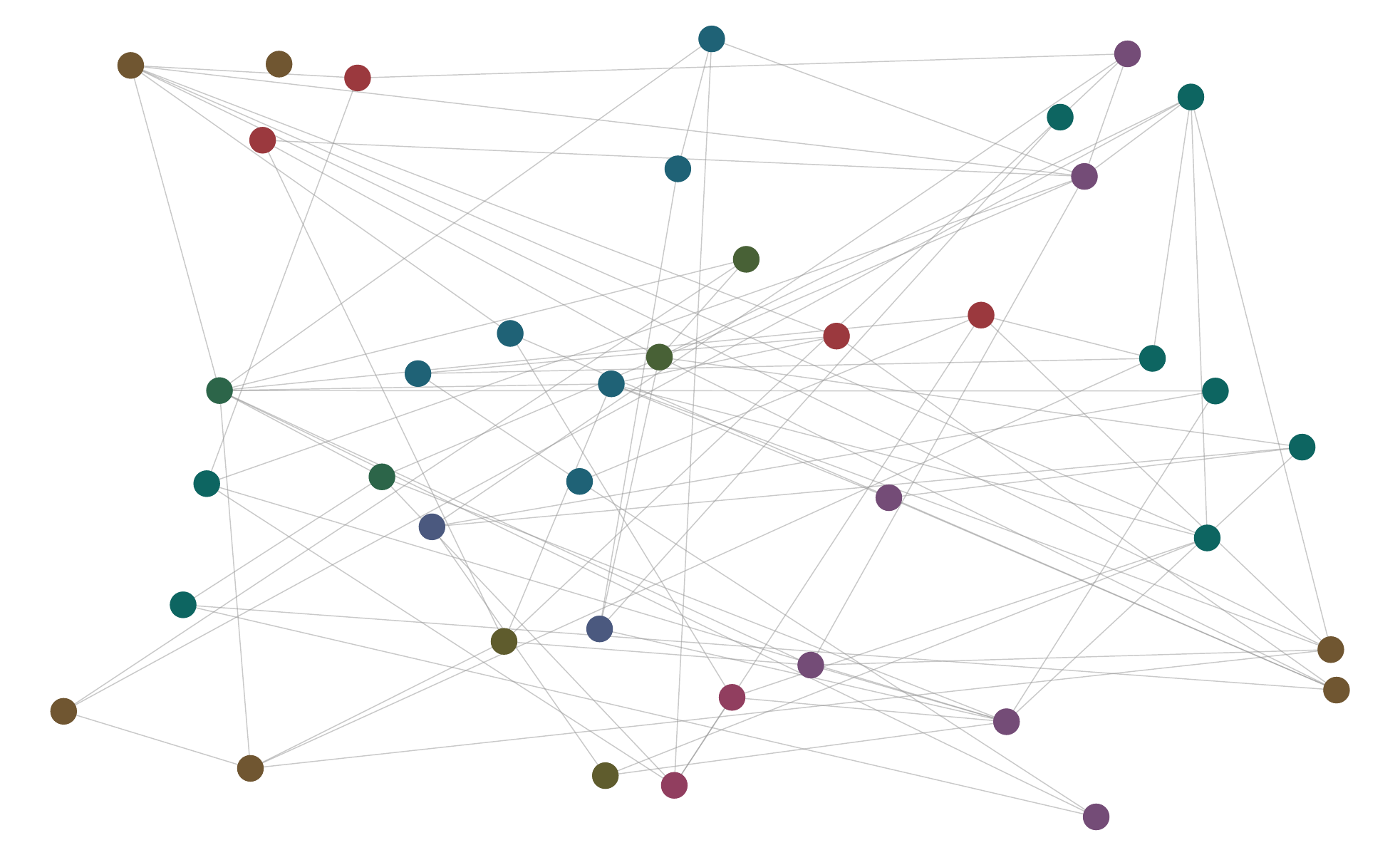}
		\caption{$n=20$} 
		\label{small_graph2}
	\end{subfigure}
	\hspace{0.2cm}
	\begin{subfigure}{0.23\textwidth}
		\centering
		\includegraphics[height=2.8cm]{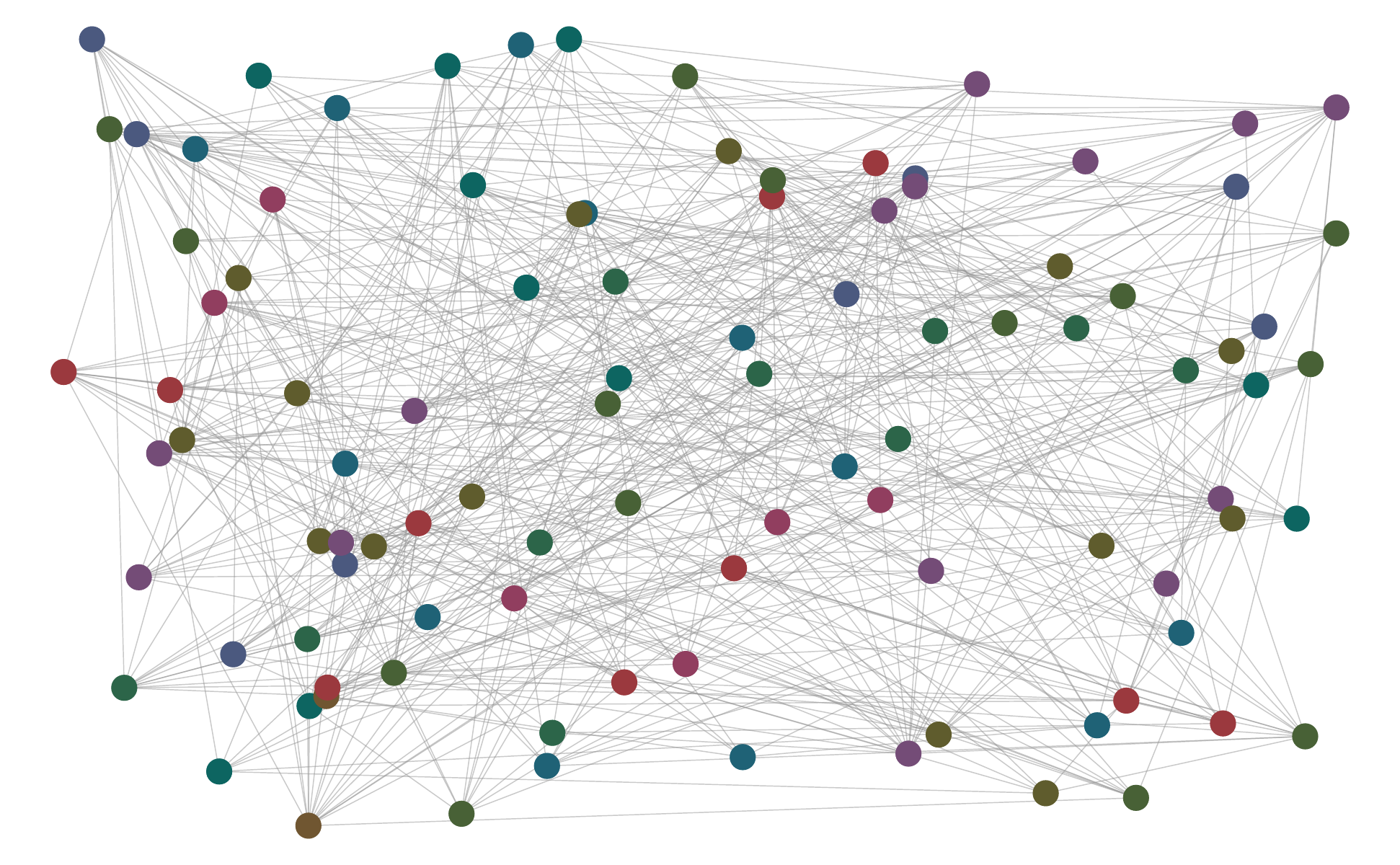}
		\caption{$n=50$} 
		\label{large_graph}
	\end{subfigure}
	\caption{Erd\"os–R\'enyi (constant) graphon with probability $p=0.2$ and three $n$-node graphs sampled from it, illustrating scenarios where WSP can be useful. We can analyze signals and design systems on a graphon, to apply them on a graph sampled from it (\subref{er_graphon} $\rightarrow$ \subref{small_graph1}); on a graph, to apply on another graph of same size (\subref{small_graph1} $\rightarrow$ \subref{small_graph2}); and on a small graph, to apply on a larger graph (\subref{small_graph2} $\rightarrow$ \subref{large_graph}).}
	\label{fig:experiments}
\end{figure*}

\blue{
Thms. \ref{thm:wft}--\ref{thm:non_disc2} are especially important because they provide theoretical justification for transferring signal analysis methods and information processing architectures across graphs arising from (or leading to) the same graphon. Indeed, the ability of GSP systems to be transferred between graphs, also known as \textit{transferability}, has been demonstrated empirically in network problems in wireless \cite{eisen2019optimal} and robotics \cite{tolstaya2020learning}. 
We identify three transferability scenarios for which the results in this work provide a theoretical foundation: \textbf{(S1)} graphon to graph (Figs. \ref{er_graphon}$\to$\ref{small_graph1}); \textbf{(S2)} graph to graph of same size (Figs. \ref{small_graph1}$\to$\ref{small_graph2}); and  \textbf{(S3)} graph to larger graph (Figs. \ref{small_graph2}$\to$\ref{large_graph}).

Attesting to the practical value of the WSP framework, each of these scenarios is illustrated in a numerical experiment in Sec. \ref{sec:sims}.
For instance, to showcase \textbf{(S1)}, we compare filter responses \textit{on a graphon and on a graph sampled from this graphon}.
The filter is a simple diffusion filter applied to a Gaussian Markov random field (GMRF). 
Interestingly, this example makes for a parallel with classical signal processing, where even if the application is digital, it is sometimes convenient to design and study filters in continuous time.
For \textbf{(S2)}, we perform signal analysis in \textit{different graphs of same size drawn from a common graphon}. 
Two $n$-node air pollution sensor networks are considered. As $n$ grows, we compare the Fourier transforms of the same air pollution signal on top of them. This illustrates the behavior of SP tools when applied to networks for which we only have access to an approximated or perturbed version of the graph.
Finally, we illustrate \textbf{(S3)} by \textit{transferring filters designed on small graphs to large graphs} in a movie recommendation example. 
Using real data from the MovieLens dataset, we calculate the optimal coefficients of a rating prediction filter on networks containing only a subset of all users, and then use it to predict movie ratings on the full user network. The goal of this experiment is to show that graph filters are transferable at scale, which significantly simplifies signal processing on large graphs.

The rest of this paper is organized as follows. Preliminary definitions are recalled in Sec. \ref{sec:graphs_graphons}. Sec. \ref{sec:gft} introduces the WSP framework and convergence results are stated in Sec. \ref{sec:filters}. Sec. \ref{sec:sims} details the numerical experiments described above (\textbf{S1}--\textbf{S3}). Proofs are deferred to the appendices.
Unless otherwise specified, $\|\cdot\|$ refers to the $L^2$ norm. When referring to the operator norm induced by the $L^2$ norm (spectral norm), we use the notation $\vertiii{\cdot}$.



%
\begin{figure*}[t]
	\centering
	\begin{subfigure}{0.3\textwidth}
		\centering
		\includegraphics[height=3.5cm]{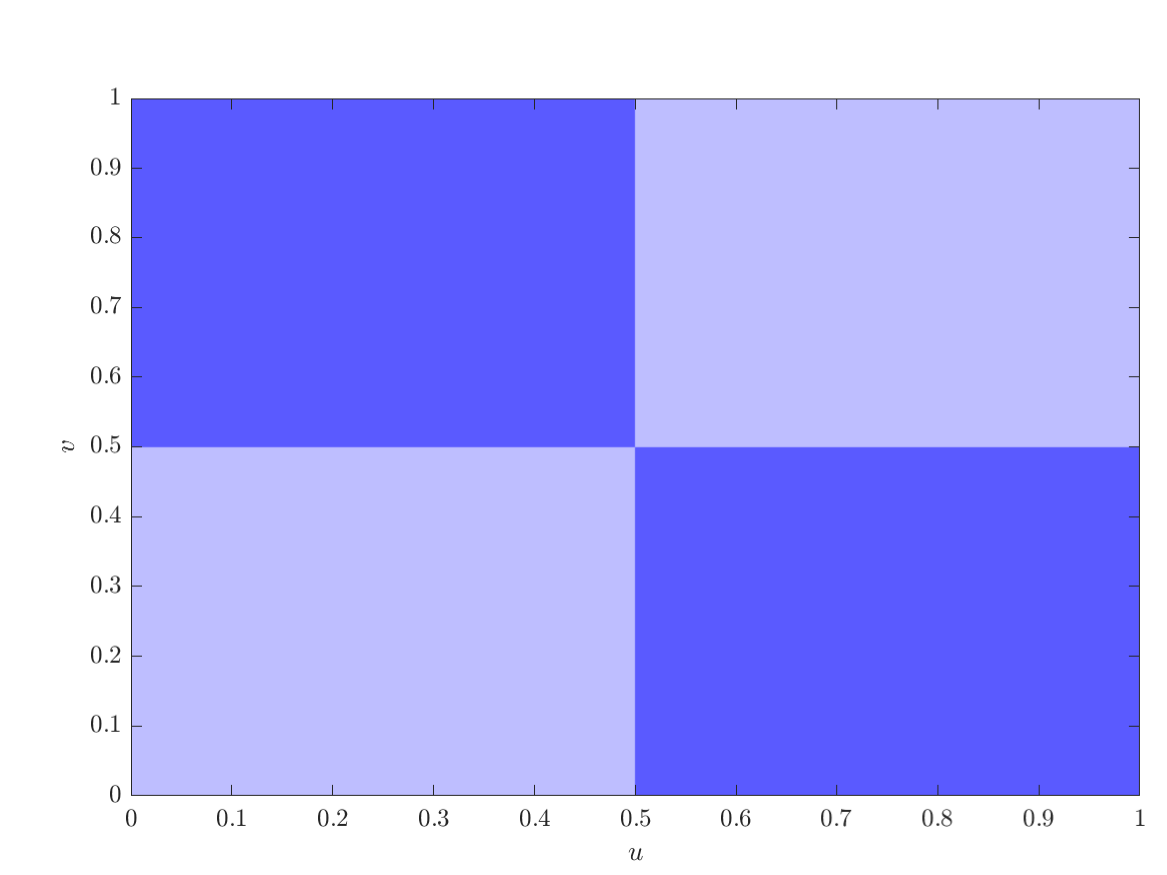}
		\centering
		\includegraphics[height=3.2cm]{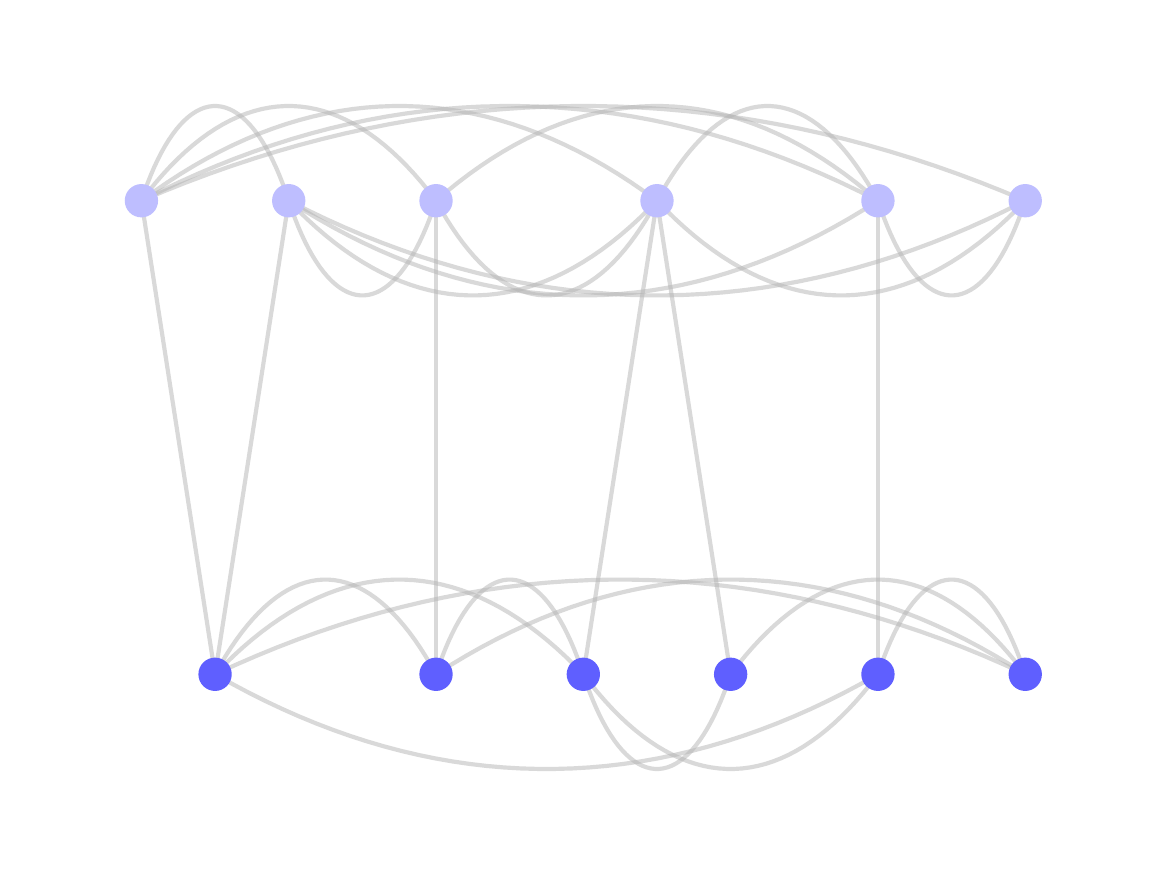}
		\caption{SBM with balanced communities} 
		\label{sbm1}
	\end{subfigure}
	\hspace{0.2cm}
	\begin{subfigure}{0.3\textwidth}
		\centering
		\includegraphics[height=3.5cm]{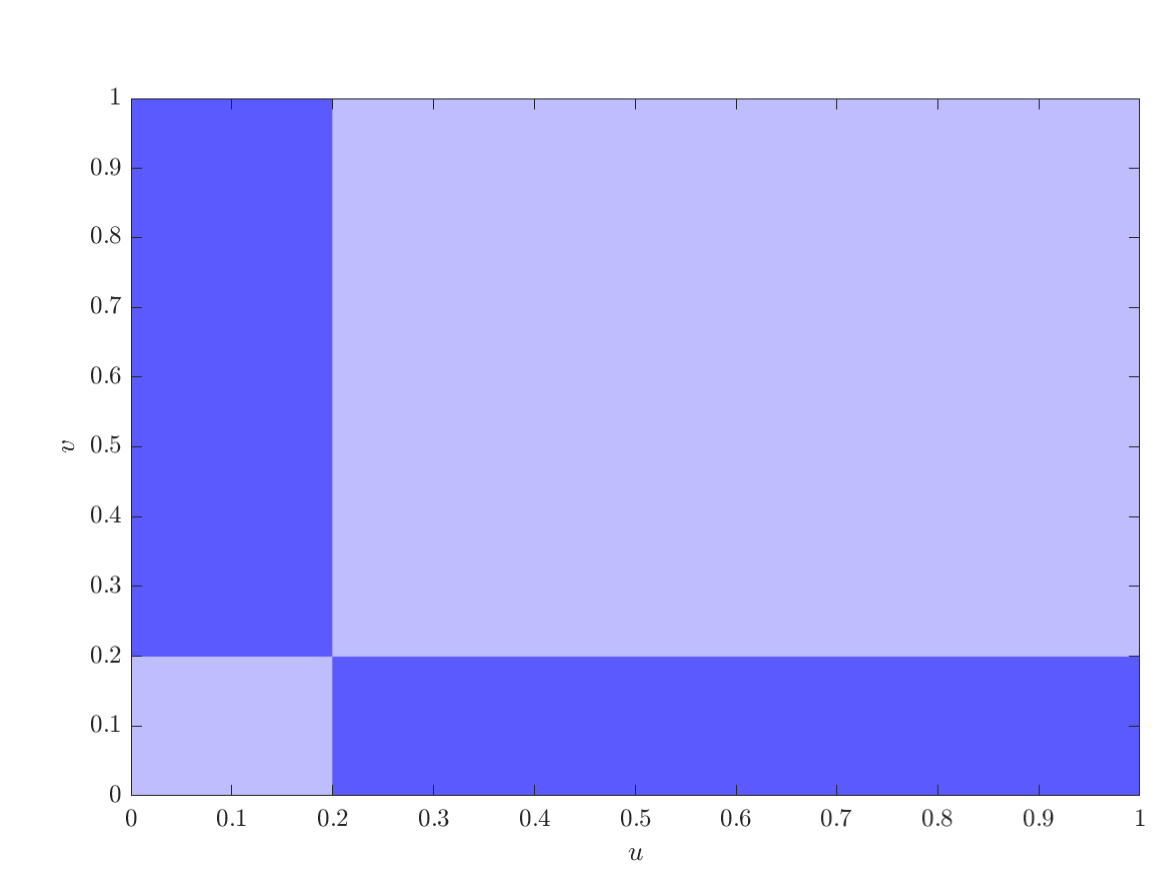}
		\centering
		\includegraphics[height=3.2cm]{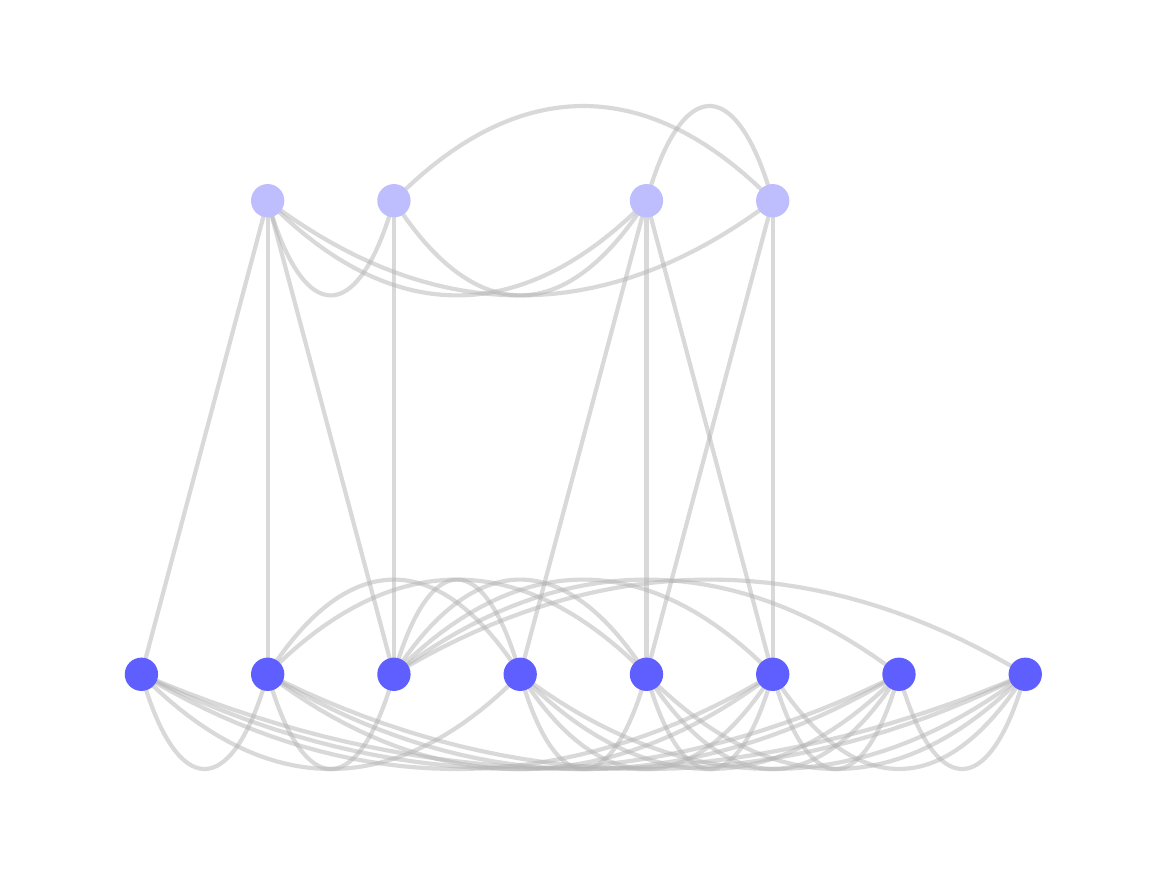} 
		\caption{SBM with imbalanced communities} 
		\label{sbm2}
	\end{subfigure}\hspace{0.2cm}
	\begin{subfigure}{0.3\textwidth}
		\centering
		\includegraphics[height=3.5cm]{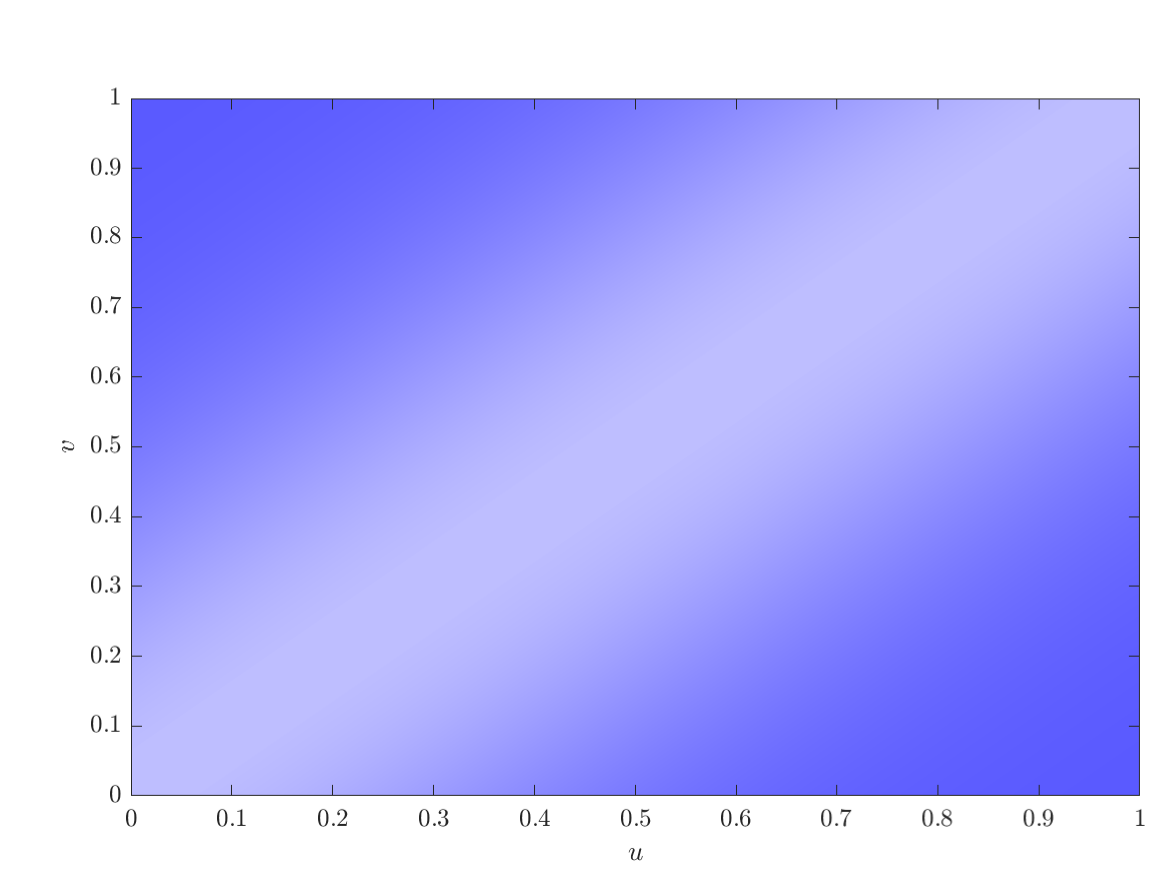}
		\centering
		\includegraphics[height=3.2cm]{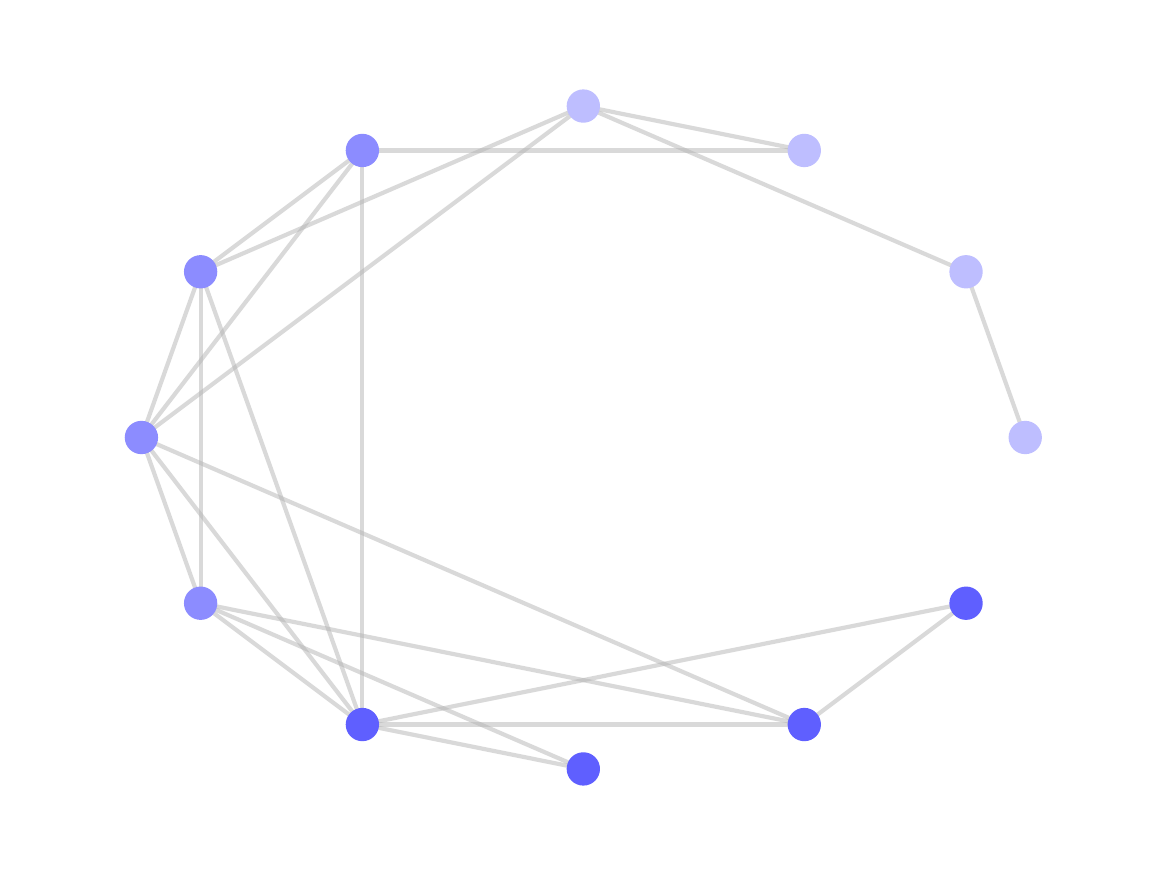} 
		\caption{Exponential} 
		\label{exp}
	\end{subfigure}
	\caption{Graphons and $12$-node $\bbW$-random graphs. Figs. \subref{sbm1} and \subref{sbm2} show SBM graphons and graphs with 2 communities and $p_{c_i c_i} = 0.8, p_{c_i c_j} = 0.2$. Fig. \subref{exp} shows an exponential graphon $\bbW(u,v) = \exp (-2.3(u-v)^2)$ and the corresponding graph.}
	\label{fig:graphon_examples}
\end{figure*}	

%
\section{Graphs and Graphons} \label{sec:graphs_graphons} 

Graphs are triplets $\bbG = (\ccalV, \ccalE, \ccalW)$ where $\ccalV$ is a set of $n$ nodes, $\ccalE \subseteq \ccalV \times \ccalV$ is a set of edges  and $\ccalW: \ccalE \to \reals$ is a weight function assigning weights $\ccalW(i,j) = w_{ij}$ to edges $(i,j)$ in $\ccalE$. The graph $\bbG$ can be equivalently represented by a number of matrix representations, which in the context of graph signal processing (GSP) are generically termed graph shift operators (GSOs). In this paper, we fix the GSO to be the \textit{adjacency matrix} $\bbS$
satisfying $S_{ij} = w_{ij}$ if and only if $(i,j)\in\ccalE$. We restrict our attention to undirected graphs with edge weights in the $[0,1]$ interval, so that $\bbS=\bbS^\Tr$ and $\bbS \in [0,1]^{n \times n}$. We will also use the notations $\bbG = (\ccalV, \ccalE, \ccalW)$ and $\bbG = (\ccalV, \ccalE, \bbS)$ interchangeably.

A graphon is a bounded symmetric measurable function 
\begin{align}\label{eqn_graphon}
   \bbW: [0,1]^2 \to [0,1]
\end{align}
which represents a graph with an uncountable number of nodes \cite[Chapter 7]{lovasz2012large},\cite{wolfe2013nonparametric}. 
By construction, graphons can also be interpreted as \blue{generative models for the edges of weighted or stochastic graphs}. Namely, by associating sample points $u_i \in [0,1]$ to nodes $i \in \{1,2,\ldots,n\}$, we can construct $n$-node graphs $\bbG_n$ where the edges are defined either by assigning edge weight $\bbW(u_i,u_j)$ to $(i,j)$ or by connecting $i$ and $j$ with probability 
\begin{align}\label{eqn_sampled_graph}
   p_{ij} = \bbW(u_i,u_j).
\end{align}
In the latter case, the $\bbG_n$ are unweighted. If, additionally, the $u_i$ are sampled independently and uniformly at random, these stochastic graphs are called \textit{$\bbW$-random graphs}. Three examples of graphons and of $\bbW$-random graphs sampled from them are shown in Fig. \ref{fig:graphon_examples}. The one in Fig. \ref{sbm1} is a stochastic block model (SBM) graphon with two balanced communities where the intra-community edge probability is $0.8$ and the inter-community edge probability is $0.2$. The one in Fig. \ref{sbm2} is also a SBM graphon with the same inter- and intra-community probabilities, but with unbalanced communities. The one in Fig. \ref{exp} is an exponential graphon, which can be used to generate graphs where nodes are connected if their labels $u_i$ and $u_j$ are close.

\subsection{Convergent sequences of graphs}

A second and perhaps more interesting interpretation of graphons is as the limit objects of convergent graph sequences. A sequence of graphs $\{\bbG_n\}$ is said to converge if and only if the density of homomorphisms between any finite, undirected and unweighted graph $\bbF = (\ccalV', \ccalE')$, which we call a \textit{motif}, and the $\bbG_n$ converges \cite{lovasz2006limits}. Homomorphisms between a motif $\bbF$ and an arbitrary graph $\bbG = (\ccalV, \ccalE, \bbS)$ are adjacency preserving maps from $\ccalV'$ to $\ccalV$, i.e., a map $\beta: \ccalV' \to \ccalV$ is a homomorphism if, for every $(i,j) \in \ccalE'$, $(\beta(i),\beta(j)) \in \ccalE$. The graph $\bbF$ can thus be interpreted as a graph pattern that we want to ``identify'' in $\bbG$. 

A motif $\bbF$ will typically occur in multiple locations of the graph $\bbG$. Thus, we can count the number of homomorphisms between $\bbF$ and $\bbG$, which we denote $\mbox{hom}(\bbF, \bbG)$. Since there are a total of $|\ccalV|^{|\ccalV'|}$ possible maps between the vertices of $\bbF$ and $\bbG$ but only a fraction of them are homomorphisms, we further define the \textit{density of homomorphisms} from $\bbF$ to $\bbG$ as
\begin{equation} \label{eqn:hom_graph}
t(\bbF,\bbG) = \dfrac{\mbox{hom}(\bbF,\bbG)}{|\ccalV|^{|\ccalV'|}} = \dfrac{\sum_\beta \prod_{(i,j) \in \ccalE'} [\bbS]_{\beta(i) \beta(j)}}{|\ccalV|^{|\ccalV'|}}\ .
\end{equation}
This is easiest to understand when $\bbG$ is unweighted, in which case $t(\bbF,\bbG)$ is simply the fraction of the total number of ways in which the motif $\bbF$ can be mapped into $\bbG$.

The concept of homomorphism densities can also be generalized to graphons. We define the density of homomorphisms between the motif $\bbF$ and the graphon $\bbW$ as 
\begin{equation} \label{eqn:hom_graphon}
t(\bbF,\bbW) = \int_{[0,1]^{|\ccalV'|}} \prod_{(i,j) \in \ccalE'} \bbW(u_i,u_j) \prod_{i \in \ccalV'} du_i\ .
\end{equation}
This can be interpreted as the probability of sampling the motif $\bbF$ from the graphon $\bbW$. With these definitions in hand, a sequence of undirected graphs $\{\bbG_n\}$ is said to converge to the graphon $\bbW$ if, for all finite simple graphs $\bbF$,
\begin{equation}\label{eqn_convergence}
   \lim_{n \to \infty} t(\bbF,\bbG_n) = t(\bbF, \bbW)\ .
\end{equation}
In this case, we refer to $\bbW$ as the \textit{limit graphon} of the sequence. This form of convergence is called ``convergence in the homomorphism density sense''. An example of convergent graph sequence that is easy to visualize is that of a sequence of $\bbW$-random graphs. The sequence of graphs $\{\bbG_n\}$ generated by sampling $\{u_i\}_{i=1}^n$ uniformly at random as $n \to \infty$ can be shown to converge in the homomorphism density sense with probability one \blue{\cite[Example 11.6, Lemma 11.8]{lovasz2012large}}. 

To conclude, we point out that, while the two interpretations of a graphon ---  as a generative model for graph families and as the limit object of graph sequences --- are theoretical, their practical value lies in that they can be used to identify sets of graphs with large number of nodes and similar structure. This simplifies the study of the properties of large graphs. 

\subsection{Convergence in cut norm}

Similarly to how graphs can be obtained by sampling or evaluating a graphon, graphons can be defined, or \textit{induced}, by graphs. 
Every undirected graph $\bbG = (\ccalV, \ccalE, \bbS)$ with $|\ccalV| = n$ and $\bbS \in [0,1]^{n \times n}$
admits an induced graphon representation $\bbW_\bbG$. This graphon is obtained in two steps. First, we construct a \blue{regular} partition $I_1 \cup \ldots \cup I_n$ of $[0,1]$\blue{, i.e., the partition given by $I_j = [(j-1)/n,j/n)$ for $1 \leq j \leq n-1 $ and $I_n = [(n-1)/n,1]$. Then, the induced graphon $\bbW_\bbG$ is defined as \cite[Chapter 7.1]{lovasz2012large},\cite[Sec. 5]{avella2018centrality}}
\begin{equation}\label{eqn_induced_graphon}
   \bbW_\bbG(u,v) = \sum_{j=1}^n \sum_{k=1}^n [\bbS]_{jk} \times \mbI(u \in I_j)\mbI(v \in I_k)\ .
\end{equation}

The concept of induced graphon is useful to define a second mode of convergence for graph sequences --- convergence in cut norm. The cut norm of a graphon $\bbW$ is defined as \blue{\cite[eq. (8.13)]{lovasz2012large}}
\begin{equation} \label{eqn:cut_norm_def}
\|\bbW\|_\square = \sup_{S,T \subseteq [0,1]} \bigg| \int_{S \times T} \bbW(u,v)du dv \bigg| 
\end{equation}
i.e., it is equal to the size of its maximum cut. The following lemma, adapted from \cite[Theorem 11.57]{lovasz2012large}, states that if a sequence of graphs $\{\bbG_n\}$ converges to $\bbW$ in the homomorphism density sense, then it also converges to $\bbW$ in the cut norm.
\begin{lemma}[Cut norm convergence] \label{cut_norm_conv}
If $\{\bbG_n\}\to\bbW$ in the homomorphism density sense, then there exists a sequence of permutations $\{\pi_n\}$ such that
\begin{equation} \label{eqn:cut_norm_conv}
\|\bbW_{\pi_n(\bbG_n)} - \bbW\|_\square \to 0
\end{equation}
where $\bbW_{\pi_n(\bbG_n)}$ is the graphon induced by the graph $\pi_n(\bbG_n)$.
\end{lemma}

Therefore, for every convergent sequence $\{\bbG_n\}$ there exists a non-empty set of permutation sequences $\{\pi_n\}$ for which the cut norm of the induced graphons $\bbW_{\pi_n(\bbG_n)}$ converges as in \eqref{eqn:cut_norm_conv}. This is formalized in Def. \ref{def:permutation_sequence}.

\begin{definition}[Set of admissible permutations]\label{def:permutation_sequence}
Given a sequence $\{\bbG_n\}$ converging to $\bbW$ in the homomorphism density sense, the family of convergent permutation sequences $\ccalP$ associated with $\{\bbG_n\}$ is defined as
\begin{equation*}
\ccalP = \bigg\{\{\pi_n\} \ |\ \|\bbW_{\pi_n(\bbG_n)} - \bbW\|_\square \to 0\bigg\}\ .
\end{equation*}
\end{definition}

The set $\ccalP$ will be especially important in the convergence analyses of Secs. \ref{sec:gft}--\ref{sec:filters}. In particular, in the definition of convergent sequences of graph signals (Def. \ref{defn:graph_signal_conv}), we will use permutation sequences $\{\pi_n\} \in \ccalP$ to ``organize'' the signals on the graphs of a convergent sequence so that the labeling of the signals matches the node labeling for which the graphs converge. 


\subsection{Graph Signal Processing} \label{sbs:gsp}

GSP deals with signals defined on top of a graph $\bbG$. Formally, a graph signal is a map from the vertex set $\ccalV$ onto the real numbers, which we write as the pair $(\bbG, \bbx)$ and where the $i$th component $x_i$ is the value of the signal at node $i$. The three fundamental concepts of GSP are shift operations, LSI filters and graph Fourier transforms (GFTs). We say that $\bbz$ is the result of shifting $\bbx$ on the graph $\bbS$ if $\bbz=\bbS\bbx$. Shifts can be composed to produce $k$-order shifted signals $\bbS^k\bbx$ and, as in the case of time signals, a weighted sum of shifted signals defines the LSI filter $\bbH(\bbS)$ as the linear map
\begin{equation} \label{eqn:lsi-gf}
   \bby = \bbH(\bbS) \bbx = \sum_{k=0}^{K} h_k \bbS^k \bbx\ .
\end{equation}
In \eqref{eqn:lsi-gf}, the weights $h_k$ are called graph filter taps \cite{segarra17-linear}. The filter $\bbH(\bbS)$ is said to be shift-invariant because, if $\bby = \bbH(\bbS)\bbx$ and we shift the input to $\bbx' = \bbS\bbx$, the output $\bby' = \bbH(\bbS)\bbx'$ is simply the shifted version of $\bby$, $\bby' = \bbS\bby$. The LSI filter $\bbH(\bbS)$ is also sometimes referred to as a graph convolutional filter \cite{ruiz2020gnns}. 

Since $\bbS$ is symmetric, it is diagonalizable as $\bbS = \bbV \bbLam \bbV^{\Hr}$. The matrix $\bbLam$ is the diagonal matrix of eigenvalues and the columns of $\bbV$ are the graph eigenvectors. \blue{Herein, we assume that the eigenvalues are ordered according to their sign and in decreasing order of absolute value, i.e., $ \lambda_1(\bbS) \geq \lambda_2(\bbS) \geq \ldots \geq 0 \geq \ldots \geq \lambda_{-2}(\bbS) \geq \lambda_{-1}(\bbS)$ (e.g., Fig. \ref{fig:eigenvalues_graphon}).} The graph Fourier transform (GFT) of the graph signal $(\bbG,\bbx)$ is then defined as the projection of $\bbx$ onto the eigenvector basis $\bbV$, i.e.,
\begin{equation} \label{eqn:gft}
   \hbx = \mbox{GFT} \{(\bbG, \bbx)\} = \bbV^\Hr \bbx\ .
\end{equation}
In particular, $[\hbx]_j = \bbv_j^\Hr \bbx$ for each index $j$, where $\bbv_j$ is the eigenvector associated with $\lambda_j$. This operation has the effect of decomposing $(\bbG, \bbx)$ in the eigenbasis of the graph, which makes sense if we interpret the eigenvalues as frequencies. Similarly, the inverse graph Fourier transform (iGFT) is defined as
\begin{equation*}
   \mbox{iGFT} \{\hbx\} = \bbV \hbx = \bbx\ .
\end{equation*}
Since $\bbV^\Hr\bbV=\bbI$, the iGFT is a proper inverse and can recover $\bbx$ from $\hbx$ without loss of information.

The LSI filters defined in \eqref{eqn:lsi-gf} also admit a spectral representation $\hbH(\bbLam)$. This spectral representation is given by
\begin{equation} \label{eqn:lsi_graph_filter_resp}
   \hbH(\bbLam)= \sum_{k=0}^{K} h_k \bbLam^k\ .
\end{equation}
Therefore, if we consider the action of the filter $\bbH(\bbS)$ in the frequency domain, we see that $\hby = \hbH(\bbLam)\hbx$, i.e., graph filters are pointwise operators in the GFT domain. 

Another interesting observation regarding the spectral response of graph filters is that, for any set of filter taps, we can define the frequency response \cite{elvin19-spectral} 
\begin{equation} \label{eqn:any_graph_filter_resp}
   h(\lam)= \sum_{k=0}^{K} h_k \lam^k \ .
\end{equation}
Comparing \eqref{eqn:lsi_graph_filter_resp} and \eqref{eqn:any_graph_filter_resp}, we see that using the same set of coefficients on different graphs induces different responses depending on the eigenvalues of $\bbS$. Indeed, if we let $\lam_i$ denote the $i$th eigenvalue of $\bbS$ and $\hhatx_i$ and $\hhaty_i$ the $i$th components of the GFTs $\hbx$ and $\hby$, we have that $\hhaty_i = h(\lam_i)\hhatx_i$.
 
The goal of this paper is to generalize the definitions of graph signals, GFTs, and convolutional graph filters to graphon signals, graphon Fourier transforms, and convolutional graphon filters (Sec. \ref{sec:gft}). In introducing this graphon signal processing framework, we intend to do for large-scale graph signal processing what graphons do for very large and dynamic networks. Namely, we leverage the fact that working with limits is easier and focus on graph signal limits --- graphon signals --- to facilitate GSP on large-scale and dynamic networks. 
Importantly, we also show that, for sequences of graphs that converge to a graphon in the sense of \eqref{eqn_convergence}, corresponding sequences of GFTs and graph filters converge to the respective graphon Fourier transforms and graphon filters (Sec. \ref{sec:filters}).


\section{Graphon Signal Processing} \label{sec:gft}






The central concept in graphon signal processing is the graphon signal. Graphon signals are defined as pairs $(\bbW, X)$ where the function $X: [0,1] \to \mbR$ maps points of the unit interval to the real numbers, i.e., the signal values. The graphon signals that we consider have finite energy, i.e., $X$ is a function in $L^2([0,1])$. As in the case of graphons, graphon signals can be \textit{induced by graph signals}. Given a $n$-node graph signal $(\bbG,\bbx)$, the induced graphon signal $(\bbW_\bbG, X_\bbG)$ is defined as
\begin{equation} \label{eqn:graphon_signals_induced}
X_\bbG(v) = \sum_{j=1}^n [\bbx]_j \times \mbI(v \in I_j)
\end{equation}
where $\bbW_\bbG$ is the graphon induced by $\bbG$ [cf. \eqref{eqn_induced_graphon}] and $I_1 \cup \ldots \cup I_n$ is the \blue{regular} partition of the unit interval, i.e., $I_j = [(j-1)/n,j/n)$ for $1 \leq j \leq n-1 $ and $I_n=[(n-1)/n,1]$.

\subsection{Convergent sequences of graph signals}

We define convergent sequences of graph signals as follows.

\begin{definition}[Convergent sequences of graph signals] \label{defn:graph_signal_conv}
A sequence of graph signals $\{(\bbG_n,\bbx_n)\}$ is said to converge to the graphon signal $(\bbW,X)$ if, for every graph motif $\bbF$,
\begin{equation*}
\lim_{n \to \infty} t(\bbF,\bbG_n) \to t(\bbF,\bbW)
\end{equation*}
and if there exists a sequence of permutations $\{\pi_n\} \in \ccalP$ such that
\begin{equation*}
\lim_{n \to \infty} \|X_{\pi_n(\bbG_n)} - X\| = 0
\end{equation*}
where $\ccalP$ is the set of admissible permutations for the sequence $\{\bbG_n\}$ [cf. Def. \ref{def:permutation_sequence}] and $(\bbW_{\pi_n(\bbG_n)},X_{\pi_n(\bbG_n)})$ is the graphon signal induced by the permuted graph signal $(\pi_n(\bbG_n), \pi_n(\bbx_n))$ [cf. \eqref{eqn:graphon_signals_induced}].
\end{definition}

A sequence of graph signals is thus convergent if (i) the underlying graphs converge and (ii) the graphon signals induced by some permutation sequence $\{\pi_n\} \in \ccalP$ of the graph signals converge in $L^2$. The role of the permutation sequence is to match the labels of the signals $\bbx_n$ to those of the sequence of graphs $\bbG_n$ that converges in cut norm [cf. Lemma \ref{cut_norm_conv}]. 
This is a similar requirement to the isometric embeddings necessary to define Gromov-Hausdorff convergence in metric spaces \cite{kerr2009gromov}.

\blue{Importantly, the graph signal limit~$(\bbW,X)$ is unique for each $\{\pi_n\} \in \ccalP$. Indeed, suppose that it is not, i.e., that~$\|X_{\pi_n(\bbG_n)}-X\|  \to 0$ and~$\|X_{\pi_n(\bbG_n)}-Y\| \to 0$ with~$\|X-Y\| \geq \epsilon > 0$. Using the triangle inequality, we get
\begin{align*}
\begin{split}
\|X-Y\|  &= \|X-X_{\pi_n(\bbG_n)}+X_{\pi_n(\bbG_n)}-Y\|  \\
&\leq \|X-X_{\pi_n(\bbG_n)}\|  + \|X_{\pi_n(\bbG_n)}-Y\|  \to 0
\text{,}
\end{split}
\end{align*}
which contradicts the hypothesis since there must then exist~$n_0$ such that $\|X-X_{\pi_n(\bbG_n)}\|  + \|X_{\pi_n(\bbG_n)}-Y\| <\epsilon$ for $n>n_0$.} See also Remark \ref{remark:limit_uniqueness}.

\subsection{The graphon operator and graphon filters}

Every graphon $\bbW$ induces an integral operator $T_{\bbW}: L^2([0,1]) \to L^2([0,1])$, which maps a signal $(\bbW, X)$ to the signal $(\bbW, Y)$ given by
\begin{equation} \label{eqn:graphon_shift}
Y(v) = (T_\bbW X)(v) = \int_0^1 \bbW(u,v)X(u)du\ .
\end{equation}
We refer to $T_\bbW$ as the \textit{graphon shift operator} (WSO) because it induces a diffusion of $(\bbW, X)$ on the graphon analogous to the diffusion induced by the adjacency matrix $\bbS$ on a graph. Building upon this parallel, LSI graphon filters are defined as follows.

%
\begin{figure}[t]
    \centering

\usetikzlibrary{arrows, arrows.meta}
\usetikzlibrary{decorations.markings}

\pgfplotsset{xtick style={draw=none}}

\def \thisplotscale {3.4}
\def \unit {\thisplotscale cm}

\def \frequencyresponse{0.75}

\begin{tikzpicture}[x = 1*\unit, y=1*\unit]

\path[fill=black, opacity = 0.1] 
              (0.9, 0.0) --
              (0.9, 1.0) --  
              (1.5, 1.0) -- 
              (1.5, 0.0) -- cycle;

\begin{axis}[scale only axis,
             width  = 2.4*\unit,
             height = 1*\unit,
             xmin = -8, xmax=8,
             tick style={opacity=1},
             xtick = {-8, -7, -5, -3, -2, 0, 2, 2.7, 4.24, 5.51, 7.6, 8},
             xticklabels = {\black{\tiny $-1\ \ \ $},
             			   \blue{\tiny $\ \lam_{-1}$}, 
                            \blue{\tiny $\ \lam_{-2}$},
                            \blue{\tiny $\cdot \cdot \cdot$}, 
                            \black{\tiny $-c$},
             			   \black{\tiny $0$}, 
			                \black{\tiny $c$},
                            \blue{\tiny $\ \cdot \cdot \cdot$}, 
                            \blue{\tiny $\ \lam_3$},
                            \blue{\tiny $\ \lam_2$}, 
                            \blue{\tiny $\ \lam_1$},
                            \black{\tiny $\ \ \ 1$}},
             ymin = -0, ymax = 1.15,
             ytick = {-1},
             enlarge x limits=false]

\addplot+[samples at = {-7, -5, -3, -1.4, -0.7, -0.44, -0.33, -0.22, -0.1, -0.04, -0.02, 0.00, 0.02, 0.04, 0.1, 0.2, 0.3, 0.45, 0.72, 0.98, 1.2, 2.7, 4.24, 5.51, 7.6}, 
          color = blue!60, 
          ycomb, 
          mark=otimes*, 
          mark options={blue!60}]
         {\frequencyresponse};

\addplot+[samples at = {-8,-2,0,2, 8}, 
          color = black!70, 
          line width = 1.05,
          ycomb, no marks]
         {0.05};


\end{axis}

\draw[|-|,color = black, 
          line width = 0.65] (0.9,0.75) -- (0.99,0.75);
\draw[|-|,color = black, 
          line width = 0.65] (1.38,0.75) -- (1.5,0.75);
\node at (1.21, 0.93) {\scriptsize $d_n$};
\draw[decoration={markings,mark=at position 1 with {\arrow[scale=0.5]{>}}},
    postaction={decorate},
    shorten >=0.1pt] (0.945, 0.8) -- (1.13,0.88);
\draw[decoration={markings,mark=at position 1 with {\arrow[scale=0.5]{>}}},
    postaction={decorate},
    shorten >=0.1pt] (1.44, 0.8) -- (1.27,0.88);

\end{tikzpicture}

    \caption{Graphon eigenvalues. A graphon has an infinite number of eigenvalues $\lambda_i$ but for any fixed constant $c$ the number of eigenvalues $|\lambda_i|\geq c$ is finite. Thus, eigenvalues accumulate at $0$ and this is the only accumulation point for graphon eigenvalues. The quantity $d_n$ is approximately equal to the minimum distance between $c$ (or $-c$) and the eigenvalues in the set $\{\lambda_i\ |\ \lambda_i < c\}$.}
    \label{fig:eigenvalues_graphon}
\end{figure}
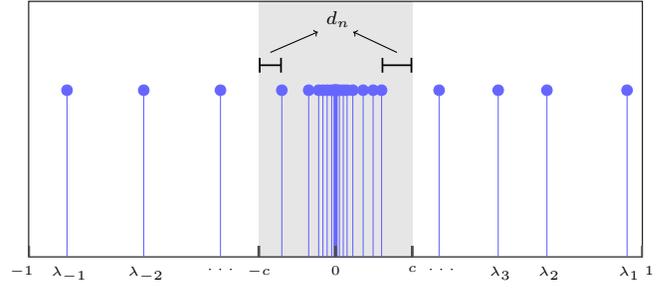

\begin{definition}[LSI graphon filters] \label{defn:linear}
Let $(\bbW,X)$ be a graphon signal. A LSI graphon filter $T_\bbH: L^2([0,1]) \to L^2([0,1])$ maps $(\bbW, X) \mapsto (\bbW, Y)$ with $ (\bbW, Y)$ given by
\begin{align} \label{eqn:lsi-wf}
\begin{split}
Y(v) = (T_\bbH X)(v) &= \sum_{k=0}^K h_k (T_{\bbW}^{(k)} X)(v) \\
\mbox{where } (T_{\bbW}^{(k)}X)(v) &= \int_0^1 \bbW(u,v)(T_\bbW^{(k-1)} X)(u)du,\ k \geq 1
\end{split}
\end{align}
and $T_{\bbW}^{(0)} = \mathds{I}$, the identity operator. The $h_k$ are known as the \textit{graphon filter taps}. 
\end{definition}

Similarly to LSI graph filters, the graphon filters in Def. \ref{defn:linear} are shift-invariant because, given an input graphon signal $X$ and denoting the graphon filter ouput $Y=T_\bbH X$, applying $T_\bbH$ to a shifted version of the input $X'=T_\bbW X$ yields $Y' = T_\bbH X' = T_\bbW Y$, i.e., $Y'$ is a shifted version of $Y$.


\subsection{Graphon spectra and the graphon Fourier transform}

Because $\bbW$ is a bounded symmetric function, $T_\bbW$ is a self-adjoint Hilbert-Schmidt operator (see Lemma \ref{lemma:HS}, Appendix \ref{appx:HS}). As such, it can be decomposed in the operator's basis as 
\begin{equation} \label{eqn:graphon_spectra}
\bbW(u,v) = \sum_{i \in \mbZ \setminus \{0\}} \lambda_i \varphi_i(u) \varphi_i(v)
\end{equation}
with eigenvalues $\lambda_i \in [-1,1]$, and eigenfunctions $\varphi_i: [0,1] \to \mbR$. As before, we separate positive and negative eigenvalues by ordering them according to their sign and in decreasing order of absolute value. Therefore, we have $1 \geq \lambda_1 \geq \lambda_2 \geq \ldots \geq 0 \geq \ldots \geq \lambda_{-2} \geq \lambda_{-1} \geq -1$.
The eigenfunctions form an orthonormal basis of $L^2([0,1])$. Note that the eigenvalues, and hence the eigenfunctions, are countable. What is more, since $\|\bbW\|^2 \leq 1$ the trace of $T_\bbW^* T_\bbW$ is bounded by one and so the $\lambda_i$ converge to $0$ for $|i| \to \infty$ as depicted in Fig. \ref{fig:eigenvalues_graphon}. Zero is the only point of accumulation, which in turn implies that all $\lambda_i \neq 0$ have finite multiplicity \cite{lax02-functional}.

%
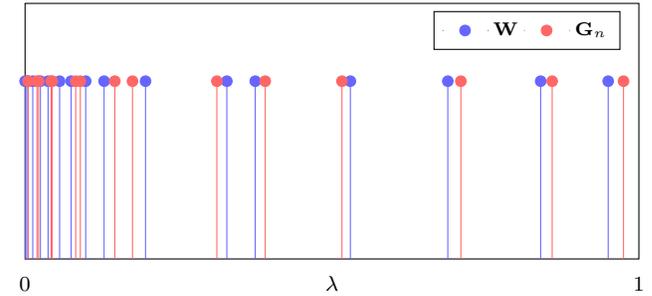
\begin{figure}[t]
    \centering

\pgfplotsset{xtick style={draw=none}}

\def \thisplotscale {3.4}
\def \unit {\thisplotscale cm}

\def \frequencyresponse 
     {0.8}

\begin{tikzpicture}[x = 1*\unit, y=1*\unit]
\begin{axis}[scale only axis,
             width  = 2.4*\unit,
             height = 1*\unit,
             xmin = 0, xmax=8,
             xtick = {0, 4, 8},
             xticklabels = {\black{\footnotesize $0$},
             				\black{\footnotesize $\lambda$}, 
                            \black{\footnotesize $1$}},
             ymin = -0, ymax = 1.15,
             ytick = {-1},
             typeset ticklabels with strut,
             enlarge x limits=false,
             legend pos=north east,
             legend columns=3]


\addplot+[samples at = {0.00, 0.02, 0.04, 0.1, 
			  0.2, 0.3, 0.45, 0.6, 0.79, 1.03, 1.57, 
              2.63, 3, 4.24, 5.51, 6.72, 7.6}, 
          color = blue!60, 
          ycomb, 
          mark=otimes*, 
          mark options={blue!60}]
         {\frequencyresponse};
         
\addplot+[samples at = {0.04, 0.175, 0.155, 0.34, 0.355, 
              0.66, 0.72, 1.17, 1.4, 2.5, 3.13, 4.13, 
              5.68, 6.87, 7.8}, 
          color = red!60, 
          ycomb, 
          mark=otimes*, 
          mark options={red!60}]
         {\frequencyresponse};

   \addlegendentry{\scriptsize $\bbW$}
   \addlegendentry{\scriptsize $\bbG_n$}


\end{axis}
\end{tikzpicture}

    \caption{Comparison of graphon eigenvalues (blue) and eigenvalues of a graph $\bbG_n$ taken from a convergent graph sequence (red). Only the positive eigenvalues are depicted. For $n\to\infty$, the eigenvalues of $\bbG_n$ converge to the eigenvalues of $\bbW$. However, the accumulation of graphon eigenvalues close to $\lam=0$ means that the GFT converges to the WFT only for graphon bandlimited signals.}
    \label{fig:eigenvalues_graph_graphon}
\end{figure}

Eq. \eqref{eqn:graphon_spectra} allows writing $T_\bbW$ as
\begin{equation} \label{eqn:graphon_shift2}
(T_\bbW X)(v) = \sum_{i \in \mbZ \setminus \{0\}} \lambda_i \varphi_i(v) \int_0^1 \varphi_i(u)X(u)du \text{.}
\end{equation}
The integral terms $\int_0^1 \varphi_i(u)X(u)du$ are the $L^2$ inner products $\langle X, \varphi_i\rangle$ between the signal $X$ and the eigenfunctions $\varphi_i$. Since the $\varphi_i$ form a complete orthonormal basis of $L^2([0,1])$, the inner products $\langle X, \varphi_i\rangle$ provide a complete representation of $(\bbW, X)$ on the graphon basis. Although there is an infinite number of eigenfunctions, they are countable and so the change of basis can always be defined. This change of basis operation is called the \textit{graphon Fourier transform} (WFT).
\begin{definition}[Graphon Fourier Transform]  \label{defn:wft}
Consider the graphon signal $(\bbW,X)$, and let $\{\lambda_i\}_{i \in \mbZ \setminus \{0\}}$ and $\{\varphi_i\}_{i \in \mbZ \setminus \{0\}}$ be the eigenvalues and eigenfunctions of $T_\bbW$. 
The graphon Fourier transform (WFT) of $(\bbW,X)$ is defined as 
\begin{align*}
\begin{split}
&\mbox{WFT}\left[(\bbW,X)\right] = \hat{X} \quad \text{with}\\
&[\hat{X}]_i = \hat{X}(\lambda_i) = \int_{0}^1 X(u) \varphi_i(u) du\ .
\end{split}
\end{align*}
The inverse Graphon Fourier Transform (iWFT) of $\hat{X}$ is defined as
\begin{equation*}
\mbox{iWFT}\big[\hat{X}\big] = \sum_{i\in \mbZ \setminus \{0\}} \hat{X}(\lambda_i) \varphi_i = X\ .
\end{equation*}
\end{definition}

Since the $\{\varphi_i\}_{j \in \mbZ \setminus \{0\}}$ are orthonormal, the iWFT is the proper inverse transformation of the WFT. 
Def. \ref{defn:wft} further allows defining graphon signals that are \textit{bandlimited}.

\begin{definition}[Bandlimited graphon signals] \label{defn:bandlimited}
A graphon signal $(\bbW,X)$ is \textit{$c$-bandlimited} with bandwith $c \in [0,1]$ if $\hat{X}(\lambda_i) = 0$ for all $i$ such that $|\lambda_i| < c$.
\end{definition}
Because all nonzero eigenvalues have finite multiplicity, the WFT of a bandlimited graphon signal is finite-dimensional. Bandlimited graphon signals have two noteworthy properties. The first is that they extend the notion of graph bandlimited signals, which are the most common type of graph signal in practical GSP applications \cite{ramakrishna2020user}.
The second is that, since they only depend on a finite number of graphon eigenfunctions, their WFT can be computed analytically. Although countability of the $\varphi_i$ allows us to write the definition of the WFT (Def. \ref{defn:wft}) for any graphon signal, calculating all inner products $\langle X, \varphi_i\rangle$ is infeasible because the graphon basis is infinite-dimensional. 

\subsection{Graphon filter frequency response}

The WFT also allows obtaining the frequency response of graphon filters. Using the spectral decomposition of $T_\bbW$ \eqref{eqn:graphon_spectra}, we can rewrite the LSI graphon filter \eqref{eqn:lsi-wf} as
\begin{equation*}
Y(v) = (T_\bbH X)(v) = \sum_{i\in \mbZ \setminus \{0\}} \sum_{k=0}^K h_k \lambda_i^k \hat{X}(\lambda_i)\varphi_i(v)\ .
\end{equation*}
Hence, the frequency response of $T_\bbH$ is given by
\begin{equation} \label{eqn:transfer_fc_linear}
\hat{T}_\bbH(\lambda) = h(\lambda) =  \sum_{k=0}^K h_k \lambda^k\ .
\end{equation}

Equation \eqref{eqn:transfer_fc_linear} is the infinite counterpart of the frequency response of a LSI graph filter \eqref{eqn:any_graph_filter_resp}.
Note that, to understand the behavior of this filter on a specific graphon $\bbW$, we need to evaluate $h(\lambda)$ at each graphon eigenvalue $\lambda_i$.
But \eqref{eqn:transfer_fc_linear} is otherwise independent of the graphon. In other words, the frequency response of a graphon filter always has the same shape, irrespective of the graphon.
\blue{A third important remark pertaining to \eqref{eqn:transfer_fc_linear} is that LSI graphon filters can approximate any filter with \red{analytic} frequency response $h(\lambda)$ arbitrarily well as~$K\to\infty$}. This is because the frequency response of a LSI graphon filter \eqref{eqn:transfer_fc_linear} is a polynomial of the eigenvalues of the graphon. Put formally, a graphon filter with frequency response $\hat{T}_\bbH(\lambda) = h(\lambda)$ can be written as a LSI graphon filter (Def. \ref{defn:linear}) provided that $h(\lambda)$ \red{is analytic, i.e., that it is infinitely differentiable at $\{\lambda_i\}_{i \in \mbZ \setminus \{0\}}$ and its Taylor series converges pointwise}. 

We conclude this section by stressing that the goal of Defs. \ref{defn:linear} and \ref{defn:wft}, as well as of the definition of a graphon signal, is to \textit{generalize} GSP concepts to graphons. 
These concepts are not realizable in the way that graph signals, graph filters, and the GFT are because, unlike graphs, graphons are intangible theoretical objects. Nonetheless, their value lies in that they help understand the behavior of graph signals in the limit of large-scale networks. This provides the theoretical foundations to enable the practical scenarios (\textbf{S1}--\textbf{S3}) discussed in the introduction (see also Sec. \ref{sec:sims}). Indeed, as we show next, the WFT and the LSI graphon filter are mathematical limits of the GFT and of the LSI graph filter on convergent sequences of graph signals.

\begin{remark}[Uniqueness of limit graphon signal on $\bbW$] \label{remark:limit_uniqueness}
A sequence of graphs $\{\bbG_n\}$ converges to a graphon $\bbW$ if and only if the homomorphism densities $t(\bbF,\bbG_n)$ converge to $t(\bbF,\bbW)$ for every motif $\bbF$. Naturally, there may be other graphons $\bbW'$ with same homomorphism densities $t(\bbF,\bbW') = t(\bbF,\bbW)$ for all graphs $\bbF$ and so the limit graphon $\bbW$ is not necessarily unique, but in this paper we select one of these limits---the graphon $\bbW$--- without loss of generality\footnote{Graphons $\bbW'$ and $\bbW$ with same homomorphism densities $t(\bbF,\bbW')=t(\bbF,\bbW)$ for all $\bbF$ are called weakly isomorphic. We can select $\bbW$ without loss of generality because two graphons $\bbW'$ and $\bbW$ are weakly isomorphic if and only if $\delta_{\square}(\bbW',\bbW) = 0$, i.e., if their cut distance is zero  \cite[Chapters 7.3, 8.2.2]{lovasz2012large}. Therefore, the limit graphon is unique w.r.t. the metric induced by the cut distance in the space of \textit{unlabeled} graphons.} and use its implicit node labeling to define sequences of graph signals $\{(\bbG_n, \bbx_n)\}$ that converge to graphon signals $(\bbW,X)$ [cf. Def. \ref{defn:graph_signal_conv}]. Since $\bbW$ is fixed, for each sequence $\{(\bbG_n, \bbx_n)\}$ the limit signal $(\bbW,X)$ is unique in $L^2$.
\end{remark}



\section{GSP converges to WSP} \label{sec:filters}


%
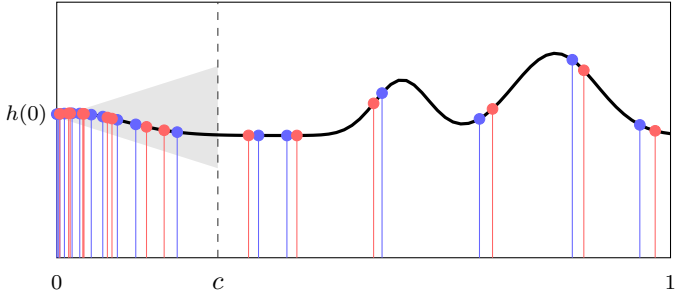
\begin{figure}[t]
    \centering

\pgfplotsset{xtick style={draw=none}}
\pgfplotsset{ytick style={draw=none}}

\def \thisplotscale {3.4}
\def \unit {\thisplotscale cm}

\def \frequencyresponse 
     { 0.1*exp(-(1*(x-0.2))^2) 
       + 0.25*exp(-(2*(x-4.5))^2) 
       + 0.37*exp(-(1.3*(x-6.5))^2) 
       + 0.55}

\begin{tikzpicture}[x = 1*\unit, y=1*\unit]
\def \factorx {2.4/8}
\def \deltax  {0.5*\factorx}
\def \shadeshift  {0.05}

\path [fill=black, opacity = 0.1] 
              (0.0, 0.55) --
              (0.63, 0.75) --  
              (0.63, 0.35) -- cycle;

\begin{axis}[scale only axis,
             width  = 2.4*\unit,
             height = 1*\unit,
             xmin = 0, xmax=8,
             xtick = {0, 2.1, 8},
             xticklabels = {\black{\footnotesize $0$},
             				\black\footnotesize {$c$}, 
                            \black{\footnotesize $1$}},
             ymin = -0, ymax = 1.15,
             ytick = {0.65},
             yticklabels = {\black{\footnotesize $h(0)$}},
             typeset ticklabels with strut,
             enlarge x limits=false]

\addplot+[samples at = {0.00, 0.02, 0.04, 0.1, 		              
			  0.2, 0.3, 0.45, 0.6, 0.79, 1.03, 1.57, 
              2.63, 3, 4.24, 5.51, 6.72, 7.6}, 
          color = blue!60, 
          ycomb, 
          mark=otimes*, 
          mark options={blue!60}]
         {\frequencyresponse};
         
\addplot+[samples at = {0.04, 0.175, 0.155, 0.34, 0.355, 
              0.66, 0.72, 1.17, 1.4, 2.5, 3.13, 4.13, 
              5.68, 6.87, 7.8}, 
          color = red!60, 
          ycomb, 
          mark=otimes*, 
          mark options={red!60}]
         {\frequencyresponse};
         
\addplot+[samples at = {2.1}, 
          color = black!80, 
          dashed,
          ycomb,
          mark=none]
         {1.2};

\addplot[ domain=0:8, 
          samples = 80, 
          color = black,
          line width = 1.2]
         {\frequencyresponse};

         

\end{axis}
\end{tikzpicture}

    \caption{Lipschitz continuous filter function $h(\lambda)$ with Lipschitz constant $L$. Only the positive eigenvalue axis is depicted for simplicity. Lipschitz filters eliminate the requirement that the graphon signal be bandlimited because they bound the filter variation for signal components associated with eigenvalues smaller than $c$.}
    \label{fig:filter_response}
\end{figure}


In this section, we leverage the properties of convergent graph sequences to prove a series of convergence results which show that GSP converges to WSP. Our first result describes the limit behavior of the GFT on convergent sequences of graph signals (Sec. \ref{sbs:gft}). We show that, when the limit graphon signal is bandlimited, the GFT converges to the WFT (Thm. \ref{thm:wft}). 
Our second result shows that the spectral responses of graph filters converges to the spectral response of a well-defined graphon filter (Thm. \ref{thm:transfer_fcn}, Sec. \ref{sbs:filter}). Thm. 2  is further extended to show that convergence of the graph signal and graph filter also implies convergence of the filter response in the \textit{vertex domain}. This is our third convergence result. It is stated with increasing levels of generality in Cor. \ref{shift_filter_response}, which follows directly from convergence of the GFT and of the filter spectral responses; and in Thms. \ref{thm:non_disc} and \ref{thm:non_disc2}, which do not require the graphon signal to be bandlimited.
These findings, particularly the more general Thm. \ref{thm:non_disc2}, are the main technical contributions of this work. 
At the end of the section, a table summary of the GSP and WSP definitions of a signal, of the Fourier Transform and of linear shift-invariant filters can be found in Table \ref{tb:summary}.} This table also highlights the relationships between these concepts as established by Thms. \ref{thm:wft}--\ref{thm:non_disc2}.  

\subsection{Convergence of the GFT} \label{sbs:gft}

When a sequence of graph signals converges to a bandlimited graphon signal, we can show that the GFT converges to the WFT as long as the limit graphon is \textit{non-derogatory} (Def. \ref{defn:non_derog}). This is the main result of this section, presented in Thm. \ref{thm:wft}.

\begin{definition} \label{defn:non_derog}
A graphon $\bbW$ is non-derogatory if $\lambda_i \neq \lambda_j$ for all $i \neq j$ and $i, j \in \mbZ \setminus \{0\}$.
\end{definition}

\begin{theorem}[GFT $\to$ WFT for BL graphon signals] \label{thm:wft}
Let $\{(\bbG_n,\bbx_n)\}$ be a sequence of graph signals and let $\{\pi_n\} \in \ccalP$ be a sequence of permutations such that $\{(\bbG_n,\bbx_n)\}$ converges to the $c$-bandlimited graphon signal $(\bbW,X)$ in the sense of Def. \ref{defn:graph_signal_conv}, where $\bbW$ is non-derogatory. Then, 
\begin{align*}
\mbox{GFT}\left[(\pi_n(\bbG_n),\pi_n(\bbx_n))\right] \to \mbox{WFT}\left[(\bbW,X)\right]\ .
\end{align*}
Conversely, if $\{(\bbG_n,\hbx_n)\}$ is a sequence of GFTs converging to the WFT $(\bbW,\hat{X})$, then there exists a sequence of permutations $\{\pi_n\} \in \ccalP$ such that
\begin{align*}
\pi_n\bigg(\mbox{iGFT}\big[\bbx_n\big]\bigg) \to \mbox{iWFT}\big[\hat{X}\big].
\end{align*}
\end{theorem}

Note that the GFT coefficients $[\hbx]_i$ and the WFT coefficients $[\hat{X}]_i$ inherit the ordering of the eigenvalue sets of the graphon $\bbW$ and of the graphs $\bbG_n$, which are both ordered with indices $i \in \mbZ\setminus \{0\}$ according to their sign and in decreasing order of absolute value.

Thm. \ref{thm:wft} relates the GFT, a Fourier transform for ``discrete'' graph signals, to the WFT, a Fourier transform for ``continuous'' graphon signals. This comparison is only possible because, like graphs, graphons have discrete spectra.
This unveils an interesting parallel with the relationship between the discrete Fourier transform (DFT) --- a discrete transform for discrete signals --- and the Fourier series ---- a discrete transform for continuous signals --- in Euclidean domains. Another important consequence of Thm. \ref{thm:wft} is that it allows inferring  the spectral content of graph signals by analyzing the spectral content of the graphon signals giving rise to them (and vice-versa). This is useful, for instance, when these signals and/or the graphs on which they are supported are corrupted or only partially observable, in which case the WFT (or the GFT on another graph of the same family) can be used as an approximation of the GFT on the original graph.

We also point out that the requirement that the graphon be non-derogatory is not very restrictive: as stated in the following proposition, the space of non-derogatory graphons is dense in the space of graphons.

\begin{proposition}[Density of $\mathfrak{W}$] \label{prop:non_derog} 
Let $\mathfrak{W}$ denote the space of all bounded symmetric measurable functions $\bbW: [0,1]^2 \to \reals$, i.e., the space of graphons. The space of \emph{non-derogatory} graphons is dense in $\mathfrak{W}$.
\end{proposition}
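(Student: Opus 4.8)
The plan is to prove density in the $L^2$ norm on kernels, which coincides with the Hilbert–Schmidt norm on the associated operators and dominates the operator norm $\vertiii{\cdot}$; density in $L^2$ therefore implies density in whatever weaker topology one might instead consider, and $L^2$ is the paper's default norm. Fix a graphon $\bbW \in \mathfrak{W}$ and $\eps > 0$. The idea is to leave the eigenfunctions of $T_\bbW$ untouched and merely nudge its eigenvalues so that the nonzero ones become pairwise distinct, keeping the total change below $\eps$.

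First I would record two facts about the spectral decomposition \eqref{eqn:graphon_spectra}. (i) Because $\bbW$ is bounded, say $|\bbW| \le M$ a.e., every eigenfunction $\varphi_j$ associated with a nonzero eigenvalue is essentially bounded: from $\lambda_j \varphi_j(v) = \int_0^1 \bbW(u,v)\varphi_j(u)\,du$ and Cauchy–Schwarz one gets $\|\varphi_j\|_\infty \le M/|\lambda_j|$. (ii) The rank-one kernels $\varphi_j(u)\varphi_j(v)$ are orthonormal in $L^2([0,1]^2)$, since $\langle \varphi_i\otimes\varphi_i,\ \varphi_j\otimes\varphi_j\rangle = \langle \varphi_i,\varphi_j\rangle^2 = \delta_{ij}$; consequently any kernel of the form $\sum_j \mu_j \varphi_j(u)\varphi_j(v)$ has squared $L^2$ norm $\sum_j \mu_j^2$ (Parseval). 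Recall also that $T_\bbW$ is Hilbert–Schmidt, so $\sum_j \lambda_j^2 < \infty$.

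Next I would truncate and then perturb. Choose $N$ with $\sum_{|j|>N}\lambda_j^2 < \eps^2/2$. The nonzero eigenvalues with $|j| \le N$ form a finite list, so I can replace each $\lambda_j$ by a nearby $\lambda_j'$ chosen so that all the $\lambda_j'$ are pairwise distinct and nonzero while $\sum_{|j|\le N}(\lambda_j-\lambda_j')^2 < \eps^2/2$; a generic arbitrarily small perturbation (e.g. spreading each repeated value into an arithmetic progression of tiny step, avoiding $0$ and collisions) achieves this. Define
\begin{equation*}
\bbW'(u,v) = \sum_{\substack{|j|\le N \\ \lambda_j \ne 0}} \lambda_j'\, \varphi_j(u)\varphi_j(v).
\end{equation*}
By fact (i) this is a finite sum of bounded functions, hence bounded; it is symmetric and measurable, so $\bbW' \in \mathfrak{W}$. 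Since $\bbW'$ shares the orthonormal system $\{\varphi_j\}$ with $T_\bbW$, its nonzero eigenvalues are exactly the distinct values $\lambda_j'$ (everything orthogonal to these finitely many $\varphi_j$ lies in the kernel), so $\bbW'$ is non-derogatory. Finally, by fact (ii),
\begin{equation*}
\|\bbW - \bbW'\|^2 = \sum_{\substack{|j|\le N \\ \lambda_j \ne 0}} (\lambda_j - \lambda_j')^2 + \sum_{|j|>N}\lambda_j^2 < \eps^2,
\end{equation*}
which proves density.

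I expect the only genuine obstacle to be checking that the approximant is an actual graphon, i.e.\ a bounded function: an unrestricted $\ell^2$ perturbation of the whole eigenvalue sequence would keep the Hilbert–Schmidt error small but could destroy boundedness, since the full spectral series need not converge uniformly. This is precisely what the truncation to finite rank, together with the bound $\|\varphi_j\|_\infty \le M/|\lambda_j|$ on eigenfunctions of nonzero eigenvalues, circumvents. Everything else—distinctness of finitely many perturbed scalars and the Parseval computation—is routine.
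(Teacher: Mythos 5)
Your proof is correct, and its core construction is the same one the paper uses: truncate the spectral expansion of $\bbW$ to finite rank, then perturb the finitely many surviving eigenvalues so that they become pairwise distinct. The differences lie in the framework and in one point of rigor. The paper argues in two steps: it first shows (Prop.~\ref{T:nonderogatory_density}) that operators induced by non-derogatory graphons are dense among compact self-adjoint operators in the $L^2$-induced operator norm $\vertiii{\cdot}$, with the error controlled by a tail bound on $\sum_{|i|>n}|\langle T_\bbW\phi,\varphi_i\rangle|^2$, and then transfers the statement to the cut norm via $\|\bbW\|_\square \le \vertiii{T_\bbW} \le \sqrt{8\|\bbW\|_\square}$ (Prop.~\ref{T:norm_equivalence}). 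You instead work directly with kernels and measure the error in the Hilbert--Schmidt norm, i.e.\ the $L^2([0,1]^2)$ norm, via Parseval; since that norm dominates $\vertiii{\cdot}$ and hence the cut norm, your conclusion is at least as strong as the paper's and needs no norm-equivalence step. Your write-up also buys something the paper's proof glosses over: you verify that the approximant is actually a graphon, i.e.\ a bounded symmetric measurable function, using the bound $\|\varphi_j\|_\infty \le M/|\lambda_j|$ for eigenfunctions of nonzero eigenvalues. The paper defines its approximant $\bbW_n$ ``through its operator'' and never checks that this operator is induced by a bounded kernel (moreover, its displayed formula contains the $\phi$-independent term $\sum_{|i|\le n}\delta_i\varphi_i$, which as written is not even linear in $\phi$; it is evidently intended as the eigenvalue perturbation that you perform explicitly). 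One caveat your proof shares with the paper's: both approximants are finite rank, so $0$ is an eigenvalue of infinite multiplicity, and under a literal reading of Def.~\ref{defn:non_derog} (indices ranging over all of $\mbZ\setminus\{0\}$) neither construction is non-derogatory; the definition must be understood as requiring distinctness of the \emph{nonzero} eigenvalues, which both constructions achieve.
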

\begin{proof}
Refer to Appendix D.
\end{proof}

Prop. \ref{prop:non_derog} tells us that, even if a graphon is derogatory, there exists a non-derogatory graphon arbitrarily close to it for which the GFT convergence result from Thm. \ref{thm:wft} holds.

In order to establish Thm. \ref{thm:wft} and subsequent results, we will work with the graphon signals induced by the graph signals we consider. We have already described in \eqref{eqn:graphon_signals_induced} how their (vertex domain) values are related. In the sequel, we will also need to relate their spectral properties. This relationship is formalized in Lemma \ref{T:induced_graphon}. 
Note that, although the adjacencies $\bbS_n$ of the graphs $\bbG_n$ have a finite number of eigenvalues $\lambda_i(\bbS_n)$, we still associate the eigenvalue sign with its index and order the eigenvalues in decreasing order of absolute value. The indices $i$ are now defined on some finite set $\ccalL \subseteq \mbZ \setminus \{0\}$.

\begin{lemma}\label{T:induced_graphon}
Let $(\bbW_\bbG,X_\bbG)$ be the graphon signal induced by the  graph signal $(\bbG,\bbx)$ on $n$ nodes. Then, for $i \in \ccalL$ we have
\begin{align*}
	\lambda_i(T_{\bbW_\bbG}) &= \frac{\lambda_{i}(\bbS)}{n}
	\\
	\varphi_i(T_{\bbW_\bbG})(u) &= [\bbv_i]_{j}
		\times \sqrt{n} \mbI\left( u \in I_j \right), \ j = 0, \ldots, n \mbox{ and}
	\\
	[\hat{X}_\bbG]_i &= \dfrac{[\hbx]_i}{\sqrt{n}}
\end{align*}
where $\lambda_i(\bbS)$ are the eigenvalues of the adjacency matrix. For $i \notin \ccalL$, we let $\lambda_i(T_{\bbW_\bbG}) = [\hat{X}_\bbG]_i = 0$ and $\varphi_i(T_{\bbW_\bbG}) = \psi_i$ such that $\{\varphi_i(T_{\bbW_\bbG})\} \cup \{\psi_i\}$ forms an orthonormal basis of $L^2([0,1])$.
\end{lemma}
\begin{proof} \renewcommand{\qedsymbol}{}
Refer to Appendix \ref{appx:lemma2}.
\end{proof}

\begin{proof}[Proof of Thm. \ref{thm:wft}]
We now prove that, since the finite set $\ccalL$ converges to $\mbZ \setminus \{0\}$ as $n$ goes to infinity, $\text{WFT}\{(\bbW_{\pi_n(\bbG_n)},\pi_n(X_{\bbG_n}))\}  \to \text{WFT}\{(\bbW,X)\}$. We leave the dependence of the convergent signal sequence $\{(\bbG_n,\bbx_n)\}$ on $\{\pi_n\}$ implicit and write $\bbW_n = \bbW_{\pi_n(\bbG_n)}$ and $X_n = X_{\pi_n(\bbG_n)}$. Next, we use the eigenvector convergence result from the following lemma. Thm. \ref{thm:wft} then follows from the fact that inner products are continuous in the product topology that they induce.
\begin{lemma}\label{T:eigenvectors_convergence}
Let $\ccalC = \{i \in \mbZ \setminus \{0\} \mid |\lambda_i(T_{\bbW})| \geq c\}$ be the set of indices of the non-vanishing eigenvalues and denote $\ccalS$ the subspace spanned by the eigenfunctions $\{\varphi_i(T_\bbW)\}_{i \notin \ccalC}$. Then,
\begin{enumerate}
\item[(i)] for $i \in \ccalC$, $\varphi_i(T_{\bbW_n}) \to \varphi_i(T_\bbW)$ in $L_2$, and
\item[(ii)] for $i \notin \ccalC$, $\varphi_i(T_{\bbW_n}) \to \red{\Psi_i} \in \ccalS$ \red{weakly}.
\end{enumerate}
\end{lemma}
\begin{proof} \renewcommand{\qedsymbol}{}
Refer to Appendix \ref{appx:lemma3}.
\end{proof}
Starting with the eigenvectors with indices in $\ccalC$, for any $\epsilon > 0$ it holds from Lemma \ref{T:eigenvectors_convergence} and from the convergence of $X_n$ in $L^2$ that there exist $n_1$ and $n_2$ such that
\begin{align*}
	\|\varphi_i(T_{\bbW_n}) - \varphi_i(T_\bbW)\|  \leq \frac{\epsilon}{2\|X\| }
		\text{, for all } n > n_1
	\\ \text{and} \quad
	\|X_n - X\|  \leq \frac{\epsilon}{2}
		\text{, for all } n > n_2\ .
\end{align*}
Recall that $\|\varphi_i(T_{\bbW_n})\|  \leq 1$ for all $n$ and $i \in \ccalC$ because the graphon spectral basis is orthonormal. Since the sequence $\{X_n\}$ is convergent, it is bounded and $\|X\|  < \infty$. Let $m = \max{\{n_1,n_2\}}$. Then, it holds that
\begin{align*}
	|[\hat{X}_{n}]_i &- [\hat{X}]_i|
    = |\langle{X_n},{\varphi_i(T_{\bbW_n})}\rangle - \langle{X},{\varphi_i(T_\bbW)}\rangle|
	\\
	{}&= |\langle{X_n - X},{\varphi_i(T_{\bbW_n})}\rangle + \langle{X},{\varphi_i(T_{\bbW_n}) - \varphi_i(T_\bbW)}\rangle|
	\\
	{}&\leq \|X_n - X\| \|\varphi_i(T_{\bbW_n})\| + \|X\|\|\varphi_i(T_{\bbW_n}) - \varphi_i(T_\bbW)\|
	\\
	{}&\leq \frac{\epsilon}{2} \|\varphi_i(T_{\bbW_n})\| + \|X\|\frac{\epsilon}{2\|X\|} \leq \epsilon \mbox{ for all }n > m\text{.}
\end{align*}

For $i \notin \ccalC$, the eigenfunctions $\varphi_i(T_{\bbW_n})$ may not converge to $\varphi_i(T_\bbW)$, but they do converge to some function $\red{\Psi_i} \in \ccalS$. Given that the graphon signal $(\bbW,X)$ is $c$-bandlimited, we have $\langle{X},{\varphi_i(T_\bbW)}\rangle = 0$ for $i \notin \ccalC$, so that $X$ must be orthogonal to all functions in $\ccalS$. 
Using the same argument as for $i \in \ccalC$ yields that the remaining GFT coefficients also converge to the WFT. Formally,
\begin{equation*}
	\langle{\varphi_i(T_{\bbW_n})},{X_n}\rangle \to \langle{\red{\Psi_i}},{X}\rangle = 0 = \langle{\varphi_i(T_\bbW)},{X}\rangle
		\text{ .}
\end{equation*}

Convergence of the iGFT to the iWFT follows directly from these results and from Lemma \ref{T:eigenvectors_convergence}. Explicitly, use the triangle inequality to write
\begin{align*} 
\begin{split}
\left\|\sum_{i \in \mbZ\setminus\{0\}} [\hat{X}]_i \varphi_i(T_\bbW) - \sum_{i \in \mbZ\setminus\{0\}} [\hat{X}_n]_i \varphi_i(T_{\bbW_n})\right\| \\
\leq \sum_{i \in \mbZ\setminus\{0\}} \|[\hat{X}]_i \varphi_i(T_\bbW) - [\hat{X}]_i \varphi_i(T_{\bbW_n})\| \\
+ \sum_{i \in \mbZ\setminus\{0\}}\|[\hat{X}]_i \varphi_i(T_{\bbW_n}) - [\hat{X}_n]_i \varphi_i(T_{\bbW_n})\| \ .
\end{split}
\end{align*}
Applying the Cauchy-Schwarz inequality and splitting the sums between $i \in \ccalC$ and $i \notin \ccalC$, we get
\begin{multline} \label{eqn:igft_conv_proof}
\left\|\sum_{i \in \mbZ\setminus\{0\}} [\hat{X}]_i \varphi_i(T_\bbW) - \sum_{i \in \mbZ\setminus\{0\}} [\hat{X}_n]_i \varphi_i(T_{\bbW_n})\right\| \\
\leq \sum_{i \in \ccalC} |[\hat{X}]_i |\|\varphi_i(T_\bbW) - \varphi_i(T_{\bbW_n})\| \\
+ \sum_{i \in \ccalC}|[\hat{X}]_i  - [\hat{X}_n]_i| \|\varphi_i(T_{\bbW_n})\| \\
+ \sum_{i \notin \ccalC}|[\hat{X}_n]_i| \|\varphi_i(T_{\bbW_n})\|\to 0\text{ .} \qedhere
\end{multline}
The first term on the right hand side of \eqref{eqn:igft_conv_proof} vanishes because $\varphi_i(T_{\bbW_n}) \to \varphi_i(T_\bbW)$ for $i \in \ccalC$ by Lemma \ref{T:eigenvectors_convergence}; the second term, because the GFT coefficients $[\hat{X}_n]_i$ converge to $[\hat{X}]_i$ for $i \in \ccalC$; and the third term, because $[\hat{X}_n]_i \to [\hat{X}]_i = 0$ for $i \notin \ccalC$.
\end{proof}

\subsection{Convergence of graph filter responses in the spectral and vertex domains} \label{sbs:filter}

Our second convergence result involves the frequency response of graph filters. As we have already noted, the frequency responses of LSI graph filters \eqref{eqn:any_graph_filter_resp} and of their graphon counterparts \eqref{eqn:transfer_fc_linear} have the same expression as long as the coefficients $h_k$ (or, equivalently, the function $h$) are the same. In the following, we show that these frequency responses actually converge to one another as the number of nodes goes to infinity. 

\begin{theorem}[Convergence of graph filter frequency response] \label{thm:transfer_fcn}
On the graph sequence $\{\bbG_n\}$, let $\bbH_n(\bbS_n)$ be a sequence of filters of the form $\bbH_n(\bbS_n) = \bbV_n h(\bbLam (\bbS_n) /n) \bbV_n^\Hr$; and, on the graphon $\bbW$, define the filter 
$$(T_\bbH X)(v) = \sum_{i\in \mbZ \setminus \{0\}}h(\lambda_i(T_\bbW)) \hat{X}(\lambda_i)\varphi_i(v)\ .$$ 
If $\{\bbG_n\}\to\bbW$ and $h: [0,1] \to \reals$ is continuous, then
\begin{equation*}
\lim_{n \to \infty} \hbH_n(\lambda_i(\bbS_n)/n) = \hat{T}_\bbH(\lambda_i(T_\bbW))
\end{equation*}
\blue{where $\hbH_n$ and $\hat{T}_\bbH$ are the frequency responses of $\bbH_n$ and $T_\bbH$ respectively.}
\end{theorem}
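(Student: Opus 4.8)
The plan is to reduce the statement to the convergence of individual eigenvalues and then close with continuity of $h$. By construction $\bbH_n(\bbS_n) = \bbV_n\, h(\bbLam(\bbS_n)/n)\, \bbV_n^\Hr$, so $\bbH_n$ multiplies the eigenvector of $\bbS_n$ with eigenvalue $\lambda_j(\bbS_n)$ by the scalar $h(\lambda_j(\bbS_n)/n)$; in other words its frequency response is $\hbH_n(\lambda_j(\bbS_n)/n) = h(\lambda_j(\bbS_n)/n)$. On the graphon side, \eqref{eqn:transfer_fc_linear} gives $\hat{T}_\bbH(\lambda_j(T_\bbW)) = h(\lambda_j(T_\bbW))$. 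The assertion is thus equivalent to
\begin{equation*}
\lim_{n\to\infty} h\big(\lambda_j(\bbS_n)/n\big) = h\big(\lambda_j(T_\bbW)\big) ,
\end{equation*}
and, since $h$ is continuous on $[0,1]$, it suffices to establish the eigenvalue convergence $\lambda_j(\bbS_n)/n \to \lambda_j(T_\bbW)$ for each fixed index $j$.

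First I would pass from the graph to its induced graphon. Lemma \ref{T:induced_graphon} states that $\lambda_j(\bbS_n)/n = \lambda_j(T_{\bbW_{\bbG_n}})$ for every $j$ in the finite index set $\ccalL$ of $\bbG_n$. Fixing $j$ and recalling that $\ccalL$ exhausts $\mbZ\setminus\{0\}$ as $n\to\infty$, this identity holds for all sufficiently large $n$; the claim therefore becomes $\lambda_j(T_{\bbW_{\bbG_n}}) \to \lambda_j(T_\bbW)$, a statement purely about the spectra of the induced graphon operators and the limit operator.

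The core step is this spectral convergence under $\{\bbG_n\}\to\bbW$. The mechanism linking the hypothesis to the spectrum is the cycle motif: a direct computation gives $t(C_k,\bbG_n) = \tr(\bbS_n^k)/n^k = \sum_j \lambda_j(T_{\bbW_{\bbG_n}})^k$ and $t(C_k,\bbW) = \tr(T_\bbW^k) = \sum_j \lambda_j(T_\bbW)^k$, so homomorphism-density convergence forces convergence of the power sums $\sum_j \lambda_j(\cdot)^k$ for every $k\geq 3$. Upgrading these moment relations to convergence of each individual eigenvalue, in the sign-and-absolute-value ordering fixed in the excerpt, is precisely the known spectral-continuity result for graph limits, which I would invoke rather than re-derive \cite[Chapter 11.6]{lovasz2012large}.

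This upgrade is the main obstacle, and it is genuinely delicate because $T_\bbW$ is compact with eigenvalues accumulating only at $0$ (Fig. \ref{fig:eigenvalues_graphon}): the accumulation is exactly what prevents uniform control of the whole spectrum at once---note, for instance, that the second power sum need not converge because mass can hide in the vanishing bulk---so the convergence must be read off one isolated, finite-multiplicity eigenvalue at a time. This per-index nature is also why Thm. \ref{thm:wft}, which sums over all $j$, additionally requires the bandlimited hypothesis, whereas here, with $j$ fixed and $h$ continuous, per-index eigenvalue convergence is all that is needed.
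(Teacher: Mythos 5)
Your proposal is correct and follows essentially the same route as the paper: both identify the frequency responses as $\hbH_n(\lambda_j(\bbS_n)/n) = h(\lambda_j(\bbS_n)/n)$ and $\hat{T}_\bbH(\lambda_j(T_\bbW)) = h(\lambda_j(T_\bbW))$, reduce the claim to per-index eigenvalue convergence $\lambda_j(\bbS_n)/n \to \lambda_j(T_\bbW)$, and close by continuity of $h$. The spectral-continuity result you invoke from \cite[Chapter 11.6]{lovasz2012large} is exactly what the paper states as Lemma \ref{eigenvalue_conv} (citing \cite[Thm. 6.7]{borgs2012convergent}, with the cycle-motif/power-sum proof reproduced in the appendix), so the only difference is citation versus re-derivation of that lemma.
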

\begin{proof}
This is a direct consequence of a result due to \cite[Thm. 6.7]{borgs2012convergent} and restated here as Lemma \ref{eigenvalue_conv}. 
\begin{lemma}[Eigenvalue convergence] \label{eigenvalue_conv}
Let $\{\bbG_n\}$ be a sequence of graphs with eigenvalues $\{\lambda_i(\bbS_n)\}_{i \in \mbZ \setminus \{0\}}$, and $\bbW$ a graphon with eigenvalues $\{\lambda_i(T_\bbW)\}_{i \in \mbZ \setminus\{0\}}$.
If $\{\bbG_n\}\to\bbW$, 
\begin{equation} \label{eqn:eigenvalue_conv}
\lim_{n \to \infty} \dfrac{\lambda_i(\bbS_n)}{n} = \lim_{n \to \infty} \lambda_i(T_{\bbW_{\bbG_n}}) = \lambda_i(T_\bbW) \mbox{ for all } i\ .
\end{equation}
\end{lemma}
\begin{proof}
Refer to Appendix B.
\end{proof}
Lemma \ref{eigenvalue_conv} tells us that, in any convergent graph sequence, the eigenvalues of the graph converge to the eigenvalues of the limit graphon under a $1/n$ rescaling. This is illustrated in Fig. \ref{fig:eigenvalues_graph_graphon} for $\lambda > 0$.
Using Lemma \ref{eigenvalue_conv}, we will show that the transfer functions of arbitrary graph filters $\bbH_n(\bbS_n)$ converge to the transfer function of the graphon filter $T_\bbH$ with same filter function $h(\lambda)$. 

Consider the graphon signal $(\bbW,X)$. Applying $T_\bbH$ to $(\bbW, X)$ as in \eqref{eqn:lsi-wf}, we get
\begin{equation} \label{pf_resp_1}
Y(v) = \sum_{i \in \mbZ \setminus \{0\}} h(\lambda_i)\hat{X}(\lambda_i)\varphi_i(v)
\end{equation}
where we have omitted the dependence on $T_\bbW$ by writing $\lambda_i = \lambda_i(T_\bbW)$. The WFT of $(\bbW,Y)$ is given by
\begin{align*}
\begin{split}
[\hat{Y}]_j &= \int_0^1 Y(v) \varphi_j(v) dv\\
&= \sum_{i \in \mbZ \setminus \{0\}} h(\lambda_i)\bigg( \int_0^1 \varphi_i(v)\varphi_j(v) dv \bigg) \hat{X}(\lambda_i) \\
&= h(\lambda_j )\hat{X}(\lambda_j)
\end{split}
\end{align*}
from which we conclude that $\hat{T}_\bbH (\lambda_j) = h(\lambda_j)$.

We now determine the frequency response of $\bbH_n(\bbS_n)$. Applying $\bbH_n(\bbS_n)$ to the graph signal $(\bbG_n,\bbx_n)$, we get
\begin{align*}
\begin{split}
\bby_n = \bbH_n(\bbS_n) \bbx_n &= \bbV_n h(\bbLam (\bbS_n) / n)  \bbV_n^\Hr \bbx_n\ \\
&= \bbV_n h(\bbLam(\bbS_n)/ n) \hbx_n\ .
\end{split}
\end{align*}
The GFT of $(\bbG_n,\bby_n)$ is given by
\begin{equation*}
[\hby_n]_j = \bbv_{nj}^\Hr \bbV_n h(\bbLam(\bbS_n) / n) \hbx_n = h(\lambda_j(\bbS_n) / n) [\hbx_n]_j
\end{equation*}
and therefore $\hbH_n(\lambda_j(\bbS_n)) = h(\lambda_j(\bbS_n) / n)$.

Since $\bbG_n \to \bbW$, from Lemma \ref{eigenvalue_conv} it holds that $\lambda_j(\bbS_n)/n \to \lambda_j$. Because $h$ is continuous, this implies $h(\lambda_j(\bbS_n)/n) \to h(\lambda_j)$, which completes the proof.
\end{proof}

The spectral or \textit{frequency response} of a graph filter thus converges to that of the corresponding graphon filter provided that the frequency response of the filter, $h$, is continuous. In particular, this is the case for polynomials, so that the frequency response induced by LSI graph filters \eqref{eqn:any_graph_filter_resp} converges to that of their corresponding graphon filters \eqref{eqn:transfer_fc_linear}.
To understand the importance of this result, suppose that we design a filter with a certain spectral behavior on the graphon; Thm. \ref{thm:transfer_fcn} tells us that the same spectral behavior can be expected from the application of this filter (or, more precisely, of the graph filter with same coefficients) on graphs sampled from the graphon. By a simple triangle inequality argument, it then follows that filters can be transferred between graphs associated with the same graphon to obtain a similar spectral behavior. This is the first evidence of graph filter \textit{transferability}.

Nevertheless, Thm. \ref{thm:transfer_fcn} has a limitation. It only gives account of the limit behavior of the graph filter response in the frequency domain, while graph filters operate in the node domain. 
To analyze the vertex domain behavior of graph filters, we start with the simple case of bandlimited signals. Putting together Thms. \ref{thm:wft} and \ref{thm:transfer_fcn}, we can show that, when the limit of the graph signals is bandlimited, the graph filter outputs converge in the vertex domain.

\begin{corollary}[Convergence of graph filter response for bandlimited graphon signals]
\label{shift_filter_response}
Let $\{\bbG_n\}$ be a sequence of graphs converging to the graphon $\bbW$, where $\bbW$ is non-derogatory. Let $\bbH_n(\bbS_n) = \bbV_n h(\bbLam (\bbS_n) /n)\bbV_n^\Hr$ be a sequence of filters on the graphs $\{\bbG_n\}$, and let $(T_{\bbH}X)(v)=\sum_{i \in \mbZ \setminus \{0\}} h(\lambda_i)\hat{X}(\lambda_i)\varphi_i(v)$ be a filter on the graphon $\bbW$. Consider the sequence of graph signals $\{(\bbG_n,\bbx_n)\}$ and let $\{\pi_n\}$ be  a sequence of permutations such that $\{(\bbG_n,\bbx_n)\} \to (\bbW,X)$ in the sense of Def. \ref{defn:graph_signal_conv}.
Then, $\bby_n = \bbH(\pi_n(\bbS_n))\pi_n(\bbx_n)$ converges to $Y = T_\bbH X$.
\end{corollary}
\begin{proof}
Without loss of generality, assume that the permutation sequence $\{\pi_n\}$ is a sequence of identity permutations, i.e., the labeling of the signals $\bbx_n$ already matches the labeling for which the graphs converge in cut norm. Let the WFT of $(\bbW,Y)$ be $[\hat{Y}]_i = \hat{T}_\bbH(\lambda_i) [\hat{X}]_i$ and the GFT of $(\bbG_n, \bby_n)$ be $[\hby_n]_i = \hbH_n(\lambda_i(\bbS_n)/n) [\hbx_n]_i$. By Thm. \ref{thm:wft}, $[\hbx_n]_i \to [\hat{X}]_i$, and, by Thm. \ref{thm:transfer_fcn}, $\hbH_n(\lambda_i(\bbS_n)/n) \to \hat{T}_\bbH(\lambda_i)$. Since $\hat{T}_\bbH$ is a linear operator, and hence continuous, $[\hby_n]_i \to [\hat{Y}]_i$. 
Applying Thm. \ref{thm:wft} once again for the iGFT, we conclude that $\bby_n \to Y$.
\end{proof}

Cor. \ref{shift_filter_response} extends upon Thm. \ref{thm:transfer_fcn} by showing that, provided that the sequence of input signals $\{(\bbG_n, \bbx_n)\}$ converges to a bandlimited graphon signal, the output signals obtained by applying the filters $\bbH_n(\bbS)$ converge in the same sense as $\{(\bbG_n, \bbx_n)\}$ in the vertex domain.  
The requirement that the graphon signal be bandlimited arises from the difficulty of matching the GFT and WFT coefficients associated with small eigenvalues, i.e., eigenvalues $\lambda_i$ for which $|i|$ is large. This is illustrated in Fig. \ref{fig:eigenvalues_graph_graphon}. Note that, as the eigenvalues approach 0, it becomes hard to tell which graph eigenvalue converges to which graphon eigenvalue, as the eigenvalue difference $\lambda_i - \lambda_{i+1}$ tends to zero as $i \to \infty$. 

This requirement can be eliminated by considering \textit{Lipschitz graph and graphon filters}, i.e., filters with Lipschitz continuous $h(\lambda)$. A function $h: [0,1] \to \reals$ is $L$-Lipschitz continuous if, for all $\lambda, \lambda' \in [0,1]$,
\begin{equation} \label{eqn:lipschitz_filters}
|h(\lambda)-h(\lambda')| \leq L |\lambda - \lambda'|\ .
\end{equation}
For $h$ differentiable, this is equivalent to bounding $dh/d\lambda$ by $L$ in absolute value. An example of Lipschitz continuous filter is shown in Fig. \ref{fig:filter_response}.
For filter functions $h$ satisfying \eqref{eqn:lipschitz_filters}, we can show that the graph filters converge in the vertex domain for \textit{any} graphon signal, not only bandlimited ones, because the variation of Lipschitz filters can be bounded close to zero [cf. Fig. \ref{fig:filter_response}].


\begin{theorem}[Convergence of filter response for Lipschitz continuous graph filters] \label{thm:non_disc}
Let $\{\bbG_n\}$ be a sequence of graphs converging to the graphon $\bbW$, where $\bbW$ is non-derogatory. Let $\bbH_n(\bbS_n) = \bbV_n h(\bbLam (\bbS_n) /n)\bbV_n^\Hr$ be a sequence of filters on the graphs $\{\bbG_n\}$, and let $(T_{\bbH}X)(v)=\sum_{i \in \mbZ \setminus \{0\}} h(\lambda_i)\hat{X}(\lambda_i)\varphi_i(v)$ be a filter on the graphon $\bbW$. 
Consider a sequence of graph signals $\{(\bbG_n,\bbx_n)\}$ 
and let $\{\pi_n\} \in \ccalP$ be a sequence of permutations such that $\{(\bbG_n,\bbx_n)\} \to (\bbW,X)$ in the sense of Def. \ref{defn:graph_signal_conv}.
Then, $\bby_n = \bbH(\pi_n(\bbS_n))\pi_n(\bbx_n)$ converges to $Y = T_\bbH X$.
\end{theorem}


\begin{table}[]
\centering
\caption{\blue{Table summary of GSP and WSP.}}
\begin{tabular}{l|ccc}
\hline
& \blue{Graph}  & \blue{Graphon}  & \blue{Convergence result}   \\ \hline
\blue{Signal}   & \blue{$(\bbG,\bbx)$}  & \blue{$(\bbW,X)$}  & \blue{Def. \ref{defn:graph_signal_conv}} \\ 
\blue{FT} & \blue{$\hbx$ (eq. \eqref{eqn:gft})}        & \blue{$\hat{X}$ (Def. \ref{defn:wft})} & \blue{Thm. \ref{thm:wft}}   \\
\blue{Filter}            & \blue{$\bbH(\bbS)$ (eq. \eqref{eqn:lsi-gf})} & \blue{$T_\bbH$ (Def. \ref{defn:linear})}     & \blue{Thm. \ref{thm:non_disc2}} \\ \hline
\end{tabular}
\label{tb:summary}
\end{table}
\begin{proof}
To prove convergence of the $(\bbG_n,\bby_n)$ to $(\bbW,Y)$, we consider the graphon signals $(\bbW_{\pi_n(\bbG_n)}, X_{\pi_n(\bbG_n)})$ induced by the graph signals $(\pi_n(\bbG_n),\pi_n(\bbx_n))$. The spectral properties of these signals and of the corresponding graph signals are related through Lemma \ref{T:induced_graphon}. To simplify notation, we once again leave the dependence on $\pi_n(\bbG_n)$ implicit and write $\bbW_n = \bbW_{\pi_n(\bbG_n)}$ and $X_n = X_{\pi_n(\bbG_n)}$. We also denote the induced graphon eigenvalues $\lambda^n_i = \lambda_i(T_{\bbW_n})$. Recall that these are given by $\lambda^n_i=\lambda_i(\bbS_n)/n$ per Lemma \ref{T:induced_graphon}. 

Without loss of generality, consider the normalized filter function $\bar{h}(\lambda) = h(\lambda)/ \sup_{\lambda \in [0,1]} |h(\lambda)|$. The signal $(\bbW, Y)$ obtained by applying $T_{\bar{\bbH}}$ to $(\bbW,X)$ can be written as
\begin{equation} \label{eqn:spectral_proof1}
Y(v) = \sum_{i \in \mbZ \setminus\{0\}} \bar{h}(\lambda_i) \hat{X}(\lambda_i) \varphi_i(v)
\end{equation}
and $(\bbW_n,Y_n)$, which is induced by $\bby_n = \bar{\bbH}(\bbS_n)\bbx_n$, as
\begin{equation} \label{eqn:spectral_proof2}
Y_n(v) = \sum_{i \in \mbZ \setminus\{0\}} \bar{h}(\lambda^n_i) \hat{X}_n(\lambda^n_i) \varphi_i(T_{\bbW_n})(v)\ .
\end{equation}
The dependence of the eigenfunctions $\varphi_i(T_{\bbW_n})$ on $T_{\bbW_n}$ is made explicit to distinguish them from $\varphi_i$, the eigenvalues of $T_\bbW$.

To show that the $(\bbW_n, Y_n)$ converge to $(\bbW, Y)$, we start by writing their norm difference using \eqref{eqn:spectral_proof1} and \eqref{eqn:spectral_proof2},
\begin{align} \label{eqn:spectral_proof3}
\begin{split}
\|Y - &Y_n\| = \\ 
&\left\|\sum_{i \in \mbZ \setminus\{0\}} \bar{h}(\lambda_i) \hat{X}(\lambda_i) \varphi_i - 
\sum_{i \in \mbZ \setminus\{0\}} \bar{h}(\lambda_i^n) \hat{X}_n(\lambda_i^n) \varphi_i(T_{\bbW_n})\right\| \text{.}
\end{split} 
\end{align}

Defining the set $\ccalC = \{i\ |\ |\lambda_i| \geq c\}$ for $c={(1-|\bar{h}_0|)}/{L(2\|X\|\epsilon^{-1}+1)}$ with  $\epsilon > 0$ and $\bar{h}_{0}= \bar{h}(0)$, these sums can be split up between $i \in \ccalC$ and $i \notin \ccalC$ to yield
\begin{equation} \label{eqn:spectral_proof4}
\begin{split}
    &\left\| \sum_{i \in \mbZ \setminus\{0\}} \bar{h}(\lambda_i) \hat{X}(\lambda_i) \varphi_i - \sum_{i \in \mbZ \setminus\{0\}} \bar{h}(\lambda_i^n) \hat{X}_n(\lambda_i^n) \varphi_i(T_{\bbW_n})\right\| \\
    &\leq \left\|\sum_{i \in \ccalC} \bar{h}(\lambda_i) \hat{X}(\lambda_i) \varphi_i - \sum_{i \in \ccalC} \bar{h}(\lambda_i^n) \hat{X}_n(\lambda_i^n) \varphi_i(T_{\bbW_n}) \right\|\ \mbox{\textbf{(i)} } \\ 
    &+ \left\|\sum_{i \notin \ccalC} \bar{h}(\lambda_i) \hat{X}(\lambda_i) \varphi_i - \sum_{i \notin \ccalC} \bar{h}(\lambda_i^n) \hat{X}_n(\lambda_i^n) \varphi_i(T_{\bbW_n}) \right\|\ \mbox{\textbf{(ii)}}\ .
    \end{split}
\end{equation}

Note that \textbf{(i)} corresponds to the difference between two bandlimited  graphon signals. By Cor. \ref{shift_filter_response}, there exists $n_0$ such that, for all $n > n_0$,
\begin{equation} \label{eqn:spectral_proof5}
\left\|\sum_{i \in \ccalC} \bar{h}(\lambda_i) \hat{X}(\lambda_i) \varphi_i - \sum_{i \in \ccalC} \bar{h}(\lambda_i^n) \hat{X}_n(\lambda_i^n) \varphi_i(T_{\bbW_n}) \right\| < \epsilon\ .
\end{equation}

For \textbf{(ii)}, we use the filter's Lipschitz property and Cauchy-Schwarz to write
\begin{equation} \label{eqn:spectral_proof6}
\begin{split}
&\left\|\sum_{i \notin \ccalC} \bar{h}(\lambda_i) \hat{X}(\lambda_i) \varphi_i - \sum_{i \notin \ccalC} \bar{h}(\lambda_i^n) \hat{X}_n(\lambda_i^n) \varphi_i(T_{\bbW_n}) \right\|\  \\
&\leq \left\|\sum_{i \notin \ccalC} (\bar{h}_{0}+Lc) \hat{X}(\lambda_i) \varphi_i - \sum_{i \notin \ccalC} (\bar{h}_{0}-Lc) \hat{X}_n(\lambda_i^n) \varphi_i(T_{\bbW_n}) \right\|\
\\
&\leq |\bar{h}_{0}|\left\|\sum_{i \notin \ccalC} \left[\hat{X}(\lambda_i) \varphi_i - \hat{X}_n(\lambda_i^n)  \varphi_i(T_{\bbW_n})\right]\right\| \\
 &+ Lc\left\|\sum_{i \notin \ccalC} \hat{X}(\lambda_i) \varphi_i \right\| + Lc  \left\|\sum_{i \notin \ccalC} \hat{X}_n(\lambda_i^n) \varphi_i(T_{\bbW_n}) \right\| 
\end{split}
\end{equation}
where the last inequality follows from the triangle inequality.

Because $\{\varphi_i\}$ and $\{\varphi_i(T_{\bbW_n})\}$ form complete bases of $L^2$, $\sum_{i \notin \ccalC} \hat{X}(\lambda_i)\varphi_i$ and $\sum_{i \notin \ccalC} \hat{X}_n(\lambda_i^n)\varphi_i(T_{\bbW_n})$ can be written as
\begin{align} \label{eqn:identities1}
&\sum_{i \notin \ccalC} \hat{X}(\lambda_i)\varphi_i = X - \sum_{i \in \ccalC} \hat{X}(\lambda_i)\varphi_i \quad \mbox{and} \\ \label{eqn:identities2}
&\sum_{i \notin \ccalC} \hat{X}_n(\lambda_i^n)\varphi_i(T_{\bbW_n}) = X_n - \sum_{i \in \ccalC} \hat{X}(\lambda_i^n)\varphi_i(T_{\bbW_n})
\end{align}
i.e., as the difference between the input signal and a bandlimited signal.
Using these identities and the triangle inequality, we leverage the fact that $X_n\to X$ in $L^2$ and apply Thm. \ref{thm:wft} to show that there exists $n_1$ such that, for all $n > n_1$,

\begin{align} \label{eqn:spectral_proof7}
\begin{split}
&\left\|\sum_{i \notin \ccalC} \hat{X}(\lambda_i) \varphi_i - \hat{X}_n(\lambda_i^n)  \varphi_i(T_{\bbW_n})\right\| \\
&\leq \left\|X-X_n\right\| + \left\|\sum_{i \in \ccalC} \hat{X}_n(\lambda_i^n)  \varphi_i(T_{\bbW_n}) -  \hat{X}(\lambda_i) \varphi_i\right\| < \epsilon\ .
\end{split}
\end{align}

As for $\|\sum_{i \notin \ccalC} \hat{X}_n(\lambda_i^n) \varphi_i(T_{\bbW_n})\|$, we can use the identities in \eqref{eqn:identities1} and \eqref{eqn:identities2} together with the triangle inequality to write
\begin{align*}
\begin{split}
\left\|\sum_{i \notin \ccalC} \hat{X}_n(\lambda_i^n) \varphi_i(T_{\bbW_n})\right\| \leq \left\|X_n - X\right\| + \left\|\sum_{i \notin \ccalC} \hat{X}(\lambda_i)\varphi_i\right\| \\
+ \left\|\sum_{i \in \ccalC} \hat{X}(\lambda_i)\varphi_i- \sum_{i \in \ccalC} \hat{X}_n(\lambda_i^n)\varphi_i(T_{\bbW_n})\right\|\ .
\end{split}
\end{align*}
From Thm. \ref{thm:wft} and the fact that $X_n\to X$ in $L^2$,
\begin{align} \label{eqn:spectral_proof7.5}
\begin{split}
\left\|\sum_{i \notin \ccalC} \hat{X}_n(\lambda_i^n) \varphi_i(T_{\bbW_n})\right\|
\leq \epsilon + \left\|\sum_{i \notin \ccalC} \hat{X}(\lambda_i)\varphi_i\right\| \mbox{ for } n>n_1.
\end{split}
\end{align}

Applying the Cauchy-Schwarz and triangle inequalities and substituting \eqref{eqn:spectral_proof7} and \eqref{eqn:spectral_proof7.5} in \eqref{eqn:spectral_proof6}, we arrive at a bound for \textbf{(ii)},
\begin{equation} \label{eqn:spectral_proof8}
\begin{split}
\left\|\sum_{i \notin \ccalC} \bar{h}(\lambda_i) \hat{X}(\lambda_i) \varphi_i - \sum_{i \notin \ccalC} \bar{h}(\lambda_i^n) \hat{X}_n(\lambda_i^n) \varphi_i(T_{\bbW_n})\right\| \\
\leq (|\bar{h}_{0}|+Lc)\epsilon + 2Lc \left\|\sum_{i \notin \ccalC} \hat{X}(\lambda_i) \varphi_i\right\| \\
\leq (|\bar{h}_{0}|+Lc) \epsilon + 2Lc\|X\| = \epsilon\ .
\end{split}
\end{equation}
Putting \eqref{eqn:spectral_proof5} and \eqref{eqn:spectral_proof8} together, we have thus proved that for all $n > \max{\{n_0,n_1\}}$, $\|Y-Y_n\| < 2\epsilon$, i.e., the output of $\bar{\bbH}(\bbS_n)$ converges to the output of $T_{\bar{\bbH}}$ in the vertex domain.
\end{proof}

Thm. \ref{thm:non_disc} broadens the scope of Cor. \ref{shift_filter_response} by extending the filter response convergence result to sequences of graph signals converging to \textit{generic} finite energy graphon signals that are not necessarily bandlimited. The Lipschitz condition on the filter $h$ allows bounding the variability of the filter response for signal components associated with eigenvalues smaller than some threshold $c \in [0,1]$, which can be made arbitrarily small [cf. Fig. \ref{fig:filter_response}].

Thm. \ref{thm:non_disc} can be further generalized to \textit{any graphon} as opposed to only non-derogatory ones. The difference in the case of derogatory graphons is that the WFT cannot be defined, so Thm. \ref{thm:wft} cannot be used in the proof of Thm. \ref{thm:non_disc2}. The proof argument needed in this case is therefore slightly different. However, this is extenuated by Prop. \ref{subspace_conv}. As long as eigengaps between adjacent graphon eigenspaces can be defined, this proposition ensures convergence not only of the eigenvectors, but also of the finite-dimensional eigenspaces associated with the repeated eigenvalues of an arbitrary graphon.

\begin{proposition}[Graphon subspace convergence] \label{subspace_conv}
Let $\{\bbG_n\}$ be a sequence of graphs with eigenvalues $\lambda_i(\bbS_n)$ converging to the graphon $\bbW$ with eigenvalues $\lambda_i$. 
If a given $\lambda_i$ has multiplicity $m_i$ and $\lambda_{i_k}^n = \lambda_{i_k}(\bbS_n)/n$ are the eigenvalues of $\bbW_{\bbG_n}$ (i.e., of the graphon induced by $\bbG_n$) converging to $\lambda_i$ [cf. Lemma \ref{T:induced_graphon}], then there exists a sequence of permutations $\{\pi_n\} \in \ccalP$ such that
\begin{equation*}
\vertiii{E_{T_{\bbW_{\pi_n\left(\bbG_n\right)}}}({\{\lambda_{i_k}^n\}}) - E_{T_\bbW}({\lambda_i})} \to 0
\end{equation*}
where $\ccalP$ is the set of admissible permutation sequences for the sequence $\{\bbG_n\}$ (Def. \ref{def:permutation_sequence}) and $E_{T}(\Lambda)$ is the projection operator onto the subspace associated with the eigenvalues in the set $\Lambda$ of the operator $T$.
\end{proposition}
\begin{proof}
Refer to Appendix \ref{appx:prop2}.
\end{proof}

With Prop. \ref{subspace_conv}, we are now equipped to state our most general result: vertex domain convergence of Lipschitz continuous graph filters for graph sequences converging to arbitrary graphons. This result is presented in Thm. \ref{thm:non_disc2}. We defer the proof to the appendices.

\begin{theorem}[Convergence of filter response for Lipschitz continuous graph filters] \label{thm:non_disc2}
Let $\{\bbG_n\}$ be a sequence of graphs converging to the graphon $\bbW$. Let $\bbH_n(\bbS_n) = \bbV_n h(\bbLam (\bbS_n) /n)\bbV_n^\Hr$ be a sequence of filters on the graphs $\{\bbG_n\}$, and let $(T_{\bbH}X)(v)=\sum_{i \in \mbZ \setminus \{0\}} h(\lambda_i)\hat{X}(\lambda_i)\varphi_i(v)$ be a filter on the graphon $\bbW$. Consider a sequence of graph signals $\{(\bbG_n,\bbx_n)\}$ 
and let $\{\pi_n\} \in \ccalP$ be a sequence of permutations such that $\{(\bbG_n,\bbx_n)\} \to (\bbW,X)$ in the sense of Def. \ref{defn:graph_signal_conv}.
Then, $\bby_n = \bbH(\pi_n(\bbS_n))\pi_n(\bbx_n)$ converges to $Y = T_\bbH X$.
\end{theorem}
\begin{proof} Refer to Appendix \ref{appx:thm4}.
\end{proof}

\blue{
The main takeaway from Thms. \ref{thm:non_disc} and \ref{thm:non_disc2} is that, if the limit graphon is known, we can trade the design of multiple filters in different graphs by the centralized design of a single graphon filter from which graph filters can then be sampled.
In practice, a more relevant implication of these theorems is that graph filters can be \textit{transferred} across graphs associated with the same graphon.
The ability to transfer graph filters is especially important when graphs are large or dynamic, as the operations involved in designing filters for these graphs can come out costly. This property is also inherited by graph neural networks (GNNs) based on these graph filters \cite{ruiz20-transf}. Transferability of GNNs has been demonstrated empirically in a number of applications \cite{eisen2019optimal,tolstaya2020learning}, and is formally characterized in \cite{ruiz2020gnns}, where transferability bounds are derived for both GNNs and graph filters.
Transferability of graph filters will also be illustrated in the numerical experiments of Sec. \ref{sec:sims}.

\blue{
\begin{remark}
Note that, while the results presented in Thms. \ref{thm:wft}--\ref{thm:non_disc2} may appear intuitive, their proofs are not. For instance, our Fourier convergence theorem (Thm. \ref{thm:wft}) requires that the graph and graphon signals be bandlimited for the GFT to converge to the WFT. This is in constrast to classical signal processing, where for any convergent sequence of length-$n$ discrete time signals on $[0,1]$ the discrete Fourier transform (DFT) converges to the Fourier transform (FT) regardless of the underlying spectral properties. This occurs because the regular line graphs underlying these signals have spectra that are evenly distributed on $[-1,1]$ and therefore never accumulate around zero. 
Unexpectedly, however, these conditions are not needed to show convergence of graph filter outputs. 
Indeed, while one would expect that graph filter outputs converge only for bandlimited signals, this is not the case in Thms. \ref{thm:non_disc}--\ref{thm:non_disc2}. Instead, these theorems require the filter to be Lipschitz for $|\lambda| < c$ [cf. Fig. \ref{fig:filter_response}]. 
This arises from the fact that, for small $\lambda$, the graph eigenspaces can become
hard to match to the corresponding graphon eigenspaces since the eigenvalues of the latter accumulate near zero. We can therefore replace bandlimitedness by a filter regularity condition.
\end{remark}
}


\section{Numerical Experiments} \label{sec:sims}



\begin{figure}
  \centering
  \includegraphics[width=0.92\columnwidth]{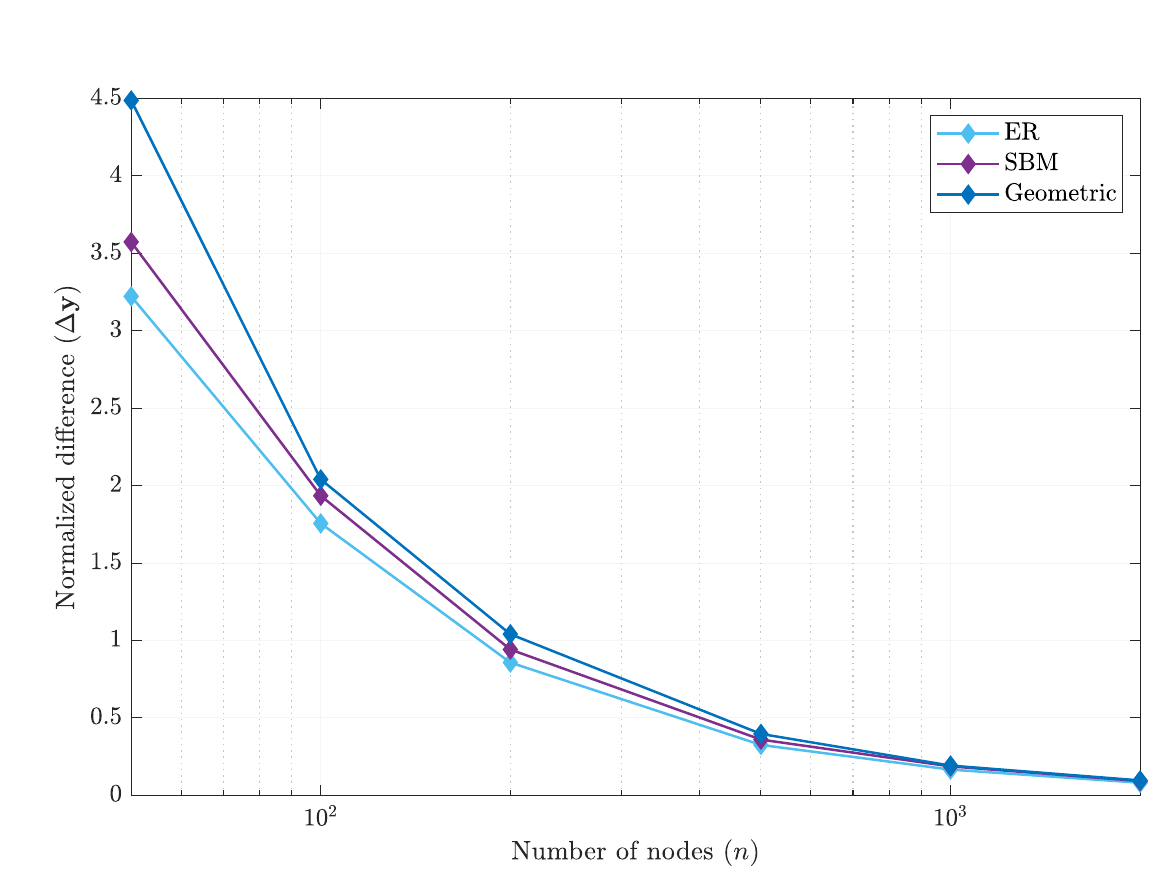}
  \caption{Norm difference between GMRF graphon signals diffused on ER, SBM and geometric graphons and the corresponding graph signals diffused on sample graphs of increasing size. The diffusion outputs have been normalized by $n$.}
  \label{fig:delta_y}
\end{figure}

In this section, we present three numerical experiments to illustrate the results of Thms. \ref{thm:wft} through \ref{thm:non_disc2}. In the first, we sample graph signals from a Gaussian Markov Random Field (GMRF) on ER, SBM and random geometric sensor networks and compare the output of a graph diffusion process as the number of sensors increases. In the second, we compare the WFT of pollutant dispersion signals drawn from the same model on two geometric graphs corresponding to pollution sensor networks in different cities. Finally, in Sec. \ref{sbs:movie} a linear graph filter is optimized to predict movie ratings on a small user network and is then applied to a large one.

\subsection{GMRF diffusion (\textbf{S1})} \label{sbs:gmrf}

In this experiment, we simulate a GMRF measured and diffused on different sensor networks to analyze convergence of the filter $\bbH(\bbS)=\bbS$ in networks of growing size. A graph signal $(\bbG,\bbx)$ is a GMRF on $\bbG$ if $\bbx \sim \ccalN(\bbmu_\bbx, \bbSigma_\bbx)$ and $\bbSigma_\bbx$ is given by \cite{gama2019ergodicity}
\begin{equation} \label{eqn:gmrf_cov}
\bbSigma_\bbx = |a_0|^2(\bbI - a\bbS)^{-1}[(\bbI - a\bbS)^{-1}]^\Hr
\end{equation}
where the covariance matrix is calculated after sampling $\bbG$ from a random graph model for the sensor network, from which we obtain $\bbS$. Three graphons are considered. They are an Erd\"os-R\'enyi (ER) [cf. Fig. \ref{er_graphon}], a stochastic block model (SBM) [cf. Fig. \ref{sbm1}], and a soft random geometric graph [cf. Fig. \ref{exp}]. Their expressions are presented in Table \ref{tbl:W}.

To compare the diffusion outcomes of graph and graphon signals, we first need to define a graphon signal equivalent of the GMRF. We work with its approximation, which is obtained by approximating the graphons as matrices $\bbS_\bbW$. These matrices are calculated by evaluating $\bbW(u_i,u_j)$ on $10^4 \times 10^4$ regularly spaced points of the unit square. Then, the graphon GMRF is obtained by sampling $\bbx_\bbW \in \reals^{10^4}$ from the zero-mean multivariate Gaussian with covariance matrix given by \eqref{eqn:gmrf_cov} for $\bbS=\bbS_\bbW$. 

In order to observe convergence, we compare the outcome of the diffusion of the graphon GMRF with the outcome of the diffusion of a $n$-node graph signal sampled from it for increasing $n$.
This is done by uniformly sampling points $\{u_i\}_{i=1}^n$ from the unit line and generating graphs $\bbG_n$ where the edges $(i,j)$ are Bernoulli random variables with success probability $\bbW(u_i,u_j)$, i.e, $[\bbS_n]_{ij} = [\bbS_n]_{ji} \sim \mbox{Bernoulli}(\bbW(u_i,u_j))$. The graph signals $\bbx_n$ are obtained by interpolating $\bbx_\bbW$ at each $u_i$.

We calculate the diffused graph signals $\bby_n = \bbS_n\bbx_n$ and interpolate the approximation of the diffused graphon signal $\bby_\bbW = \bbS_\bbW\bbx_\bbW$ at $\{u_i\}_{i=1}^n$, then compare them by computing their norm difference for increasing values of $n$. The average normalized norm difference is shown in Fig. \ref{fig:delta_y} for $100$ realizations of the graphon GMRF $\bbx_\bbW$. We observe that, for all graphon models, the norm differences decrease with $n$. This indicates that the vertex response of $\bbH(\bbS) = \bbS$ converges as the graphs $\bbG_n$ grow, as expected from Thm. \ref{thm:non_disc}.

\subsection{Spectral analysis of air pollution on sensor networks (\textbf{S2})} 

The objective of this experiment is to compare the spectral representations of air pollution signals collected at the nodes of two distinct sensor networks of same size to illustrate GFT convergence (Thm. \ref{thm:wft}). This problem can be interpreted as comparing the spectra of graph pollution data in two cities, for instance, New York and Philadelphia.  The air pollution sensor networks are modeled as soft random geometric graphs \cite{penrose2016connectivity} where, given nodes $i$ and $j$ and their coordinates $(x_i,y_i)$ and $(x_j,y_j)$, the probability of connecting $i$ and $j$ is
\begin{equation} \label{eqn:srgg}
p(i,j) \propto \exp \bigg(-\beta \sqrt{(x_i-x_j)^2 + (y_i-y_j)^2}\bigg)\ .
\end{equation}
Fixing the $x$ coordinate at $x_i = x_j = x$ and normalizing $y$ as $u = y/y_{\mbox{{\tiny max}}}$, we can rewrite $p(i,j)$ to fit the expression of the graphon $\bbW(u_i,u_j) = \exp (-\beta\sqrt{(u_i-u_j)^2})$. 

In the cross-wind direction and at fixed altitude, the simplest model for air pollution dispersion is a Gaussian on the distance to the source of pollution in the cross-wind direction. Having fixed $x$, we assume the cross-wind direction to be $y$. The air pollution dispersion model is then
\begin{equation*}
\bbs(y) \propto \exp \bigg(-\dfrac{(y-y_{\mbox{{\tiny source}}})^2}{2\sigma^2}\bigg),
\end{equation*}
where $s(y)$ is the concentration of pollutants at the coordinate $y$ and the variance $\sigma^2$ represents the cross-wind mixing \cite[Chapter 9]{arya1999air}. If we assume $y_{\mbox{{\tiny source}}} = 0$ and once again normalize $y$ as $u = y/y_{\mbox{{\tiny max}}}$,  this dispersion model can be interpreted as a signal $X(u) \propto \exp(-u^2 / 2 \sigma^2)$ on the graphon associated with the soft random geometric graph model of the sensor networks.

\begin{figure}
  \centering
  \includegraphics[width=0.92\columnwidth]{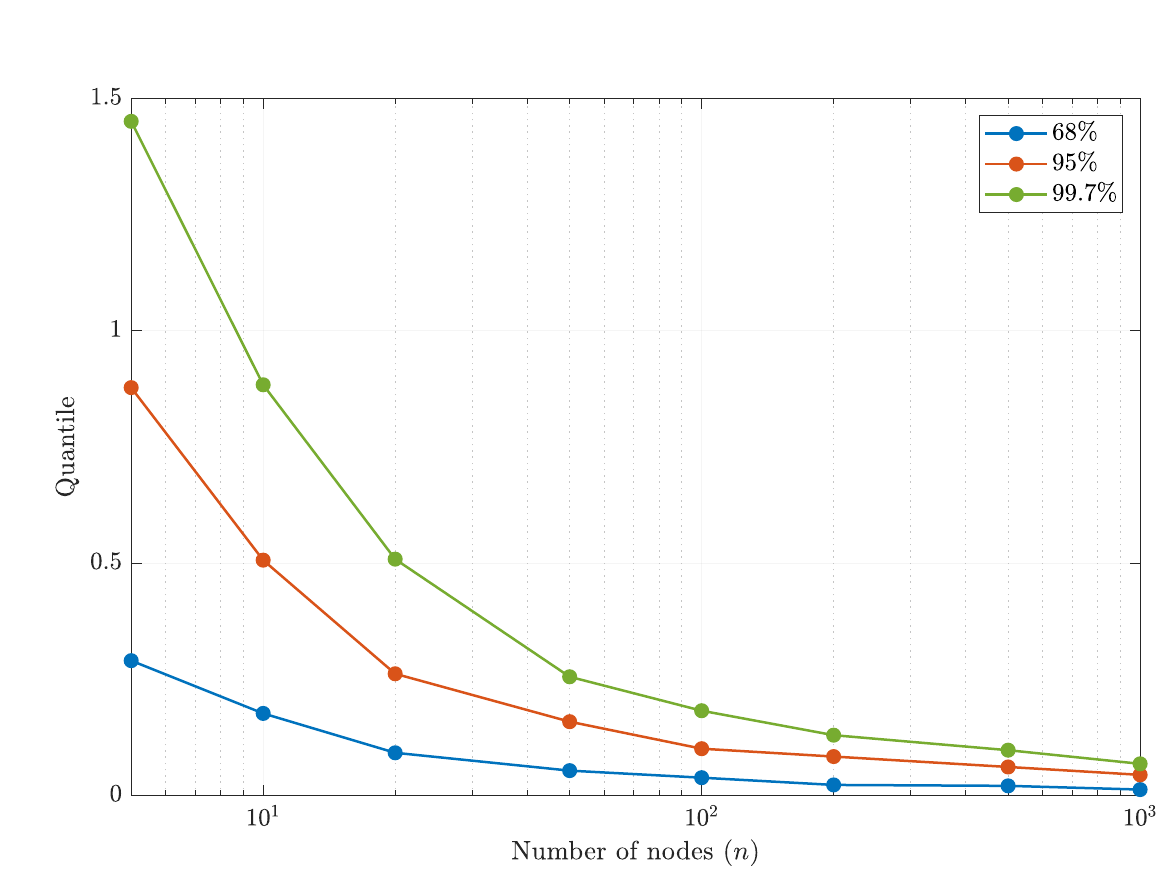}
  \caption{Quantiles ($68\%, 95\%, 99.7\%$) of the minimum normalized difference between GFTs of air pollution signals on graphs drawn from the same geometric model ($\bbG_1$ and $\bbG_2$) for $n=5,10,20,50,100,200,500,1000$, over 50 iterations for each $n$.}
  \label{fig:sensor_net}
\end{figure}

\begin{table}
\centering
\caption{Expression of $\bbW(u_i,u_j)$ for the different graphon models in Sec. \ref{sbs:gmrf}.}
\begin{tabular}{l|c} \hline
Model & Expression of $\bbW(u_i,u_j)$ \\ \hline
ER & $ = 0.4$ for all $u_i,u_j$ \\ \hline
SBM & $
    = 
\begin{cases}
    0.8 ,& \text{if } u_i, u_j \leq 0.5 \text{ or }u_i, u_j > 0.5\\
    0.2,              & \text{otherwise}
\end{cases}
$ \\ \hline
Geom. & $= \exp (-\beta (u_i-u_j)^2 )$, $\beta = 2.3$ \\ \hline
\end{tabular}
\label{tbl:W}
\end{table}

For multiple values of $n$ and using coordinates $\{u_i^{(1)}\}_{i=1}^n$ and $\{u_i^{(2)}\}_{i=1}^n$ sampled uniformly at random from the unit line, we sample two distinct $n$-node graphs  $\bbG_1$ and $\bbG_2$ from \eqref{eqn:srgg}.
In each of these graphs, the graph signals are the pollutant concentrations $[\bbs_1]_i = s(u^{(1)}_i)$ and $[\bbs_2]_i = s(u^{(2)}_i)$. 
We then compute the GFTs $\hbs_1$ and $\hbs_2$, and sort them to find the minimum norm difference $\min \|\hbs_1 - \hbs_2\|$ over different permutations of the labels of these graphs.
After repeating the experiment 50 times for each $n$ in $n = 5, 10, 20, 50, 100, 200, 500, 1000$, we graph the $68\%$, $95\%$ and $99.7\%$ quantile curves of the GFT norm difference (normalized by $\|\hbs_1\|$) in Fig. \ref{fig:sensor_net}. All confidence intervals shrink consistently around the mean as $n$ increases, indicating that the GFTs of the air pollution signals in $\bbG_1$ and $\bbG_2$ indeed converge as expected from Thm. \ref{thm:wft}.

\subsection{Movie rating prediction via user-based graph filtering (\textbf{S3})} \label{sbs:movie}

Given $U$ users and $M$ movies, movie rating prediction consists of completing a $U \times M$ incomplete rating matrix by predicting the ratings users would give to movies that they have not yet rated. We interpret this problem as a GSP problem by considering movie ratings (i.e., the columns of the rating matrix) to be graph signals on a network connecting similar users. A number of graph-based models for movie rating prediction have been proposed in the literature \cite{weiyu18-movie,ruiz19-inv,monti17-movie}. We consider one of the methods in \cite{weiyu18-movie}, which completes the rating matrix by first solving an optimization problem to obtain the optimal coefficients of a linear graph filter, and then applying it to the graph signals corresponding to each movie's rating vector on the user network. 
Our objective is to calculate this graph filter in subnetworks corresponding to small cohorts of users, and observe how well it generalizes when applied to the full user network.

\begin{figure}
  \centering
  \includegraphics[width=0.44\columnwidth]{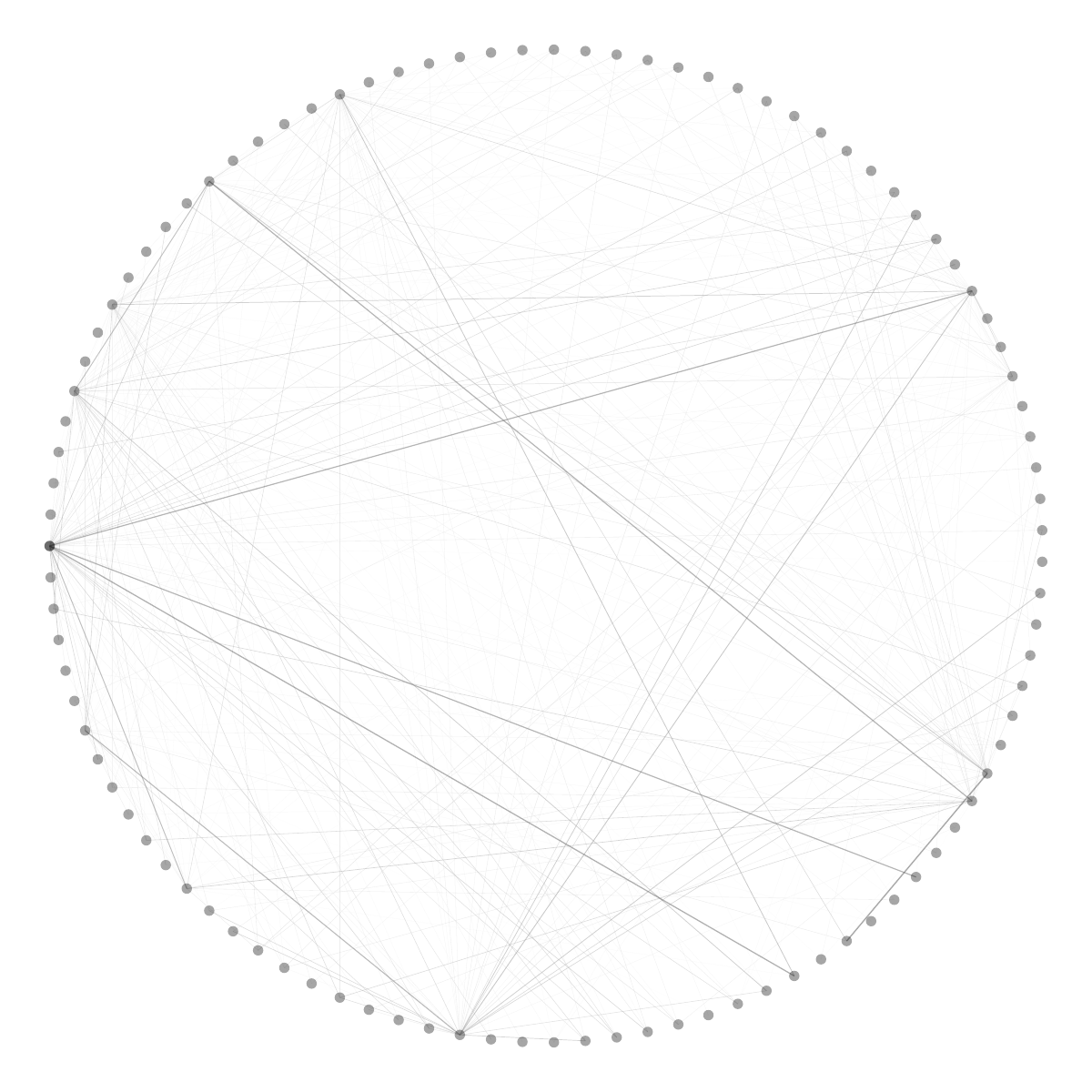}
  \includegraphics[width=0.44\columnwidth]{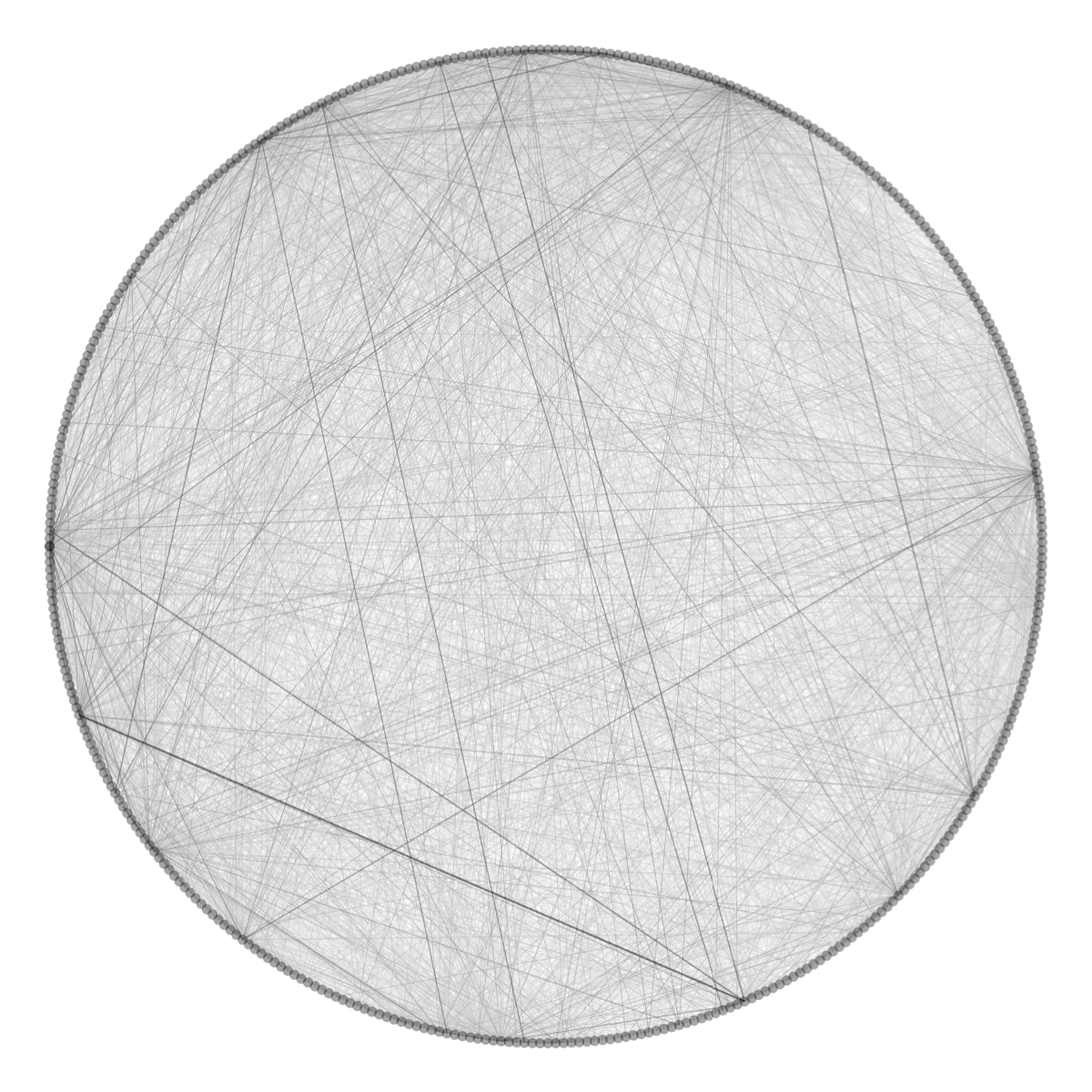}
  \caption{User networks built from the ratings of $100$ (left) ad $400$ (right) users in the MovieLens 100k dataset. The signals on these graphs correspond to the ratings given to the movie ``Toy Story''. The darker the node, the higher the rating, and the darker the edge, the higher the rating difference between the endnodes.}
  \label{fig:movie}
\end{figure}

The dataset we use is the MovieLens 100k dataset \cite{harper16-movielens}, which contains 100,000 ratings by $U=943$ to $M=1582$ movies. The user similarity network is built from the data  by computing pairwise correlations from ratings given by each pair of users to movies that they both have rated and, then, keeping only the top-40 nearest neighbors to each user. Although these are networks built from real data, i.e, to which we cannot attribute a common generative model or graphon, the goal of this section is to illustrate how our results can be implicitly  observed even in graphs that are not related by a common probability model, but that are ``similar'' in some other empirical or statistical sense. This is illustrated in Fig. \ref{fig:movie}, where user networks with $100$ and $400$ users are depicted. Even if the user network on the right has 4 times more users than the one on the left, we can see that the large-scale structure of these networks is similar.

The coefficients of filters with $K=1$, $2$ and $3$ filter taps are optimized on networks of size $50, 100, 200, 400, 600, 800$ and $943$ nodes. We then compare the RMSE obtained by predicting ratings using the filters calculated on the smaller networks and the filters calculated on the full user network. The relative RMSE differences and the base RMSE (obtained from the filter calculated on the full user network) are shown in Table \ref{table:movie}. For a network with $n$ users, the reported RMSE difference corresponds to that of the average among filters trained on $\lfloor 943/n \rfloor$ different networks. Users were picked at random. We observe that, for all $K$, the RMSE difference gets steadily smaller as the network size increases. In particular, for $K=1$ and $K=3$ the relative RMSE difference is less than $1\%$ for filters obtained on networks with under half the number of total users in the dataset.





\section{Conclusions} \label{sec:conclusions}


\begin{table}[t]
 \caption{Relative RMSE difference for rating prediction based on $K=1,2,3$ filters obtained on $50, 100, 200, 400, 600$ and $800$-user networks, with respect to the base RMSE of the same filters obtained on the full $943$-user network. }
\begin{adjustbox}{width=\columnwidth,center}
\centering
 \begin{tabular}{c|cccccc|c} 
 \hline
  {} & \multicolumn{6}{c|}{Number of users} & {}\\
 K & $50$ & $100$ & $200$ & $400$ & $600$ & $800$ & Base\\
 \hline
 $1$ & $9.70\%$ & $4.70\%$ & $1.90\%$ & $0.45\%$ & $0.17\%$ & $0.04\%$ & $\mathbf{0.77}$\\ 
 $2$ & $22.30\%$ & $20.47\%$ & $14.42\%$ & $5.48\%$ & $2.22\%$ & $0.37\%$ & $\mathbf{0.72}$\\ 
 $3$ & $28.17\%$ & $13.58\%$ & $3.47\%$ & $0.32\%$ & $0.41\%$ & $-0.12\%$ & $\mathbf{0.65}$\\ 
 \hline
 \end{tabular}
 \end{adjustbox}
 \label{table:movie}
\end{table}


We have proposed a novel graphon signal processing framework which simplifies the analysis of signals and the design of filters on very large and dynamic networks. This framework introduces graphon signals, the graphon Fourier transform and LSI graphon filters. We have shown that graphon filters and the WFT are the limit objects of graph filters and of the GFT. These results justify transferring signal analysis methods and information processing systems from graphs to graphons or between graphs associated with the same graphon. GFT and graph filter convergence were demonstrated in two experiments involving graphs drawn from the same graphon, and, in a third experiment, we illustrate how graph filter behavior can be \textit{transferred} even in situations where graphs are built from model-free data and can only empirically or statistically be said to belong to the same ``class''. 


\bibliographystyle{IEEEtran}
\bibliography{myIEEEabrv,bib-graphon}

\begin{thebibliography}{10}
\providecommand{\url}[1]{#1}
\csname url@samestyle\endcsname
\providecommand{\newblock}{\relax}
\providecommand{\bibinfo}[2]{#2}
\providecommand{\BIBentrySTDinterwordspacing}{\spaceskip=0pt\relax}
\providecommand{\BIBentryALTinterwordstretchfactor}{4}
\providecommand{\BIBentryALTinterwordspacing}{\spaceskip=\fontdimen2\font plus
\BIBentryALTinterwordstretchfactor\fontdimen3\font minus
  \fontdimen4\font\relax}
\providecommand{\BIBforeignlanguage}[2]{{%
\expandafter\ifx\csname l@#1\endcsname\relax
\typeout{** WARNING: IEEEtran.bst: No hyphenation pattern has been}%
\typeout{** loaded for the language `#1'. Using the pattern for}%
\typeout{** the default language instead.}%
\else
\language=\csname l@#1\endcsname
\fi
#2}}
\providecommand{\BIBdecl}{\relax}
\BIBdecl

\bibitem{ruiz2019graphon}
\BIBentryALTinterwordspacing
L.~Ruiz, L.~F.~O. Chamon, and A.~Ribeiro, ``The {Graphon} {Fourier}
  {Transform},'' \emph{arXiv:1910.10195v2 [eess.SP]}, 8 Nov. 2019. [Online].
  Available: \url{http://arxiv.org/abs/1910.10195}
\BIBentrySTDinterwordspacing

\bibitem{shuman13-mag}
D.~I. Shuman, S.~K. Narang, P.~Frossard, A.~Ortega, and P.~Vandergheynst, ``The
  emerging field of signal processing on graphs: Extending high-dimensional
  data analysis to networks and other irregular domains,'' \emph{{IEEE} Signal
  Process. Mag.}, vol.~30, no.~3, pp. 83--98, May 2013.

\bibitem{sandryhaila14-freq}
A.~Sandryhaila and J.~M.~F. Moura, ``Discrete signal processing on graphs:
  Frequency analysis,'' \emph{{IEEE} Trans. Signal Process.}, vol.~62, no.~12,
  pp. 3042--3054, June 2014.

\bibitem{ortega2018graph}
A.~Ortega, P.~Frossard, J.~Kova{\v{c}}evi{\'c}, J.~M.~F. Moura, and
  P.~Vandergheynst, ``Graph signal processing: Overview, challenges, and
  applications,'' \emph{Proceedings of the IEEE}, vol. 106, no.~5, pp.
  808--828, 2018.

\bibitem{segarra17-linear}
S.~Segarra, A.~G.~Marques, and A.~Ribeiro, ``Optimal graph-filter design and
  applications to distributed linear network operators,'' \emph{{IEEE} Trans.
  Signal Process.}, vol.~65, no.~15, pp. 4117--4131, Aug. 2017.

\bibitem{sandryhaila2013discrete}
A.~Sandryhaila and J.~M.~F. Moura, ``Discrete signal processing on graphs:
  Graph filters,'' in \emph{{IEEE} Int. Conf. Acoust., Speech and Signal
  Process.}\hskip 1em plus 0.5em minus 0.4em\relax IEEE, 2013, pp. 6163--6166.

\bibitem{gama18-gnnarchit}
F.~Gama, A.~G.~Marques, G.~Leus, and A.~Ribeiro, ``Convolutional neural network
  architectures for signals supported on graphs,'' \emph{{IEEE} Trans. Signal
  Process.}, vol.~67, no.~4, pp. 1034--1049, 2018.

\bibitem{rui2016dimensionality}
L.~Rui, H.~Nejati, and N.~Cheung, ``Dimensionality reduction of brain imaging
  data using graph signal processing,'' in \emph{2016 IEEE Int. Conf. on Image
  Process.}\hskip 1em plus 0.5em minus 0.4em\relax Phoenix, AZ: IEEE, 25-28
  Sep. 2016, pp. 1329--1333.

\bibitem{chen2015discrete}
S.~Chen, R.~Varma, A.~Sandryhaila, and J.~Kovacevic, ``Discrete signal
  processing on graphs: Sampling theory,'' \emph{{IEEE} Trans. Signal
  Process.}, vol.~63, no.~24, pp. 6510--6523, 2015.

\bibitem{marques2015sampling}
A.~G. Marques, S.~Segarra, G.~Leus, and A.~Ribeiro, ``Sampling of graph signals
  with successive local aggregations,'' \emph{{IEEE} Trans. Signal Process.},
  vol.~64, no.~7, pp. 1832--1843, 2015.

\bibitem{chamon2017greedy}
L.~F.~O. Chamon and A.~Ribeiro, ``Greedy sampling of graph signals,''
  \emph{{IEEE} Trans. Signal Process.}, vol.~66, no.~1, pp. 34--47, 2017.

\bibitem{morgan1991computing}
R.~B. Morgan, ``Computing interior eigenvalues of large matrices,''
  \emph{Linear Algebra and its Applications}, vol. 154, pp. 289--309, 1991.

\bibitem{paige1971computation}
C.~C. Paige, ``The computation of eigenvalues and eigenvectors of very large
  sparse matrices,'' Ph.D. dissertation, University of London, 1971.

\bibitem{ruiz19-inv}
\BIBentryALTinterwordspacing
L.~Ruiz, F.~Gama, G.~Marques, and A.~Ribeiro, ``Invariance-preserving localized
  activation functions for graph neural networks,'' \emph{arXiv:1903.12575
  [eess.SP]}, 29 March 2019. [Online]. Available:
  \url{https://arxiv.org/abs/1903.12575}
\BIBentrySTDinterwordspacing

\bibitem{gama2019stability}
F.~Gama, J.~Bruna, and A.~Ribeiro, ``Stability properties of graph neural
  networks,'' \emph{arXiv:1905.04497 [cs.LG]}, 2019.

\bibitem{wolfe2013nonparametric}
P.~J. Wolfe and S.~C. Olhede, ``Nonparametric graphon estimation,''
  \emph{arXiv:1309.5936 [math.ST]}, 2013.

\bibitem{airoldi2013stochastic}
E.~M. Airoldi, T.~B. Costa, and S.~H. Chan, ``Stochastic blockmodel
  approximation of a graphon: Theory and consistent estimation,'' in
  \emph{Advances in Neural Information Processing Systems}, 2013, pp. 692--700.

\bibitem{de2017adaptive}
Y.~De~Castro, C.~Lacour, and T.~M.~P. Ngoc, ``Adaptive estimation of
  nonparametric geometric graphs,'' \emph{arXiv:1708.02107 [math.PR]}, 2017.

\bibitem{xu2017rates}
J.~Xu, ``Rates of convergence of spectral methods for graphon estimation,''
  \emph{arXiv:1709.03183 [stat.ML]}, 2017.

\bibitem{gao2015rate}
C.~Gao, Y.~Lu, and H.~H. e.~a. Zhou, ``Rate-optimal graphon estimation,''
  \emph{The Annals of Statistics}, vol.~43, no.~6, pp. 2624--2652, 2015.

\bibitem{xu2014edge}
J.~Xu, L.~Massouli{\'e}, and M.~Lelarge, ``Edge label inference in generalized
  stochastic block models: from spectral theory to impossibility results,'' in
  \emph{Conference on Learning Theory}, 2014, pp. 903--920.

\bibitem{gao2018graphon}
S.~Gao and P.~E. Caines, ``Graphon control of large-scale networks of linear
  systems,'' \emph{arXiv:1807.03412 [math.OC]}, 2018.

\bibitem{rohe2011spectral}
K.~Rohe, S.~Chatterjee, and B.~e.~a. Yu, ``Spectral clustering and the
  high-dimensional stochastic blockmodel,'' \emph{The Annals of Statistics},
  vol.~39, no.~4, pp. 1878--1915, 2011.

\bibitem{diao2016model}
P.~Diao, D.~Guillot, A.~Khare, and B.~Rajaratnam, ``Model-free consistency of
  graph partitioning,'' \emph{arXiv:1608.03860 [math.CO]}, 2016.

\bibitem{avella2018centrality}
M.~Avella-Medina, F.~Parise, M.~Schaub, and S.~Segarra, ``Centrality measures
  for graphons: Accounting for uncertainty in networks,'' \emph{IEEE
  Transactions on Network Science and Engineering}, 2018.

\bibitem{parise2019graphon}
F.~Parise and A.~Ozdaglar, ``Graphon games,'' in \emph{Proceedings of the 2019
  ACM Conference on Economics and Computation}.\hskip 1em plus 0.5em minus
  0.4em\relax ACM, 2019, pp. 457--458.

\bibitem{lovasz2012large}
L.~Lov{\'a}sz, \emph{Large networks and graph limits}.\hskip 1em plus 0.5em
  minus 0.4em\relax American Mathematical Society, 2012, vol.~60.

\bibitem{lovasz2006limits}
L.~Lov{\'a}sz and B.~Szegedy, ``Limits of dense graph sequences,''
  \emph{Journal of Combinatorial Theory, Series B}, vol.~96, no.~6, pp.
  933--957, 2006.

\bibitem{morency2017signal}
M.~W. Morency and G.~Leus, ``Signal processing on kernel-based random graphs,''
  in \emph{Eur. Signal Process. Conf.}\hskip 1em plus 0.5em minus 0.4em\relax
  IEEE, 2017, pp. 365--369.

\bibitem{eisen2019optimal}
M.~Eisen and A.~Ribeiro, ``Optimal wireless resource allocation with random
  edge graph neural networks,'' \emph{arXiv:1909.01865 [eess.SP]}, 2019.

\bibitem{tolstaya2020learning}
E.~Tolstaya, F.~Gama, J.~Paulos, G.~Pappas, V.~Kumar, and A.~Ribeiro,
  ``Learning decentralized controllers for robot swarms with graph neural
  networks,'' in \emph{Conference on Robot Learning}, 2020, pp. 671--682.

\bibitem{ruiz2020gnns}
L.~Ruiz, F.~Gama, and A.~Ribeiro, ``Graph neural networks: Architectures,
  stability and transferability,'' \emph{arXiv:2008.01767 [cs.LG]}, 2020.

\bibitem{elvin19-spectral}
\BIBentryALTinterwordspacing
R.~Levie, E.~Isufi, and G.~Leus, Kutyniok, ``On the transferability of spectral
  graph filters,'' \emph{arXiv:1901.10524 [cs.LG]}, 29 Jan. 2019. [Online].
  Available: \url{http://arxiv.org/abs/1901.10524}
\BIBentrySTDinterwordspacing

\bibitem{kerr2009gromov}
D.~Kerr and H.~Li, ``On {G}romov-{H}ausdorff convergence for operator metric
  spaces,'' \emph{Journal of Operator Theory}, pp. 83--109, 2009.

\bibitem{lax02-functional}
P.~D. Lax, \emph{Functional Analysis}.\hskip 1em plus 0.5em minus 0.4em\relax
  Wiley, 2002.

\bibitem{ramakrishna2020user}
R.~Ramakrishna, H.~Wai, and A.~Scaglione, ``A user guide to low-pass graph
  signal processing and its applications: Tools and applications,'' \emph{IEEE
  Signal Processing Magazine}, vol.~37, no.~6, pp. 74--85, 2020.

\bibitem{borgs2012convergent}
C.~Borgs, J.~T. Chayes, L.~Lov{\'a}sz, V.~T. S{\'o}s, and K.~Vesztergombi,
  ``Convergent sequences of dense graphs ii. multiway cuts and statistical
  physics,'' \emph{Annals of Mathematics}, vol. 176, no.~1, pp. 151--219, 2012.

\bibitem{ruiz20-transf}
L.~Ruiz, L.~F.~O. Chamon, and A.~RIbeiro, ``Graphon neural networks and the
  transferability of graph neural networks,'' in \emph{Neural Inform. Process.
  Syst. 2020}.\hskip 1em plus 0.5em minus 0.4em\relax Vancouver, BC [online]:
  NeurIPS Foundation, 6-12 Dec. 2020.

\bibitem{gama2019ergodicity}
F.~Gama and A.~Ribeiro, ``Ergodicity in stationary graph processes: A weak law
  of large numbers,'' \emph{{IEEE} Trans. Signal Process.}, vol.~67, no.~10,
  pp. 2761--2774, May 2019.

\bibitem{penrose2016connectivity}
M.~D. Penrose, ``Connectivity of soft random geometric graphs,'' \emph{The
  Annals of Applied Probability}, vol.~6, no.~2, pp. 986--1028, 2016.

\bibitem{arya1999air}
S.~P. e.~a. Arya, \emph{Air pollution meteorology and dispersion}.\hskip 1em
  plus 0.5em minus 0.4em\relax Oxford University Press New York, 1999, vol.~6.

\bibitem{weiyu18-movie}
W.~Huang, A.~G.~Marques, and A.~Ribeiro, ``Rating prediction via graph signal
  processing,'' \emph{{IEEE} Trans. Signal Process.}, vol.~66, no.~19, pp. 5066
  -- 5081, Oct. 2018.

\bibitem{monti17-movie}
F.~Monti, M.~M.~Bronstein, and X.~Bresson, ``Geometric matrix completion with
  recurrent multi-graph neural networks,'' in \emph{2017 Neural Inform.
  Process. Syst.}\hskip 1em plus 0.5em minus 0.4em\relax Long Beach, CA: NIPS
  Foundation, 4-9 Dec. 2017.

\bibitem{harper16-movielens}
F.~M. Harper and J.~A. Konstan, ``The movielens datasets: History and
  context,'' \emph{ACM Transactions on Interactive Intelligent Systems (TiiS) -
  Regular Articles and Special issue on New Directions in Eye Gaze for
  Interactive Intelligent Systems}, vol.~5, pp. 1--19, Jan. 2016.

\bibitem{seelmann2014notes}
A.~Seelmann, ``Notes on the {sin 2$\Theta$} theorem,'' \emph{Integral Equations
  and Operator Theory}, vol.~79, no.~4, pp. 579--597, 2014.

\bibitem{jansen}
S.~Janson, ``Graphons, cut norm and distance, couplings and rearrangements,''
  \emph{NYJM Monographs}, no.~4, 2013.

\end{thebibliography}

\appendices



\section{Proof of Lemma 2} \label{appx:lemma2}

The proof follows by direct computation. For $j \in \ccalL$,
\begin{align*}
	&(T_{\bbW_\bbG}\varphi_j)(u) = \int_0^1 \bbW_\bbG(u,v) \varphi_j(v) dv
	\\
	{}&= \sqrt{n} \mbI\left( u \in I_k \right) \int_0^1 [\bbS]_{k\ell} [\bbv_j]_{k} \times 
		\mbI\left( v \in I_\ell \right) dv
	\\
	{}&= \sqrt{n} \mbI\left( u \in I_k \right) \sum_{\ell = 1}^{n} [\bbS]_{k\ell} [\bbv_j]_{k} \int_{I_\ell} dv = \frac{[\bbS \bbv_j]_{k}}{n} \times \sqrt{n} \mbI\left( u \in I_k \right) \\
	 &= \frac{\lambda_j(\bbS)}{n}
			\bigg[ [\bbv_j]_{k} \times \sqrt{n}\mbI\left( u \in I_k \right) \bigg]= \lambda_j(T_{\bbW_\bbG}) \varphi_j(u)
		\text{.}
\end{align*}
If $j \notin \ccalL$, then $\langle{\varphi_j},{\varphi_k}\rangle = 0$ for all $k \in\ccalL$. In this case, we can trivially write $(T_{\bbW_\bbG}\varphi_j)(u) = 0 = \lambda_j(T_{\bbW_\bbG}) \varphi_j(u)$. Note that since the $\bbv_k$ are orthonormal, so are the $\{\varphi_k(T_{\bbW_\bbG})\}$ and therefore a basis completion $\{\varphi_j\}$ can always be obtained. To conclude, compute for $j \in \ccalL$
\begin{align*}
	[\hat{X}_\bbG]_j &= \int_0^1 \varphi_j(v) X_\bbG(v) dv
	\\
	{}&= \sqrt{n} \int_0^1 [\bbv_j]_{\ell} [\bbx]_{\ell}
		\times \mbI\left( v \in I_\ell \right) dv
	\\
	{}&= \sqrt{n} \sum_{\ell = 1}^{n} [\bbv_j]_{\ell} [\bbx]_{\ell} \int_{I_\ell} dv
		= \dfrac{\bbv_j^\Tr \bbx}{\sqrt{n}} = \dfrac{[\hbx]_j}{\sqrt{n}}
		\text{.}
\end{align*}
If $j \notin \ccalL$, recall that since the $\{\bbv_j\}$ form a basis of $\reals^{n}$, we can write $\bbx = \sum_{k \in \ccalL} c_k \bbv_k$. Hence,
\begin{align*}
	[\hat{X}_\bbG]_j &= \int_0^1 \varphi_j(v) X_\bbG(v) dv
	\\
	{}&= \int_0^1 [\bbx]_{\ell} \times \mbI\left( v \in I_\ell \right)\varphi_j(v) dv
	\\
	{}&= \int_0^1 \sum_{k \in \ccalL} c_k [\bbv_k]_{\ell} \times \mbI\left( v \in I_\ell \right)\varphi_j(v) dv
	\\
	{}&= \dfrac{1}{\sqrt{n}} \sum_{k \in \ccalL} c_k \int_0^1 \varphi_k(v) \varphi_j(v) dv = 0
		\text{.} \qed
\end{align*}

\section{Proof of Lemma 3 and Lemma 4} \label{appx:lemma3}

To prove Lemma \ref{T:eigenvectors_convergence}, we first repeat Lemma \ref{eigenvalue_conv} below.

\begin{customlemma}{4}[Eigenvalue convergence]
Let $\{\bbG_n\}$ be a sequence of graphs with eigenvalues $\{\lambda_j(\bbS_n)\}_{j \in \mbZ \setminus \{0\}}$, and $\bbW$ a graphon with eigenvalues $\{\lambda_j(T_\bbW)\}_{j \in \mbZ \setminus\{0\}}$. Assume that, in both cases, the eigenvalues are ordered by decreasing order of absolute value and indexed according to their sign. If $\{\bbG_n\}$ converges to $\bbW$, then, for all $j$
\begin{equation} \label{eqn:eigenvalue_conv2}
\lim_{n \to \infty} \dfrac{\lambda_j(\bbS_n)}{n} = \lim_{n \to \infty} \lambda_j(T_{\bbW_{\bbG_n}}) = \lambda_j(T_\bbW) \ .
\end{equation}
\end{customlemma}
\begin{proof}
The proof is essentially the one for \cite[Thm. 6.7]{borgs2012convergent}, but we reproduce it here using our notation.
Recall that since the sequence $\{\bbG_n\}$ converges to $\bbW$, the density of homomorphisms for any motif also converges. The result then follows by choosing a homomorphism connected to the eigenvalues of their induced operators, namely the $k$-cycle $\bbC_k$. Indeed, notice that for any graphon $\bbW^\prime$ and $k \geq 2$, we have, by definition, that $t(\bbC_k,\bbW^\prime) = \sum_{j \in \mbZ \setminus \{0\}} \lambda_j(T_{\bbW^\prime})^k$. Hence,
\begin{equation}\label{E:k_cycle}
	\lim_{n \to \infty} \sum_{j \in \mbZ \setminus \{0\}} \lambda_j(T_{\bbW_n})^k =
		\sum_{j \in \mbZ \setminus \{0\}} \lambda_j(T_\bbW)^k
		\text{, for } k \geq 2
\end{equation}
where $T_{\bbW_n}=T_{\bbW_{\bbG_n}}$.
It now suffices to show that \eqref{E:k_cycle} implies $\lambda_j(T_{\bbW_n}) \to \lambda_j(T_\bbW)$.

We start by bounding the eigenvalues of any graphon $\bbW^\prime$ in terms of its density of homomorphisms. In particular, for $k = 4$ we obtain that
\begin{align*}
	\sum_{j = 1}^m \lambda_j(T_{\bbW^\prime})^4
		\leq \sum_{j \in \mbZ \setminus \{0\}} \lambda_j(T_{\bbW^\prime})^4 = t(\bbC_4,\bbW^\prime)
	\Rightarrow
	\\
	\lambda_m(T_{\bbW^\prime}) \leq \left[ \frac{t(\bbC_4,\bbW^\prime)}{m} \right]^{1/4}
		\text{ and}
	\\
	\sum_{j = -m}^{-1} \lambda_j(T_{\bbW^\prime})^4
		\leq \sum_{j \in \mbZ \setminus \{0\}} \lambda_j(T_{\bbW^\prime})^4 = t(\bbC_4,\bbW^\prime)
	\Rightarrow
	\\
	\lambda_{-m}(T_{\bbW^\prime}) \geq -\left[ \frac{t(\bbC_4,\bbW^\prime)}{m} \right]^{1/4}
		\text{.}
\end{align*}
Since $t(\bbC_4,\bbW_n)$ is a convergent sequence, it has a bound $B$ \cite{borgs2012convergent}, which implies that
\begin{equation}\label{E:eigenvalue_bound}
	|{\lambda_j(T_{\bbW_n})}| \leq \left( \frac{B}{|j|} \right)^{1/4}
		\text{, for all } j \in \mbZ \setminus \{0\}
		\text{.}
\end{equation}
\blue{Note that for $k \geq 5$, we can take the limit in \eqref{E:k_cycle} term-by-term since, as $|{\lambda_j(T_{\bbW_n})^k}| \leq (B/|j|)^{k/4}$ and the series $\sum_i (B/|j|)^{k/4}$ is convergent for $k > 4$, $\sum_{j \in \mbZ\setminus\{0\}}|{\lambda_j(T_{\bbW_n})^k}|$ also converges.} Hence, from \eqref{E:k_cycle}, we have
\begin{equation}\label{E:limit_termbyterm}
	\lim_{n \to \infty} \sum_{j \in \mbZ \setminus \{0\}} \lambda_j(T_{\bbW_n})^k
		= \sum_{j \in \mbZ \setminus \{0\}} \zeta_j^k
		= \sum_{j \in \mbZ \setminus \{0\}} \lambda_j(T_\bbW)^k
\end{equation}
for $k \geq 5$, where $\zeta_j^k = \lim_{n \to \infty} \lambda_j(T_{\bbW_n})^k$.

To conclude, we proceed by induction over an ordering of the sequence of eigenvalues $\lambda_j(T_\bbW)$, namely over $j_\ell$, $\ell = 1,2,\dots$, such that $|{\lambda_{j_1}(T_\bbW)}| \geq |{\lambda_{j_2}(T_\bbW)}| \geq \dots \geq |{\lambda_{j_\ell}(T_\bbW)}|$. Suppose that $\zeta_{j_\ell} = \lambda_{j_\ell}(T_\bbW)$ for $\ell < \ell^*$ and let $\lambda_{j_{\ell^*}}(T_\bbW)$ be of multiplicity $a$ and appear $b$ times in the sequence $\{\zeta_j\}$ and $-\lambda_{j_{\ell^*}}(T_\bbW)$ be of multiplicity $a^\prime$ and appear $b^\prime$ times in $\{\zeta_j\}$. The identity in \eqref{E:limit_termbyterm} then reduces to
\begin{align*}
	\left[ b + (-1)^k b^\prime \right]
		+ \sum_{\ell > \ell^*} \left( \frac{\zeta_{j_{\ell}}}{\lambda_{j_{\ell^*}}(T_\bbW)} \right)^k =
	\\
	\left[ a + (-1)^k a^\prime \right]
		+ \sum_{\ell > \ell^*} \left( \frac{\lambda_{j_{\ell}}(T_\bbW)}{\lambda_{j_{\ell^*}}(T_\bbW)} \right)^k
		\text{, for } k \geq 5
		\text{,}
\end{align*}
where we divided both sides by $\lambda_{j_{\ell^*}}(T_\bbW)^k$. Due to the ordering of the $\lambda_{j_\ell}$, for $k \to \infty$ through the even numbers we get $b + b^\prime = a + a^\prime$ and through the odd numbers we get $b - b^\prime = a - a^\prime$. Immediately, we have that $a = a^\prime$ and $b = b^\prime$, so that $\zeta_{j_{\ell^*}} = \lambda_{j_{\ell^*}}$. Although this argument assumes $\zeta_{j_\ell} < \lambda_{j_{\ell^*}}$ for all $\ell > \ell^*$, applying the same procedure to an ordering of the sequence $\{\zeta_j\}$ yields the same conclusion.
\end{proof}

We will also require the following well known result about the perturbation of self-adjoint operators. For $\sigma$ a subset of the eigenvalues of a self-adjoint operator $T$, define the spectral projection~$E_T(\sigma)$ as the projection onto the subspace spanned by the eigenfunctions relative to the eigenvalues in $\sigma$.

\begin{proposition}\label{T:davis_kahan}
Let $T$ and $T^\prime$ be two self-adjoint operators on a separable Hilbert space $\ccalH$ whose spectra are partitioned as $\sigma \cup \Sigma$ and $\omega \cup \Omega$ respectively, with $\sigma \cap \Sigma = \emptyset$ and $\omega \cap \Omega = \emptyset$. If there exists $d > 0$ such that $\min_{x \in \sigma,\, y \in \Omega} |{x - y}| \geq d$ and $\min_{x \in \omega,\, y \in \Sigma}|{x - y}| \geq d$, then
\begin{equation}\label{E:davis_kahan}
	\vertiii{E_T(\sigma) - E_{T^\prime}(\omega)} \leq \frac{\pi}{2} \frac{\vertiii{{T - T^\prime}}}{d}
\end{equation}
\end{proposition}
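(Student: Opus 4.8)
The plan is to prove the bound by reducing the difference of the two spectral projections to its ``off--diagonal blocks'' and controlling each block through a Sylvester operator equation whose solvability is governed precisely by the separation constant $d$. Write $P = E_T(\sigma)$ and $Q = E_{T^\prime}(\omega)$, both orthogonal projections on $\ccalH$. First I would record the algebraic identity $P - Q = P(I-Q) - (I-P)Q$, in which the first summand takes values in $\mathrm{ran}\,P$ and the second in $\mathrm{ran}(I-P)$, two orthogonal subspaces. Combined with the standard fact that for orthogonal projections $\vertiii{P-Q} = \max\{\vertiii{(I-Q)P},\,\vertiii{(I-P)Q}\}$ (using $\vertiii{P(I-Q)} = \vertiii{(I-Q)P}$ by passing to adjoints, since both operators are self-adjoint), this reduces the theorem to bounding the two blocks $(I-Q)P$ and $(I-P)Q$ separately.

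Next I would turn each block into a Sylvester equation by exploiting that a self-adjoint operator commutes with its own spectral projections: $TP = PT$ and $T^\prime Q = QT^\prime$. Setting $X = (I-Q)P$ and using $TP=PT$ together with $T^\prime(I-Q) = (I-Q)T^\prime$, a direct computation gives
\[
 T^\prime X - X T = (I-Q)(T^\prime - T)P,
\]
so that $\vertiii{T^\prime X - XT} \le \vertiii{T-T^\prime}$. The crucial point is that, because $P$ reduces $T$ and $I-Q$ reduces $T^\prime$, $X$ may be viewed as an operator from the sub-Hilbert space $\mathrm{ran}\,P$, on which $T$ acts self-adjointly with spectrum $\sigma$, into $\mathrm{ran}(I-Q)$, on which $T^\prime$ acts self-adjointly with spectrum $\Omega$; by hypothesis these spectra are separated by at least $d$. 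An entirely analogous computation with $Y = (I-P)Q$ yields
\[
 T Y - Y T^\prime = (I-P)(T-T^\prime)Q,
\]
now relating self-adjoint operators with spectra $\Sigma$ and $\omega$, again separated by at least $d$ through the second hypothesis.

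Finally I would invoke the sharp norm bound for the inverse of the Sylvester operator attached to two self-adjoint operators with separated spectra: if $A$ and $B$ are bounded self-adjoint with $\mathrm{dist}(\mathrm{spec}(A),\mathrm{spec}(B)) \ge d$, then any solution of $AX - XB = C$ obeys $\vertiii{X} \le \tfrac{\pi}{2}\,\vertiii{C}/d$. Applying this to the two equations above gives $\vertiii{(I-Q)P} \le \tfrac{\pi}{2}\vertiii{T-T^\prime}/d$ and $\vertiii{(I-P)Q} \le \tfrac{\pi}{2}\vertiii{T-T^\prime}/d$, and the reduction of the first step then closes the proof. The main obstacle is exactly this last estimate with the optimal constant $\pi/2$: the elementary representation $X = \int_0^\infty e^{-tA}C\,e^{tB}\,dt$ applies only when one spectrum lies entirely above the other, yielding the weaker constant $1$, whereas here the spectra may interleave on both sides of the gap. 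Getting $\pi/2$ requires the representation $X = \int_{\mathbb{R}} g(t)\,e^{iAt}\,C\,e^{-iBt}\,dt$ with $g\in L^1$ chosen so that its Fourier transform equals $1/s$ for $|s|\ge d$; the minimal value of $\vertiii{g}_{L^1}$ is precisely $\pi/(2d)$. This is the Bhatia--Davis--McIntosh form of the Davis--Kahan $\sin\Theta$ theorem, which I would cite rather than reprove.
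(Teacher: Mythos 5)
Your proposal is correct, but it cannot be compared against a proof "in the paper" because the paper does not contain one: its entire proof of this proposition is the line ``See \cite{seelmann2014notes}''. What you have written is essentially a reconstruction of the argument underlying that cited literature, and every verifiable step checks out: the identity $P-Q = P(I-Q) - (I-P)Q$; the max-formula $\vertiii{P-Q} = \max\{\vertiii{(I-Q)P},\vertiii{(I-P)Q}\}$ for orthogonal projections (standard, and correctly justified via adjoints since $P$ and $I-Q$ are self-adjoint); the two Sylvester identities $T^\prime X - XT = (I-Q)(T^\prime-T)P$ and $TY - YT^\prime = (I-P)(T-T^\prime)Q$, which follow from $TP=PT$ and $T^\prime Q = QT^\prime$; and the spectral localization $\mathrm{spec}(T|_{\mathrm{ran}\,P}) \subseteq \overline{\sigma}$, $\mathrm{spec}(T^\prime|_{\ker Q}) \subseteq \overline{\Omega}$, which is exactly where the two crossed separation hypotheses enter, one per block. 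The one point you do not prove---the bound $\vertiii{X} \leq \tfrac{\pi}{2}\vertiii{C}/d$ for the solution of a Sylvester equation with self-adjoint, possibly interleaved, $d$-separated coefficients---is a genuine theorem of Bhatia, Davis and McIntosh, and citing it is legitimate; note that the weaker constant $1$ from the one-sided exponential representation would not suffice here precisely because $\sigma$ and $\Omega$ (resp.\ $\omega$ and $\Sigma$) may interleave, a distinction you correctly identify. In effect, both you and the paper defer to the literature for the sharp constant $\pi/2$; you simply defer at a much later and more elementary point (the $L^1$-Fourier extremal problem), which makes your write-up strictly more informative than the paper's citation-only proof.
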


\begin{proof}
See \cite{seelmann2014notes}.
\end{proof}

Lastly, we need two results related to the graphon norm. The first is Lemma \ref{cut_norm_conv}, which states that if a sequence of graphs converges to a graphon in the homomorphism density sense, it also converges in the cut norm \eqref{eqn:cut_norm_def}. 
The second, here presented as Prop. \ref{T:norm_equivalence}, is due to \blue{\cite[Thm. 11.57]{lovasz2012large}} and bounds the $L^2$-induced norm of the graphon operator by is cut norm.

\begin{proposition}\label{T:norm_equivalence}
Let $T_\bbW$ be the operator induced by the kernel $\bbW$. Then, $\|{\bbW}\|_\square \leq \vertiii{{T_\bbW}} \leq \sqrt{8 \|{\bbW}\|_\square}$.
\end{proposition}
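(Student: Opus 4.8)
The plan is to prove the two inequalities separately. The lower bound $\|\bbW\|_\square \le \vertiii{T_\bbW}$ is elementary and I would establish it directly; the upper bound $\vertiii{T_\bbW}\le\sqrt{8\|\bbW\|_\square}$ is the substantive half and coincides with the cited result \cite[Thm. 11.57]{lovasz2012large}, so I would either invoke it or reconstruct its level-set argument.

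For the lower bound, I would test the operator against indicators. Fix measurable sets $S, T\subseteq[0,1]$ and put $f=\mbI(\cdot\in S)$ and $g=\mbI(\cdot\in T)$. Since $(T_\bbW f)(v)=\int_S\bbW(u,v)\,du$, we have $\int_{S\times T}\bbW = \langle g, T_\bbW f\rangle$, and the Cauchy--Schwarz inequality together with the definition of the operator norm gives $\big|\int_{S\times T}\bbW\big| \le \|g\|\,\vertiii{T_\bbW}\,\|f\| = \sqrt{\mu(S)\mu(T)}\,\vertiii{T_\bbW}\le\vertiii{T_\bbW}$, because $\mu(S),\mu(T)\le 1$ (here $\mu$ is Lebesgue measure). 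Taking the supremum over $S,T$ in the definition of $\|\cdot\|_\square$ yields $\|\bbW\|_\square\le\vertiii{T_\bbW}$, and this argument is insensitive to the sign of $\bbW$.

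For the upper bound, let $\phi,\psi$ be unit-norm functions realizing (or approaching) the operator norm, so that $\vertiii{T_\bbW}=\big|\langle\psi,T_\bbW\phi\rangle\big|=\big|\int\bbW(u,v)\phi(u)\psi(v)\,du\,dv\big|$. Using the layer-cake identity $\phi=\int_0^\infty\big(\mbI(\phi>s)-\mbI(\phi<-s)\big)\,ds$ and likewise for $\psi$, the bilinear form becomes a double integral over thresholds $s,t$ of terms $\langle\chi_t,T_\bbW\chi_s\rangle$, where each $\chi$ is a difference of two disjoint indicators. Each such term expands into cut-type integrals and is therefore bounded in absolute value by $4\|\bbW\|_\square$; it is simultaneously bounded by $p(s)q(t)$, where $p(s)$ and $q(t)$ denote the measures of the corresponding level sets, using $|\bbW|\le 1$. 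Combining the two as $\min\!\big(4\|\bbW\|_\square,\,p(s)q(t)\big)$ and integrating against the $L^2$ normalizations $\int_0^\infty 2s\,p(s)\,ds=\|\phi\|^2\le 1$ and $\int_0^\infty 2t\,q(t)\,dt=\|\psi\|^2\le 1$ gives $\vertiii{T_\bbW}=O\big(\sqrt{\|\bbW\|_\square}\big)$, the bookkeeping being sharpened to the stated constant $\sqrt 8$.

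The hard part is exactly this final summation. The uniform cut-norm bound on each level-set term is scale-independent and would diverge if integrated over all thresholds, whereas the $L^2$ constraint only controls the thin tails of $p$ and $q$; neither bound alone suffices. The square-root dependence, rather than a weaker power (a careless split of the threshold integral produces $\|\bbW\|_\square^{1/3}$), emerges only by coupling the two estimates through the $\min$ and breaking the $s,t$ integration at the scale $s,t\sim\|\bbW\|_\square^{-1/2}$, as the extremal case of a two-valued $\phi$ makes transparent. Since this delicate estimate is precisely \cite[Thm. 11.57]{lovasz2012large}, in the write-up I would cite it for the upper bound and record only the short direct computation above for the lower bound.
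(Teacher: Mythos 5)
Your lower-bound argument is correct and complete: testing $T_\bbW$ against indicators and using Cauchy--Schwarz is the standard proof of $\|\bbW\|_\square \le \vertiii{T_\bbW}$, and in this respect your write-up is actually \emph{more} self-contained than the paper, whose entire ``proof'' is one sentence of citations. Your decision to cite the upper bound rather than prove it also parallels the paper, which invokes \cite[Thm.~3.7(a)]{borgs2012convergent} together with the remark that $t(\bbC_2,\bbW)=\int\!\!\int \bbW^2$ is the squared Hilbert--Schmidt norm of $T_\bbW$, which dominates $\vertiii{T_\bbW}^2$. But note what that Hilbert--Schmidt remark buys in the setting where the proposition is actually stated: the paper's graphons take values in $[0,1]$, so the supremum defining $\|\bbW\|_\square$ is attained at $S=T=[0,1]$, giving $\|\bbW\|_\square=\int\!\!\int\bbW$; then $\bbW^2\le\bbW$ yields $\vertiii{T_\bbW}^2\le\int\!\!\int\bbW^2\le\int\!\!\int\bbW=\|\bbW\|_\square$, and the upper bound (with constant $1$, a fortiori $\sqrt 8$) is elementary. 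You missed this shortcut by working in the generality of signed kernels --- although, ironically, that generality is what the paper needs downstream, where the proposition is applied to $T_{\bbW_n}-T_\bbW=T_{\bbW_n-\bbW}$ with $\bbW_n-\bbW$ taking values in $[-1,1]$; there the Hilbert--Schmidt route genuinely fails (a kernel of i.i.d.\ $\pm 1$ blocks has vanishing cut norm but Hilbert--Schmidt norm $1$), and the general kernel inequality you propose to cite is the right tool.

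Two concrete problems remain. First, the citation is misattributed: \cite[Thm.~11.57]{lovasz2012large} is the statement that homomorphism-density convergence implies cut-norm convergence up to relabeling (it is what the paper's Lemma~\ref{cut_norm_conv} cites); the inequality $\vertiii{T_\bbW}\le\sqrt{8\|\bbW\|_\square}$ for kernels with $\|\bbW\|_\infty\le 1$ lives in Appendix~E of that book (Lemma~E.6), and is what the paper's own reference points to. Second, your sketch of the upper bound should not be passed off as a near-proof. Writing $\delta=\|\bbW\|_\square$, the quantity $\int\!\!\int \min\big(4\delta,\,p(s)q(t)\big)\,ds\,dt$ controlled only by $\int 2sp\,ds\le 1$ and $\int 2tq\,dt\le1$ does not give $O(\sqrt{\delta})$ by a threshold split: the single-threshold optimization gives $O(\delta^{1/3})$ (as you concede), and your proposed break scale $s,t\sim\delta^{-1/2}$ is dimensionally off --- the square $[0,\delta^{-1/2}]^2$ alone contributes $4\delta\cdot\delta^{-1}=O(1)$ under the cut-norm bound. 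To reach the square root one must use that $\phi,\psi$ may be taken to be eigenfunctions of $T_\bbW$, hence $\|\phi\|_\infty\le \vertiii{T_\bbW}^{-1}$, and even then the level-set computation leaves a $\log(1/\delta)$ factor that takes real work to remove; obtaining the clean constant $\sqrt 8$ is precisely the delicate part. So your concluding plan --- prove the easy inequality, cite the hard one --- is sound and matches the paper's level of rigor, but the citation must be corrected and the heuristic paragraph either repaired or dropped.
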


\begin{proof}
The lower bound is obtained directly from the first inequality in \cite[Lemma 8.11]{lovasz2012large} and the first inequality in \cite[Lemma E.6]{jansen}. For the upper bound, combine the second inequality in \cite[Lemma 8.11]{lovasz2012large} and the second inequality in Lemma \cite[Lemma E.6]{jansen} to write
\begin{equation*}
  \vertiii{{T_\bbW}}_{2,2}
     \leq \sqrt{2 \vertiii{{T_\bbW}}_{\infty,1} }
              \leq \sqrt{2 \big(4 \|{\bbW}\|_{\square}\big)}.
\end{equation*}
Group terms in the last inequality to conclude the proof. 

Note that~\cite[Lemma E.6]{jansen} is written in terms of the ``type-2 cut norm'' $\|\bbW \|_{\square,2}$~\cite[Equation (4.3)]{jansen}. It is easy to see that $\|\bbW \|_{\square,2}$ is an equivalent definition of the $\infty,1$-operator norm, i.e., $\|\bbW \|_{\square,2} = \vertiii{{T_\bbW}}_{\infty,1}$~\cite[Remark 4.2]{jansen}.
\end{proof}

We can now proceed with the proof of our lemma:

\begin{proof}[Proof of Lemma \ref{T:eigenvectors_convergence}]
For~$j \in \ccalC$, let~$\sigma = \lambda_j(T_\bbW)$, $\Sigma = \{\lambda_i(T_\bbW)\}_{i \neq j}$, $\omega = \lambda_j(T_{\bbW_n})$, and $\Omega = \{\lambda_i(T_{\bbW_n})\}_{i \neq j}$ in Prop. \ref{T:davis_kahan} to get
\begin{equation}\label{E:davis_kahan_1}
	\vertiii{{E_j - E_{j,n}}} \leq \frac{\pi}{2} \frac{\vertiii{{T_{\bbW_n} - T_\bbW}}}{d_{j,n}}
\end{equation}
where $E_j$ and $E_{j,n}$ are the spectral projections of $T_\bbW$ and $T_{\bbW_n}$ with respect to their $j$-th eigenvalue and
\begin{multline*}
	d_{j,n} = \min \big( |{\lambda_j - \lambda_{j+1}(T_{\bbW_n})}|,
		|{\lambda_j - \lambda_{j-1}(T_{\bbW_n})}|, \\
		|{\lambda_{j+1} - \lambda_{j}(T_{\bbW_n})}|,
		|{\lambda_{j-1} - \lambda_{j}(T_{\bbW_n})}|
	\big)
		\text{,}
\end{multline*}
where we omitted the dependence on $\bbW$ by writing $\lambda_j = \lambda_j(T_\bbW)$.

Fix $\epsilon > 0$. From Lemma \ref{eigenvalue_conv}, we know we can find $n_1$ such that $|{d_{j,n} - \delta_j}| \leq \delta_j/2$ for all $n > n_1$, where 
\begin{equation*}
\delta_j = \min\big( |{\lambda_j - \lambda_{j+1}}|, |{\lambda_j - \lambda_{j-1}}| \big)\ .
\end{equation*} 
Since $\bbW$ is non-derogatory, $\delta_j > 0$. Additionally, the cut norm convergence of graphon sequences (Lemma \ref{cut_norm_conv}) together with Prop. \ref{T:norm_equivalence} implies there exists $n_2$ such that $\vertiii{{T_{\bbW_n} - T_\bbW}} \leq \epsilon \delta_j/\pi$. Hence, for all $n > \max(n_1,n_2)$ it holds from \eqref{E:davis_kahan_1} that
\begin{equation}\label{E:convergence_calA}
	\vertiii{{E_j - E_{j,n}}} \leq \frac{\pi}{2} \frac{\epsilon \delta_j/\pi}{\delta_j/2} = \epsilon
		\text{.}
\end{equation}
Since $\epsilon$ is arbitrary, \eqref{E:convergence_calA} proves that the projections onto the eigenfunctions of the same eigenvalue converge. I.e., the eigenfunction sequence $\varphi_j(T_{\bbW_n})$ itself converges weakly. 
Because the norms of the $\varphi_j(T_{\bbW_n})$ and $\varphi_j(T_{\bbW})$ are always equal to one, in this case weak convergence also implies strong convergence.
To see this, note that $\|\varphi_j(T_{\bbW_n})-\varphi_j(T_\bbW)\|^2$ can be written as
\begin{align*}
\|&\varphi_j(T_{\bbW_n})-\varphi_j(T_\bbW)\|^2 \\
&= \langle \varphi_j(T_{\bbW_n})-\varphi_j(T_\bbW), \varphi_j(T_{\bbW_n})-\varphi_j(T_\bbW) \rangle \\
&= \langle  \varphi_j(T_{\bbW_n}), \varphi_j(T_{\bbW_n})-\varphi_j(T_\bbW) \rangle \\
& \quad \quad- \langle  \varphi_j(T_{\bbW}), \varphi_j(T_{\bbW_n})-\varphi_j(T_\bbW) \rangle \\
&= \|\varphi_j(T_{\bbW_n})\|^2 - 2\langle \varphi_j(T_{\bbW_n}), \varphi_j(T_{\bbW})\rangle + \|\varphi_j(T_{\bbW})\|^2 \\
& \quad \quad \to \|\varphi_j(T_{\bbW_n})\|^2 - 2\langle \varphi_j(T_{\bbW}), \varphi_j(T_{\bbW})\rangle + \|\varphi_j(T_{\bbW})\|^2 \\
& \quad \quad = \|\varphi_j(T_{\bbW_n})\|^2 - \|\varphi_j(T_{\bbW})\|^2 = 1-1 = 0
\end{align*}
where the sixth line follows from weak convergence of the $\varphi_j(T_{\bbW_n})$ to $\varphi_j(T_{\bbW})$.

To proceed, let us apply Prop. \ref{T:davis_kahan} to the subspace spanned by the remaining eigenfunctions with indices not in $\ccalC$. \blue{Let $\sigma = \{\lambda_i(T_\bbW)\}_{i \notin \ccalC}$, $\Sigma = \{\lambda_i(T_\bbW)\}_{i \in \ccalC}$, $\omega = \{\lambda_i(T_{\bbW_n})\}_{i \notin \ccalC}$, and~$\Omega = \{\lambda_i(T_{\bbW_n})\}_{i \in \ccalC}$ in \eqref{E:davis_kahan} to get
\begin{equation}\label{E:davis_kahan_2}
	\vertiii{{E^\prime - E^\prime_{n}}} \leq \frac{\pi}{2} \frac{\vertiii{{T_{\bbW_n} - T_\bbW}}}{d_{n}}
		\text{,}
\end{equation}
where $E^\prime$ and $E^\prime_n$ are the projections onto the subspaces given by $\ccalS = \text{span}\left( \{\varphi_i(T_\bbW)\}_{i \notin \ccalC} \right)$ and $\ccalS_n = \text{span}\left( \{\varphi_i(T_{\bbW_n})\}_{i \notin \ccalC} \right)$ respectively. From Prop. \ref{T:davis_kahan}, the denominator $d_n$ must satisfy~$d_n \leq \min_{i \notin \ccalC, j \in \ccalC}|\lambda_i(T_{\bbW_n})-\lambda_{j}(T_\bbW)| = d^{(1)}$ and~$d_n \leq \min_{i \notin \ccalC, j \in \ccalC}|\lambda_i(T_{\bbW})-\lambda_{j}(T_{\bbW_n})| = d^{(2)}$. For~$j \in \ccalC$, we have~$|\lambda_{j}(T_\bbW)| \geq c$ and so~$d^{(1)} \geq \min_{i \notin \ccalC}c - |\lambda_i(T_{\bbW_n})|$. As for $d^{(2)}$, there exists~$n_0$ such that~$d^{(2)} \geq \min_{i \notin \ccalC}c - |\lambda_i(T_{\bbW})|$ for~$n > n_0$ because $\lambda_j(T_{\bbW_n}) \to \lambda_j(T_\bbW)$ for all $j$ from Lemma \ref{eigenvalue_conv}. Thus, for $n > n_0$ Prop.~\ref{T:davis_kahan} holds with~$d_n$ given by
\begin{align*}
\begin{split}
d_n \leq  \min \left[\min_{i \notin \ccalC} {c - |\lambda_i(T_{\bbW_n})|},\ 
\min_{i \notin \ccalC}{c - |\lambda_i(T_{\bbW})|}\right]
\end{split}
\end{align*}
which is satisfied by~$d_n=\inf_{i \notin \ccalC} c - |\lambda_i(T_{\bbW_n})|$. 
Since the graphon $\bbW$ is non-derogatory, there exists an $n_1$ such that $d_n > 0$ for all $n > \max(n_0,n_1)$ and we can use the same argument as above to obtain that $E^\prime_{n} \to E^\prime$ in operator norm. The quantity $d_n$ is illustrated in Fig. \ref{fig:eigenvalues_graphon}.}

To see how this implies that for all $i \notin \ccalC$ the function $\varphi_i(T_{\bbW_n})$ converges weakly to a function in the subspace $\ccalS$---which we denote $\red{\Psi_i}$---, let $\Phi \in \ccalS^\perp$. Then,
\begin{align*}
|\langle \varphi_i(T_{\bbW_n}), \Phi \rangle | &= |\langle E_n'  \varphi_i(T_{\bbW_n}), \Phi \rangle | \\
&= |\langle E_n'  \varphi_i(T_{\bbW_n}), \Phi \rangle - \langle E'  \varphi_i(T_{\bbW_n}), \Phi \rangle|
\end{align*}
where the last equality holds because $ \langle E'  \varphi_i(T_{\bbW_n}), \Phi \rangle = 0$ due to $\Phi \in \ccalS^\perp$. From the linearity of inner products, this can be rewritten as
\begin{align*}
|\langle \varphi_i(T_{\bbW_n}), \Phi \rangle | &= |\langle E_n'  \varphi_i(T_{\bbW_n}) - E'  \varphi_i(T_{\bbW_n}), \Phi \rangle| \\
&= |\langle (E_n' - E')  \varphi_i(T_{\bbW_n}), \Phi \rangle|
\end{align*}
and, applying Cauchy-Schwarz,
\begin{align*}
|\langle \varphi_i(T_{\bbW_n}), \Phi \rangle | \leq \|E_n' - E'\| \|\Phi \| .
\end{align*}
Taking the limit on both sides of the inequality, we get
\begin{align*}
\lim_{n\to\infty}{|\langle \varphi_i(T_{\bbW_n}), \Phi \rangle |} \leq \|\Phi \| \lim_{n\to\infty}{\|E_n' - E'\| } = 0 .
\end{align*}
Hence, $\varphi_i(T_{\bbW_n)}$ converges weakly to a $\Psi_i$ that is perpendicular to elements of $\ccalS^\perp$, i.e., $\Psi_i \in \ccalS$.

\end{proof}


\section{The WSO is a Hilbert-Schmidt Operator} \label{appx:HS}

\begin{lemma} \label{lemma:HS} The graphon operator $T_\bbW$ is a Hilbert-Schmidt operator.
\end{lemma}
\begin{proof} Since $T_\bbW$ is an integral operator with kernel $\bbW$, to prove that it is Hilbert-Schmidt it suffices to show that the $L_2$ norm of $\bbW$ is finite. Because $0 \leq \bbW(u,v) \leq 1$ for all $u,v \in [0,1]$, it holds that
\begin{equation*}
\|\bbW\|_{L_2} = \int_0^1 \int_0^1 |\bbW(u,v)|^2 du dv \leq 1 \text{.}
\end{equation*}
Hence, the $L_2$ norm of $\bbW$ is finite and $T_\bbW$ is Hilbert-Schmidt with Hilbert-Schmidt norm $\|T_\bbW\|_{\mbox{\scriptsize HS}} = \|\bbW\|_{L_2}$.
\end{proof}

\section{The space of non-derogatory graphons is dense} \label{appx:derogatory}
\begin{proposition}\label{T:nonderogatory_density}
The set of operators induced by non-derogatory graphons is dense in the space of linear, compact, self-adjoint operators with respect to the $L^2$-induced norm.
\end{proposition}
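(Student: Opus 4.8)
The plan is to approximate an arbitrary compact self-adjoint operator in three stages, controlling the $L^2$-induced operator norm $\vertiii{\cdot}$ by $\epsilon/3$ at each stage: first truncate $T$ to finite rank, then replace the (possibly unbounded) kernel by a bounded symmetric step function so that it becomes a genuine graphon operator, and finally perturb the underlying matrix so that its nonzero eigenvalues become simple and distinct. Throughout I will use the fact, already invoked in the excerpt, that the Hilbert--Schmidt norm of an integral operator equals the $L^2$ norm of its kernel and dominates $\vertiii{\cdot}$; that is, $\vertiii{T_g} \leq \|g\|_{L^2([0,1]^2)}$ for any symmetric kernel $g$.

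First I would fix a compact self-adjoint $T$ on $L^2([0,1])$ and $\epsilon>0$ and use the spectral theorem to write $T=\sum_i \mu_i \langle\cdot,e_i\rangle e_i$ with $\{e_i\}$ orthonormal, $\mu_i \to 0$, ordered by decreasing modulus. Choosing $N$ with $|\mu_{N+1}|<\epsilon/3$ and setting $T_N = \sum_{i=1}^N \mu_i \langle\cdot,e_i\rangle e_i$ gives $\vertiii{T-T_N} = \sup_{i>N}|\mu_i| < \epsilon/3$. The operator $T_N$ is finite-rank and self-adjoint, with symmetric kernel $k_N(u,v)=\sum_{i=1}^N \mu_i e_i(u) e_i(v)$, and since $\|k_N\|_{L^2([0,1]^2)}^2 = \sum_{i=1}^N \mu_i^2 < \infty$, this $k_N$ is a symmetric $L^2$ kernel (though its eigenfunctions, hence $k_N$ itself, need not be bounded). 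Next, because step functions are dense in $L^2([0,1]^2)$ and symmetrizing $(u,v)\mapsto \tfrac12\big(s(u,v)+s(v,u)\big)$ is an orthogonal projection onto the symmetric subspace that fixes $k_N$, there is a \emph{bounded symmetric} step function $\bbW_0$ on a regular $m\times m$ grid with $\|k_N-\bbW_0\|_{L^2([0,1]^2)}<\epsilon/3$, whence $\vertiii{T_N - T_{\bbW_0}} < \epsilon/3$. This $\bbW_0$ is bounded, symmetric and measurable, i.e.\ a graphon, and it is exactly the graphon induced by the symmetric matrix $\bbM$ recording its values on the grid cells.

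The final and most delicate step is making the approximant non-derogatory. By Lemma \ref{T:induced_graphon}, the nonzero eigenvalues of $T_{\bbW_0}$ are precisely the nonzero eigenvalues of $\bbM$ divided by $m$; the value $0$ appears as an eigenvalue of infinite multiplicity, but as the unique accumulation point it is irrelevant to the non-derogatory condition, which constrains only the finite-multiplicity nonzero eigenvalues. Since real symmetric matrices with all eigenvalues distinct and nonzero are dense (their complement is the proper algebraic subvariety where the discriminant of the characteristic polynomial, or the determinant, vanishes), I would pick a symmetric $\bbM'$ with distinct nonzero eigenvalues and $\|\bbM-\bbM'\|_F$ small. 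The induced graphon $\bbW_0'$ then has all its nonzero eigenvalues simple and distinct, hence is non-derogatory, and because $\bbW_0,\bbW_0'$ are step functions on the same grid, $\|\bbW_0-\bbW_0'\|_{L^2([0,1]^2)}^2 = \|\bbM-\bbM'\|_F^2/m^2$, which I can force below $(\epsilon/3)^2$, giving $\vertiii{T_{\bbW_0}-T_{\bbW_0'}} < \epsilon/3$. The triangle inequality then yields $\vertiii{T - T_{\bbW_0'}} < \epsilon$.

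I expect the main obstacle to be the interplay between boundedness and simplicity of the spectrum, which is precisely why the detour through step functions (equivalently, finite weighted graphs) is essential. The naive route of perturbing the eigenvalues of $T$ within its own eigenbasis to separate them does make the spectrum simple but does not preserve boundedness of the kernel, so the perturbed operator need not be a graphon operator at all; conversely, a bounded step-function kernel is automatically a graphon but is generically derogatory because of repeated matrix eigenvalues. Resolving both simultaneously — approximate in $L^2$ to gain boundedness, then perturb the finite matrix to gain distinct nonzero eigenvalues — is the crux, and the only point requiring care is to confirm that the infinite-multiplicity zero eigenvalue is harmless under the intended reading of Definition \ref{defn:non_derog} (distinctness of the nonzero eigenvalues), consistent with how the eigengaps $\delta_j$ are used in the proof of Lemma \ref{T:eigenvectors_convergence}.
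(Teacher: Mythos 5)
Your proof is correct, and in its second half it takes a genuinely different route from the paper's. The paper fixes a graphon $\bbW$, truncates the spectral expansion of $T_\bbW$ in its own eigenbasis, and perturbs the retained eigenvalues $\lambda_i \mapsto \lambda_i + \delta_i$ (with $|\delta_i| \leq \epsilon/(2\sqrt{n})$ chosen to separate them), bounding the error by the spectral tail; you instead truncate, then approximate the truncated kernel in $L^2([0,1]^2)$ by a bounded symmetric step function---i.e.\ a finite weighted graph---and make that graph's nonzero eigenvalues simple by a small symmetric matrix perturbation, transferring everything back through $\vertiii{T_g} \leq \|g\|_{L^2([0,1]^2)}$ and Lemma \ref{T:induced_graphon}. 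What your detour buys is precisely the \emph{literal} statement of Prop.~\ref{T:nonderogatory_density}: for an arbitrary compact self-adjoint $T$ the eigenfunctions $e_i$ may be unbounded, so an eigenvalue perturbation performed in $T$'s own eigenbasis produces a finite-rank operator with kernel $\sum_i (\mu_i+\delta_i)e_i(u)e_i(v)$ that need not be bounded, hence need not be a graphon operator at all; the paper's argument really covers only $T = T_\bbW$ with $\bbW$ a graphon (which is all that Prop.~\ref{T:non_derog_graphons_dense} requires), and even there it silently uses the fact that eigenfunctions of nonzero eigenvalues of a bounded kernel are bounded, $\|\varphi_i\|_\infty \leq \|\bbW\|_\infty/|\lambda_i|$, to certify that the perturbed kernel is still a graphon. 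Your step-function stage removes this issue by construction, at the cost of an extra $\epsilon/3$ of bookkeeping plus the (standard) density of symmetric matrices with simple nonzero spectrum. Note finally that both proofs rest on the same reading of Def.~\ref{defn:non_derog}: every finite-rank approximant---yours and the paper's $\bbW_{n_0}$ alike---has $0$ as an eigenvalue of infinite multiplicity, so ``non-derogatory'' must be understood as simplicity of the \emph{nonzero} eigenvalues, which is indeed all that the eigengap arguments in Lemma \ref{T:eigenvectors_convergence} use; you flag this explicitly while the paper leaves it implicit, so it is a shared interpretive convention rather than a gap in your argument.
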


\begin{proof}

This is a direct consequence of the fact that every compact, self-adjoint operator is the limit of a sequence of finite rank operators. To see why this is the case, recall that the eigenfunctions $\{\varphi_i\}$ form an orthonormal basis of $L^2([0,1])$ \cite[Chapter 28, Thm. 3]{lax02-functional}. Hence, since $\bbW \in L^2([0,1]^2)$, the induced $T_\bbW$ has finite $L^2$-norm and the sequence $\sum_{i \in \mbZ \setminus \{0\}} |{\langle{T_{\bbW}X},{\varphi_i}\rangle}|^2$ is convergent and can be arranged so that for every $\epsilon > 0$, there exists $n_0$ such that
\begin{equation}\label{E:small_tail}
	\sum_{|{i}| > n} |{\langle{T_{\bbW}X},{\varphi_i}\rangle}|^2 \leq \frac{\epsilon^2 \|{X}\|}{2}
		\text{, for all } n > n_0 \text{.}
\end{equation}

Fix a graphon $\bbW$. We now show that for any $\epsilon > 0$, there exists a non-derogatory graphon $\bbW^\prime$ such that $\vertiii{{T_\bbW - T_{\bbW^\prime}}} \leq \epsilon$. To do so, define the graphon $\bbW^n$ through its operator as in
\begin{equation*}
	T_{\bbW^n}X = \sum_{|{i}| \leq n}
		\langle{T_{\bbW}X},{\varphi_i}\rangle \varphi_i
		+ \sum_{|{i}| \leq n} \delta_i \varphi_i
		\text{,}
\end{equation*}
where the $\delta_i$ are chosen so that $\lambda_i + \delta_i \neq \lambda_j + \delta_j$ for all $|{i}|,|{j}| \leq n$ and $|{\delta_i}| \leq \epsilon/(2\sqrt{n})$. In other words, the $\delta_i$ are small perturbations chosen to guarantee that $T_{\bbW^n}$ is non-derogatory. Since the $\{\varphi_i\}$ form an orthonormal basis, we obtain that
\begin{align*}
	\vertiii{{T_\bbW - T_{\bbW^n}}}^2 &= \sup_{\|{X}\| = 1} \|{T_\bbW X - T_{\bbW^n}X}\|^2	=
	\\
	{}&\sum_{|{i}| \leq n} \delta_i^2
		+ \sup_{\|{X}\| = 1} \sum_{|{i}| > n} |{\langle{T_{\bbW}X},{\varphi_i}\rangle}|^2
	\\
	{}&\leq \frac{\epsilon^2}{2}
		+ \sup_{\|{X}\| = 1} \sum_{|{i}| > n} |{\langle{T_{\bbW}X},{\varphi_i}\rangle}|^2
		\text{.}
\end{align*}
Using \eqref{E:small_tail} and taking $\bbW^\prime = \bbW^{n_0}$, we conclude that $\vertiii{{T_\bbW - T_{\bbW^\prime}}} \leq \epsilon$.
\end{proof}

\begin{corollary} \label{T:non_derog_graphons_dense}
Non-derogatory graphons are dense in the space of graphons with respect to the cut norm.
\end{corollary}
\begin{proof}
This is due to the fact that the operators induced by non-derogatory graphons are dense in the topology induced by the $L^2$ operator norm on the space of compact, self-adjoint operators, cf. Prop. \ref{T:nonderogatory_density}. Since this topology is equivalent to the one induced by the cut norm, this implies that non-derogatory graphons are also dense in the space of graphons with respect to the cut norm.
\end{proof}

\section{Proof of Proposition 2} \label{appx:prop2}

Let $\lambda^n_i  = \lambda_i(\bbS_n)/n$ denote the normalized eigenvalues of the graphs $\pi_n(\bbG_n)$ (and thus the eigenvalues of $\bbW_{n} = \bbW_{\pi_n(\bbG_n)}$, cf. Lemma \ref{T:induced_graphon}), and $\lambda_i$ the eigenvalues of $\bbW$. Now suppose that $\lambda_i$ and $\lambda_j$, $\lambda_i > \lambda_j$, are any two different eigenvalues of $\bbW$ with multiplicities $m_i$ and $m_j$; and that $\{\lambda_{i_k}^n\}_{k=1}^{m_i}$ and $\{\lambda_{j_\ell}^n\}_{\ell=1}^{m_j}$ are the eigenvalues of $\bbW_{n}$ converging to $\lambda_i$ and $\lambda_j$. 
Replacing $\sigma$ by $\lambda_i$ and $\omega$ by $\{\lambda_{i_k}^n\}$ in Prop. \ref{T:davis_kahan}, we get
\begin{equation*}
\vertiii{E_{T_\bbW}(\lambda_i) - E_{T_{\bbW_{n}}}(\{\lambda_{i_k}^n\})} \leq \dfrac{\pi}{2}\dfrac{\vertiii{T_\bbW - T_{\bbW_{n}}}}{\delta_{ij}}
\end{equation*}
where $\delta_{ij} = \min_{(i,\ell),(j,k)}{\{|\lambda_i-\lambda^n_{j_\ell}|,|\lambda_j-\lambda^n_{i_k}|\}}$. The denominator has limit $\lim_{n\to \infty} \delta_{ij} = \lambda_i - \lambda_j > 0$, so $E_{T_{\bbW_{n}}}(\{\lambda_{i_k}^n\}) \to E_{T_\bbW}(\lambda_i)$ follows immediately from convergence of $\{\bbG_n\}$ to $\bbW$ and Lemma \ref{cut_norm_conv} together with Prop. \ref{T:norm_equivalence}.

\section{Proof of Theorem 4} \label{appx:thm4}

\begin{proof} [Proof of Thm. \ref{thm:non_disc2}]
Once again, we leave the dependence on $\pi_n$ implicit and write the graphon signals induced by $(\pi_n(\bbG_n), \pi_n(\bbx_n))$ as $(\bbW_n, X_n)$.
Recall that the spectral properties of these graph signals are preserved in the induced graphon signals per Lemma \ref{T:induced_graphon}. Without loss of generality, we also consider filters with normalized filter function $\bar{h}(\lambda) = h(\lambda)/ \sup_{\lambda \in [0,1]} |h(\lambda)|$ as in the proof of Thm. \ref{thm:non_disc}.

To prove filter output convergence for sequences of graphs converging to arbitrary (possibly derogatory) graphons, we must separate the convergence analysis between spectral components associated with eigenvalues with multiplicity $m_i = 1$ and eigenvalues with multiplicity $m_i > 1$.
Hence, we write the output graphon signal $(\bbW,Y)$ as $Y = Y^{(1)} + Y^{(2)}$, with
\begin{align} \label{eqn:spectral_proof21}
Y^{(1)} &= \sum_{i \in \ccalM_{=1}} \bar{h}(\lambda_i) \hat{X}(\lambda_i) \varphi_i \quad \text{and}\\
Y^{(2)} &= \sum_{i \notin \ccalM_{=1}} \bar{h}(\lambda_i) E(\lambda_i)X
\end{align}
where $\ccalM_{=1} = \{i \ |\ m_i = 1\}$ and $E(\lambda_i)X$ is the projection of $(\bbW,X)$ onto the eigenspace associated with $\lambda_i$.

As for the graphon signals induced by the graph filter outputs $(\bbG_n,\bby_n)$, denoted $(\bbW_n,Y_n)$, their spectral decomposition is split between eigenvalues converging individually to different eigenvalues of $\bbW$ and eigenvalues that are part of a set converging to a common eigenvalue of $\bbW$. I.e., $Y_n = Y^{(1)}_n + Y^{(2)}_n$ such that
\begin{align} \label{eqn:spectral_proof22}
Y^{(1)}_n &= \sum_{i\in \ccalM_{=1}} \bar{h}(\lambda^n_i) \hat{X}_n (\lambda^n_i)\varphi_i(T_{\bbW_n}) \quad \text{and} \\
Y^{(2)}_n &= \sum_{i \notin \ccalM_{=1}} \bar{h}(\lambda^n_i) E(\lambda_i^n)X_n
\end{align}
where $\lambda^n_i = \lambda_i(T_{\bbW_n})$ and $\ccalM_{=1} = \{i \ |\ \lambda^n_i \to \lambda_j, m_j = 1\}$. 

From Thm. \ref{thm:non_disc}, $Y^{(1)}_n \to Y^{(1)}$ as $n \to \infty$. It remains to show that $Y^{(2)}_n \to Y^{(2)}$. Using \eqref{eqn:spectral_proof21} and \eqref{eqn:spectral_proof22}, we write
\begin{align} \label{eqn:spectral_proof23}
\begin{split}
\|Y^{(2)} - &Y^{(2)}_n\| = \\
&\left\|\sum_{i \notin \ccalM_{=1}} \bar{h}(\lambda_i) E(\lambda_i)X - \sum_{i \notin \ccalM_{=1}} \bar{h}(\lambda_i^n) E(\lambda^n_i)X_n\right\|\ .
\end{split}
\end{align}
These sums can be further split by defining the set $\ccalC = \{i\ |\ i \notin \ccalM_{=1},\ |\lambda_i| \geq c\}$, where $c = (1-|\bar{h}_{0}|)/{L(2\|X\|\epsilon^{-1}+1)}$, $\bar{h}_{0}= \bar{h}(0)$ and $\epsilon > 0$. Explicitly, we can use the triangle inequality to write
\begin{equation} \label{eqn:spectral_proof24}
\begin{split}
    &\left\|\sum_{i \notin \ccalM_{=1}} \bar{h}(\lambda_i) E(\lambda_i)X - \sum_{i \notin \ccalM_{=1}} \bar{h}(\lambda^n_i) E(\lambda^n_i)X_n\right\|\\
    &\leq \left\|\sum_{i \in \ccalC}  \bar{h}(\lambda_i) E(\lambda_i)X - \sum_{i \in \ccalC} \bar{h}(\lambda^n_i) E(\lambda^n_i)X_n \right\|\ \mbox{\textbf{(i)} } \\ 
    &+ \left\|\sum_{i \notin \ccalC}  \bar{h}(\lambda_i) E(\lambda_i)X - \sum_{i \notin \ccalC} \bar{h}(\lambda^n_i) E(\lambda^n_i)X_n \right\|\ \mbox{\textbf{(ii)}}\ .
    \end{split}
\end{equation}
Since $X_n \to X$ in $L^2$ and the projection operators converge in the induced operator norm [cf. Prop. \ref{subspace_conv}], $E(\lambda^n_i)X_n \to E(\lambda_i)X$ in $L^2$ for $i \in \ccalC$. From this result and from Thm. \ref{thm:transfer_fcn}, we conclude that there exists $n_0$ such that, for all $n > n_0$,
\begin{equation} \label{eqn:spectral_proof25}
\left\|\sum_{i \in \ccalC} \bar{h}(\lambda_i) E(\lambda_i)X - \sum_{i \in \ccalC} \bar{h}(\lambda^n_i) E(\lambda^n_i)X_n \right\| < \epsilon\ 
\end{equation}
which gives a bound for \textbf{(i)}.

For \textbf{(ii)}, we can use the filter's Lipschitz property and the Cauchy-Schwarz and triangle inequalities to write
\begin{equation} \label{eqn:spectral_proof26}
\begin{split}
&\left\|\sum_{i \notin \ccalC} \bar{h}(\lambda_i) E(\lambda_i)X - \sum_{i \notin \ccalC} \bar{h}(\lambda^n_i) E(\lambda^n_i)X_n\right\|\ \\
& \leq \left\|\sum_{i \notin \ccalC} (\bar{h}_{0}+Lc) E(\lambda_i)X - \sum_{i \notin \ccalC} (\bar{h}_{0}-Lc) E(\lambda^n_i)X_n\right\|\ \\
&\leq|\bar{h}_{0}|\left\|\sum_{i \notin \ccalC} \left[E(\lambda_i)X - E(\lambda^n_i)X_n\right]\right\| \\
&+ Lc\left\|\sum_{i \notin \ccalC} E(\lambda_i)X\right\| + Lc\left\|\sum_{i \notin \ccalC} E(\lambda^n_i)X_n\right\|\ .
\end{split}
\end{equation}
Observe that $\sum_{i \notin \ccalC} E(\lambda_i)X$ can be written as 
\begin{equation} \label{eqn:identities2.1}
\sum_{i \notin \ccalC} E(\lambda_i)X = X - \sum_{i \in \ccalC} E(\lambda_i)X
\end{equation}
and $\sum_{i \notin \ccalC} E(\lambda^n_i)X_n$ as
\begin{equation} \label{eqn:identities2.2}
\sum_{i \notin \ccalC} E(\lambda^n_i)X_n = X_n - \sum_{i \in \ccalC} E(\lambda^n_i)X_n\ .
\end{equation}
Thus, since $X_n \to X$ and $E(\lambda^n_i)X_n \to E(\lambda_i)X$ for $i \in \ccalC$, there exists $n_1$ such that, for all $n > n_1$,
\begin{align} \label{eqn:spectral_proof27}
\begin{split}
&\left\|\sum_{i \notin \ccalC} E(\lambda_i)X - E(\lambda^n_i)X_n\right\| \\
&\leq \|X_n-X\| + \left\|\sum_{i \in \ccalC} E(\lambda^n_i)X_n -  E(\lambda_i)X\right\| < \epsilon\ .
\end{split}
\end{align}

As for $\|\sum_{i \notin \ccalC} E(\lambda^n_i)X_n\|$, using the identities in \eqref{eqn:identities2.1}-\eqref{eqn:identities2.2} and the triangle inequality we can write
\begin{align*}
\begin{split}
\left\|\sum_{i \notin \ccalC} E(\lambda^n_i)X_n\right\| \leq \left\|X_n - X\right\| + \left\| \sum_{i \notin \ccalC} E(\lambda_i)X\right\| \\
+ \left\|\sum_{i \in \ccalC}  E(\lambda_i)X- \sum_{i \in \ccalC} E(\lambda^n_i)X_n\right\|\ .
\end{split}
\end{align*}
Hence, since $X_n \to X$ and $E(\lambda^n_i)X_n \to E(\lambda_i)X$ for $i \in \ccalC$, we conclude that
\begin{align} \label{eqn:spectral_proof27.5}
\begin{split}
\left\|\sum_{i \notin \ccalC} \hat{X}_n(\lambda^n_i) \varphi_i(T_{\bbW_n})\right\|
\leq \epsilon + \left\|\sum_{i \notin \ccalC} \hat{X}(\lambda_i)\varphi_i\right\| \mbox{ for } n>n_1
\end{split}
\end{align}
and finally, using the Cauchy-Schwarz and triangle inequalities and substituting \eqref{eqn:spectral_proof27} and \eqref{eqn:spectral_proof27.5} in \eqref{eqn:spectral_proof26}, we get
\begin{equation*}
\begin{split}
&\left\|\sum_{i \notin \ccalC} \bar{h}(\lambda_i) E(\lambda_i)X - \sum_{i \notin \ccalC} \bar{h}(\lambda^n_i) E(\lambda^n_i)X_n\right\|\ \\
&\leq (|\bar{h}_{0}|+Lc) \epsilon + 2Lc \left\|\sum_{i \notin \ccalC} E(\lambda_i)X\right\| \\
&\leq (|\bar{h}_{0}| +Lc) \epsilon + 2Lc\|X\| = \epsilon
\end{split}
\end{equation*}
which proves that $\|Y-Y_n\| < 2\epsilon$ for all $n > \max{\{n_0,n_1\}}$. 
\end{proof}

\end{document}